\def\fullMode{TRUE}
\undefined \newcommand{\FullVer}[1]{}
\newcommand{\ConfVer}[1]{#1} \else \newcommand{\FullVer}[1]{#1}
\newcommand{\ConfVer}[1]{} \fi
\newcommand{\ignore}[1]{{}}
\newcommand{\bequal}{\allowbreak=\allowbreak}
\newcommand{\squishlisttwo}{
   \begin{list}{$\bullet$}
       { \setlength{\itemsep}{0pt}      \setlength{\parsep}{3pt}
       \setlength{\topsep}{3pt}       \setlength{\partopsep}{0pt}
      \setlength{\leftmargin}{1.5em} \setlength{\labelwidth}{1em}
       \setlength{\labelsep}{0.5em} } }
\newcommand{\card}[1]{\ensuremath{|#1|}}
\def\eps{\varepsilon}
\def\floor#1{\lfloor {#1} \rfloor}
\def\set#1{\{ {#1} \}}
\def\ceil#1{\lceil {#1} \rceil}
\def\script#1{\mathcal{#1}}
\def\opt{\textsc{OPT}}
\def\sol{\textsc{SOL}}
\def\mA{\script{A}}
\def\mB{\script{B}}
\def\mC{\script{C}}
\def\mF{\script{F}}
\def\mN{\script{N}}
\def\mP{\script{P}}
\def\mQ{\script{Q}}
\def\mR{\script{R}}
\def\mS{\script{S}}
\def\mT{\script{T}}
\def\mX{\script{X}}
\def\NP{\mathbf{NP}}
\def\child{{\tt child}}
\def\root{\mathrm{root}}
\def\node{\mathrm{node}}
\def\part{{\tt part}}
\def\allparts{{\tt all-parts}}
\def\free{{\tt free}}
\def\final{{\tt final}}
\def\leaf{{\tt leaf}}
\def\nullpart{{\tt nullPart}}
\def\scaled{{\ttscl}}
\def\org{{\tt org}}
\def\pr{{\tt pr}}
\def\sol{\mathrm{sol}}
\def\prob#1{\textup{\text{#1}}\xspace}
\def\Comment#1{\textsl{$\langle\!\langle$#1\/$\rangle\!\rangle$}}
\newcolumntype{P}[1]{>{\centering\arraybackslash}p{#1}}
\newtheorem{theorem}{Theorem}[section]
\newtheorem{example}[theorem]{Example}
\newtheorem{definition}[theorem]{Definition}
\newtheorem{lemma}[theorem]{Lemma}
\newtheorem{problem}[theorem]{Problem}
\title{Cost-Effective Conceptual Design Using Taxonomies}
\author[1]{Ali Vakilian {\it vakilian@mit.edu}}
\author[2]{Yodsawalai Chodpathumwan {\it ychodpa@illinois.edu}}
\author[3]{Arash Termehchy {\it termehca@oregonstate.edu}}
\author[3]{Amir Nayyeri {\it nayyeria@oregonstate.edu}}
\affil[1]{CSAIL, Department of EECS, MIT, Cambridge, MA, USA}
\affil[2]{Department of Computer Science, University of Illinois, Urbana, IL, USA}
\affil[3]{School of EECS, Oregon State University, Corvallis, OR, USA}
\begin{document}

%\maketitle
\date{}
\maketitle

%================================================================================================================
\begin{abstract}
It is known that annotating 
named entities in unstructured and semi-structured 
data sets by their concepts improves the effectiveness
of answering queries over these data sets.
Ideally, one would like to annotate entities of 
all concepts in a given domain in a data set,  
however, it takes substantial time and computational 
resources to do so over a large data set.
As every enterprise has a limited budget of time 
or computational resources, it has to annotate 
a subset of concepts in a given domain 
whose costs of annotation do not exceed the budget. 
We call such a subset of concepts a 
{\it conceptual design} for the annotated data set.
We focus on finding a conceptual design that provides 
the most effective answers to queries over the annotated data set, i.e., a 
{\it cost-effective conceptual design}.
Since, it is often less time-consuming and costly 
to annotate small number of general concepts,
such as {\it person}, than 
a large number of specific concepts,  
such as {\it politician} and {\it artist}, we use 
information on superclass/ subclass relationships 
between concepts in taxonomies to find a cost-effective
conceptual design. 
We quantify the amount by which 
a conceptual design with concepts from a taxonomy 
improves the effectiveness of answering queries over an 
annotated data set.
If the taxonomy is a tree, we prove that the problem is
NP-hard and propose an efficient approximation algorithm
and an exact pseudo-polynomial time algorithm for 
the problem. 
We further prove that if the taxonomy is a directed 
acyclic graph, given some generally accepted hypothesis, 
it is not possible to find any approximation algorithm 
with reasonably small approximation ratio or a 
pseudo-polynomial algorithm for the problem. 
Our empirical study using real-world data sets, 
taxonomies, and query workloads shows that our framework 
effectively quantifies the amount by which 
a conceptual design improves 
the effectiveness of answering queries. 
It also indicates that our 
algorithms are efficient for a 
design-time task with pseudo-polynomial algorithm 
being generally more effective than 
the approximation algorithm.
\end{abstract}
%================================================================================================================
 %===========================================================================================================================================
\section{Introduction}
\label{sec:introduction}
%-------------------------------
\subsection{Concept Annotation}
Unstructured and semi-structured data sets,
such as HTML documents,
contain enormous information about 
named entities like people and 
products \cite{ChiticariuLRR10,webconcept:ragu}.
%webconcept:ragu, 
Users normally explore 
these data sets using keyword queries 
to find information about their entities of interest.
Unfortunately, as keyword queries are generally ambiguous,
query interfaces may not return the relevant answers for these
queries. For example, consider the excerpts of the  
Wikipedia\\ ({\it wikipedia.org}) articles in 
Figure~\ref{fig:wikipedia1}. Assume that a user likes 
to find information about {\it John Adams}, the politician,
over this data set. If she 
submits query $Q_1$:{\it John Adams}, 
the query interface may return the articles about 
{\it John Adams}, the artist, or {\it John Adams}, 
the school, as relevant answers. 
Users can further disambiguate their queries 
by adding appropriate keywords. Nonetheless,
it is not easy to find such keywords \cite{YomTov:Difficulty}.
For instance, if one refines $Q_1$ to {\it John Adams Ohio}, 
the query interface may return the article about 
{\it John Adams}, the high school, as the answer.
It will not help either to add keyword {\it Congressman} 
to $Q_1$ as this keyword does not appear in the article 
about {\it John Adams}, the politician. Formulating 
the appropriate keyword query requires some knowledge 
about the sought after entity and the data that 
most users do not usually possess.

\begin{figure}
\centering
\scriptsize{
\begin{alltt}
<article> 
 John Adams has been a former member of the Ohio House of
 Representatives from 2007 to 2014. ...
</article> 
<article> 
 John Adams is a composer whose music is inspired by nature, ...
</article> 
<article> 
 John Adams is a public high school located on the east side of 
 Cleveland, Ohio, ...
</article>
\end{alltt}
}
\vspace{-0.6cm}
\caption{Wikipedia article excerpts}
\label{fig:wikipedia1}
\end{figure}
\begin{figure}
\scriptsize{
\begin{alltt}
<article> 
 {\bf<politician>} John Adams {\bf</politician>} has been a former member 
 of the {\bf<legislature>} Ohio House of Representatives {\bf</legislature>} 
 from 2007 to 2014. ...
</article> 
<article> 
 {\bf<artist>} John Adams {\bf</artist>} is a composer whose music is inspired 
 by nature, ...
</article>
<article>
 {\bf<school>} John Adams {\bf</school>} is a public high school located on 
 the east side of {\bf<city>}Cleveland{\bf</city>}, {\bf<state>}Ohio{\bf</state>}, ...
</article>
\end{alltt} 
}
\vspace{-0.5cm}
\caption{Annotated Wikipedia article excerpts}
\label{fig:wikipedia2}
\end{figure}

To make querying unstructured and semi-structured data sets 
easier, data management researchers have proposed methods to 
identify the mentions to entities 
in these data sets and annotate them by their concepts  \cite{ChiticariuLRR10,webconcept:ragu}. 
%Doan:SIGMOD:2006,Deshpande:SIGMOD:2013
Figure~\ref{fig:wikipedia2} shows excerpts of 
the annotated Wikipedia articles whose original versions are 
shown in Figure~\ref{fig:wikipedia1}.  
Because entities in an annotated 
data set are disambiguated by their concepts, 
the query interface can answer queries 
over these data sets more effectively.
Moreover, as the list of concepts used to annotate the data sets
are available to users, they can further clarify their 
queries by mentioning the concepts of
entities in these queries. 
For example, a user who would like to retrieve article(s) 
about {\it John Adams}, the politician, over 
the annotated Wikipedia data set in Figure~\ref{fig:wikipedia2}
may mention the concept of {\it politician} in her query.
The set of annotated concepts in a data set is 
the {\it conceptual design} for the data set \cite{Termehchy:SIGMOD:14}.
For example, the conceptual 
design of the data fragment in Figure~\ref{fig:wikipedia2}
is $D_1$ = \{{\it politician, legislature, artist, school, state, city}\}. Using $D_1$, the query interface is 
able to disambiguate all entities in this data fragment.

%----------------------------------------
\subsection{Costs of Concept Annotation}
Ideally, an enterprise would like to annotate 
all relevant concepts from a data set to answer 
all queries effectively. 
Nonetheless, an enterprise has to spend significant 
time, financial and computational resources, 
and manual labor to accurately extract entities of a concept
in a large data set \cite{Anderson:CIDR:2013,IEMaintenance:Gulhane,ChiticariuLRR10,Termehchy:SIGMOD:14,OptimizeSQLText:Jain,Shen:SIGMOD:08,PrioritizationIE:Huang,Resource:Kanani}. 
An enterprise usually has to develop or obtain a 
complex program called {\it concept annotator} to annotate 
entities of a concept from a collection of documents \cite{McCALLUM:ACMQueue:05}. 
Enterprises develop concept annotator using {\it rule-based}
or {\it machine learning} approaches. 
In the rule-based approach, developers have to design and write 
{\it hand-tuned programming rules} 
to identify and annotate entities of a given concept. 
For example, one rule to annotate  
entities of concept {\it person} is that 
they start with a capital letter. 
It is not uncommon for a rule-based concept annotator to 
have thousands of programming rules, which 
takes a great deal of resources 
to design, write, and debug \cite{McCALLUM:ACMQueue:05}.

One may also use machine learning algorithms 
to develop an extractor for a 
concept \cite{McCALLUM:ACMQueue:05}. In this approach,
developers have to find a set of 
{\it relevant features} for the learning algorithm.
Unfortunately, as the specifications of relevant features are 
usually unclear, developers have to find the 
relevant features through a time-consuming and 
labor-intensive process
\cite{Anderson:CIDR:2013,Anderson:PVLDB:2014}.
First, they have to inspect the data set to
find some candidate features. 
For each candidate feature, developers have to write a
program to extract the value(s) of the feature 
from the data set. Finally,
they have to train and test the concept annotator 
using the set of selected features. 
If the concept annotator is not sufficiently 
accurate, developers have to explore the data set 
for new features. As a concept annotator normally uses 
hundreds of features, developers have to iterate these
steps many times to find a set of reasonably 
effective features, where each iteration usually takes considerable amount of time \cite{Anderson:CIDR:2013,Anderson:PVLDB:2014}.
The overheads feature engineering and computation 
have been well recognized in machine learning community
\cite{Weiss:NIPS:2013}.
%Trapeznikov:AISTATS:2013
Moreover, if concept annotators use supervised learning algorithms, 
developers have to collect or create training data, which 
require additional time and manual labor.

It is more resource-intensive to develop annotators for
concepts in specific domains, such as biology,
as it requires expensive communication between domain 
experts and developers.
Current studies indicate that these communications are not often 
successful and developers have to slog through the data set
to find relevant features for concept annotators in these domains \cite{Anderson:CIDR:2013}. 

Unfortunately, the overheads of developing a concept 
annotator are not one-time costs. Because the structure and content 
of underlying data sets evolve over time, 
annotators should be regularly rewritten and repaired  \cite{IEMaintenance:Gulhane}. 
Recent studies show that many concept annotator 
need to be rewritten in average 
about every two months \cite{IEMaintenance:Gulhane}.
Thus, the enterprise often have to repeat 
the resource-intensive steps of developing a concept annotator 
to maintain an up-to-date annotated data set.

After developing concept annotators, the enterprise
executes them over the data set to generate the annotated 
collection. As most concept annotators perform complex 
text analysis, such as deep natural language parsing, 
it may take them days to process a large data set \cite{OptimizeSQLText:Jain,Shen:SIGMOD:08,PrioritizationIE:Huang,Resource:Kanani}. As the content of the data set evolves, 
extractors should be often rerun to create an updated 
annotated collection.

%----------------------------------------
\subsection{Cost-Effective Conceptual Design}

\begin{figure} %[b]{0.45\textwidth} %[h!]
\centering
%\begin{subfigure}[b]{0.45\textwidth}
\begin{tikzpicture}[->,>=stealth',scale=0.84]
\tikzstyle{every node}=[circle,draw,fill=black!50,inner sep=1pt,minimum width=3pt,font=\scriptsize]
\tikzstyle{every label}=[rectangle,draw=none,fill=none,font=\scriptsize]
\node (CR) [label=right:$athlete$] at (0,0) {};
\node (thing) [label=above right:$thing$] at (1.25,3) {};
\node (agent) [label=left:$agent$] at (0,2) {};
\node (work) [label=below left:$place$] at (4,2) {};
\node (artwork) [label=left:{\it populated place}] at (4.5,1) {};
\node (sculpture) [label=left:$state$] at (4.2,0) {};
\node (painting) [label=right:$city$] at (5.2,0) {};
\node (person) [label=left:$person$] at (-1,1) {};
\node (org) [label=left:$organization\ $] at (2,1) {};
\node (JB) [label=below:$politician$] at (-1.25,0) {};
\node (artist) [label=left:$artist$] at (-3,0) {};
\node (BAFTA) [label=below left:{\it school}] at (1.25,-0.3) {};
\node (club) [label=below right:$legislature$] at (2.5,-0.3) {};
\path [->] (person) edge  (CR); %[bend left]
\draw [->] (thing) -- (agent);
\draw [->] (agent) -- (person);
\draw [->] (agent) -- (org);
\draw [->] (person) -- (JB);
\draw [->] (person) -- (artist);
\draw [->] (org) -- (BAFTA); 
\draw [->] (org) -- (club); 
\draw [->] (thing) -- (work); 
\draw [->] (work) -- (artwork);
\draw [->] (artwork) -- (sculpture); 
\draw [->] (artwork) -- (painting); 
\end{tikzpicture}
%\end{subfigure}
\caption{Fragments of DBpedia taxonomy from {\it dbpedia.org}}
%from {\it mappings.dbpedia.org/server/ontology/classes}.}
\label{fig:DBpedia}
\end{figure}
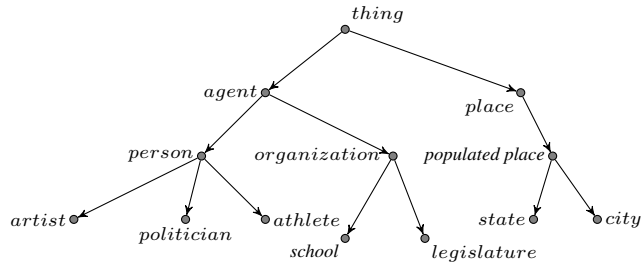
\begin{figure}
\scriptsize{
\begin{alltt}
<article> 
 {\bf<person>} John Adams {\bf</person>} has been a former member 
 of the {\bf<organization>} Ohio House of Representatives {\bf</organization>} 
 from 2007 to 2014. ...
</article> 
<article> 
 {\bf<person>} John Adams {\bf</person>} is a composer whose music is inspired 
 by nature, ...
</article>
<article>
 {\bf<organization>} John Adams {\bf</organization>} is a public high school 
 located on the east side of {\bf<city>}Cleveland{\bf</city>}, {\bf<state>}Ohio{\bf</state>}, 
 ...
</article>
\end{alltt} 
}
\vspace{-0.5cm}
\caption{Wikipedia article excerpts organized in more general concepts}
\label{fig:wikipedia-general}
\end{figure}

Because the available financial or computational 
resources of an enterprise are limited, it may not afford 
to develop, deploy, and maintain annotators for all concepts 
in a domain. Also, many users 
may need an annotated data set quickly 
and cannot wait days for an (updated) annotated collection
\cite{Shen:SIGMOD:08,OptimizeSQLText:Jain}. 
For example, a reporter who pursues some breaking news, 
a stock broker that studies the relevant news and documents
about companies, and
an epidemiologist that follows the 
pattern of a new potential pandemic on the Web
and social media need relevant answers to 
their queries fast.
Hence, the enterprise may afford to annotate only a subset of 
concepts in a domain. 

Concepts in many domains are organized 
in taxonomies \cite{WebDataManage:Abiteboule:11}.
Figure~\ref{fig:DBpedia} depicts 
fragments of DBPedia {\it dbpedia.org} taxonomy, where nodes
are concepts and edges show superclass/ subclass relationships.
An enterprise can use the information in a taxonomy to find 
a conceptual design whose associated costs do not exceed 
its budget and deliver reasonably effective answers for 
queries. For example, assume that 
because an enterprise has to develop in-house annotators for
concepts {\it politician} and {\it artist},   
the total cost of 
annotating concepts in conceptual design 
$D_1$ = 
\{{\it politician, artist, legislature, school, state, city}\}
over original Wikipedia 
collection exceeds its budget.
As some free and reasonably accurate annotators 
are available for concept {\it person}, \\
e.g. {\it nlp.stanford.edu/software/CRF-NER.shtml}, 
the enterprise may annotate concept {\it person}
using smaller amount of resources than concepts 
{\it politician} and {\it artist}. 
Hence, it may afford to annotate concepts 
$D_2$ =  
\{{\it person, organization, state, city}\}
from this collection. 
Thus, the enterprise may choose to annotate the data set 
using $D_2$ instead of $D_1$.
Figure~\ref{fig:wikipedia-general} demonstrates 
the annotated version of the excerpts of Wikipedia articles 
in Figure~\ref{fig:wikipedia1} using conceptual design $D_2$.

Intuitively, a query interface can disambiguate fewer 
queries over the data fragment in Figure~\ref{fig:wikipedia-general} 
than the one in Figure~\ref{fig:wikipedia2}.
For instance, if a users ask for 
information about {\it John Adams}, the politician,
over Figure~\ref{fig:wikipedia-general}, 
the query interface may return the document that 
contains information about {\it John Adams}, the artist, 
as an answer as both entities are annotated as {\it person}. 
Nonetheless, the annotated data set in Figure~\ref{fig:wikipedia-general}
can still help the query interface to disambiguate some queries.
For example, the query interface can recognize the 
occurrence of entity {\it John Adams}, the school, 
from the people named {\it John Adams} in 
Figure~\ref{fig:wikipedia-general}. Thus, it can 
answer queries about the school entity over this data fragment 
effectively. 
Clearly, an enterprise would like to select a 
conceptual design whose required time and/or resources 
for extraction do not exceed its budget 
and most improves the effectiveness of answering queries.
We call such a conceptual design for an annotated data 
set, a {\it cost-effective conceptual design} for the data set.

%----------------------------------------
\subsection{Our Contributions}
Currently, concept annotation experts use their intuitions 
to discover cost-effective conceptual designs 
from taxonomies. Because most taxonomies
contain hundreds of concepts \cite{Suchanek:YAGO}, 
this approach does not scale for real-world applications.
%, such as {\it Yago}
%, genetics ({\it geneontology.org}), 
%or plant biology ({\it plantontology.org}), 
In this paper, we introduce and formalize the 
problem of finding cost-effective conceptual designs 
from taxonomies and
propose algorithms to solve the problem in general and interesting
special cases.
To this end, we make the following contributions.
\squishlisttwo
\item We develop a theoretical framework 
that quantifies the amount of improvement in 
the effectiveness of answering queries by annotating 
a subset of concepts from a taxonomy. 
Our framework takes into account possibility of error in concept
annotation.

\item We introduce and formally define the problem 
of cost-effective
conceptual design over tree-shaped taxonomies and show
it to be NP-hard. 

\item We propose an efficient approximation algorithm, called
the level-wise algorithm, and prove that it has
a bounded worst-case approximation ratio in an interesting 
special case of the problem. We also propose an 
exact algorithm for the problem with pseudo polynomial 
running time.

\item We further define the problem over taxonomies that 
are directed acyclic graphs and 
prove that given a generally accepted hypothesis, 
there is no approximation algorithm 
with reasonably small approximation ratio and
no algorithm with pseudo polynomial running time for this problem.
We show that these results hold even for some restricted cases
of the problem, such as the case where all concepts are equally costly.

\item We evaluate the accuracy of our formal framework
using a large scale real-world data set, Wikipedia, 
real-world taxonomies \cite{Suchanek:YAGO}, 
and a sample of a real-world query workload.
Our results indicate that the formal framework 
accurately measures the amount of improvement 
in the effectiveness of answering queries 
using a subset of concepts from a taxonomy.

\item We perform extensive empirical studies 
to evaluate the accuracy and efficiency of the proposed 
algorithms over real-world data sets, taxonomies, and query workload. 
Our results indicate that the pseudo polynomial algorithm 
is generally able to deliver more effective schemas 
that the level-wise algorithm in reasonable amounts of time. 
They further show that level-wise algorithm provides more effective 
conceptual designs than the pseudo polynomial algorithm if the distribution of concepts in queries is skewed.
\end{list}

The paper is organized as follows. Section~\ref{sec:background} reviews the related work. 
Section~\ref{sec:cost-effective-design} 
formalizes the problem of cost-effective 
conceptual design over a tree-shaped taxonomy
and show that it is NP-hard. 
Section~\ref{sec:approximation-algorithms} describes an efficient
approximation algorithm with 
bounded approximation ratio in an interesting special case of the problem. 
Section~\ref{sec:pseudo-polynomial}
proposes a pseudo-polynomial algorithm for the problem in 
general case.
Section~\ref{sec:dag-taxonomy} defines the problem over 
taxonomies that are directed acyclic graphs and 
provides interesting hardness results for this setting. 
Section~\ref{sec:conclusion} concludes the paper.
The proofs for the theorems of the paper are in the appendix.
\section{Related Work}
\label{sec:background}
Researchers have noticed the overheads and costs
of curating and organizing large data sets \cite{Dong:2012:LMS:2448936.2448938,Resource:Kanani,OptimizeSQLText:Jain}. 
For example, some researchers have recently considered 
the problem of selecting data sources for fusion such that 
the marginal cost of acquiring a new data source 
does not exceed its marginal gain, where cost and gain are 
measured using the same metric, e.g., US dollars 
\cite{Dong:2012:LMS:2448936.2448938}. 
%The authors assume that the gain achieved from fusing
%a data source depends on the coverage and accuracy 
%of its tuples. 
%Other researchers have considered the problem of 
%filling missing values in a relational database by 
%extracting information from the Web
%given that the resources available for
%extracting information are limited \cite{Resource:Kanani}. 
%They identify the costs associated with 
%information extraction as the computational power 
%and network bandwidth needed 
%to issue queries to find 
%and download relevant documents 
%and the time to execute information extraction programs 
%over documents. 
Our work extends this line of research by 
finding cost-effective designs over unstructured 
or semi-structured data sets, which help users query
explore these data sets more easily. 
We also use a different model, where the cost and 
benefit of annotating concepts can be measured 
in different units.

There is a large body of work on building 
large-scale data management systems for 
annotating and extracting entities and relationships 
from unstructured and semi-structured data sources 
 \cite{ChiticariuLRR10,webconcept:ragu}.
 %Doan:SIGMOD:2006,
In particular, researchers have proposed several techniques 
to optimize the running time, 
required computational power, and/or storage consumption 
of concept annotation programs by processing only 
a subset of the underlying collection that is 
more likely to contain mentions to entities 
of a given concept
\cite{SearchCrawl:Ipeirotis,OptimizeSQLText:Jain,PrioritizationIE:Huang,Resource:Kanani}.
%EfficientExtarct:Agichtein,SQoUT:Jain 
Our work complements these efforts by 
finding a cost-effective set of concepts for annotation 
in the design phase. 
Further, our framework can handle 
other types of costs in creating and maintaining 
annotated data set other than computational overheads.

Researchers have examined the problem of 
selecting a cost effective subset of concepts 
from a set of concepts for annotation \cite{Termehchy:SIGMOD:14}.
Concepts in many real-world domains, however, are maintained
in taxonomies rather than unorganized sets.
We build on this line of work by 
considering the superclass/ subclass relationships 
between concepts in taxonomies to find 
cost-effective designs. 
Because taxonomies have richer structures 
than sets of concepts, they present new 
opportunities for finding cost-effective designs. 
For instance, an enterprise may not have 
sufficient budget to annotate a concept $C$ in 
a dataset, but have adequate resources to 
annotate occurrences of a superclass of $C$, such as $D$, 
in the dataset. Hence, 
to answer queries about entities of $C$, 
the query interface may examine only 
the documents that contain mentions to the entities of $D$.
As the query interface does not need to 
consider all documents in the data set,
it is more likely that it returns 
relevant answers for queries about $C$.
Because the algorithms proposed in \cite{Termehchy:SIGMOD:14} 
do not consider superclass/ subclass relationships between 
concepts, one cannot use them to find cos-effective
designs over taxonomies. 
Moreover, as we prove in this paper, it is more 
challenging and harder to find cost-effective designs
over taxonomies than over sets of concepts. 

Researchers have proposed methods to 
semi-automatically construct or expand taxonomies by 
discovering new concepts from large text collections \cite{Clarka:IPM:2012}.
We, however, focus on the problem of annotating instances 
of the concepts in a given taxonomy over an unstructured
or semi-structured data set.

Conceptual design has been an important problem 
in data management from its early days \cite{DBBook}. 
Generally, conceptual designs have been created manually 
by experts who identify the relevant concepts in 
a domain of interest. 
Because an enterprise may not afford to 
annotate the instances of all relevant concepts
in a domain, this approach cannot be applied to 
large-scale concept annotation. As a matter of fact, 
our empirical studies indicate that 
adapting this approach does not generally return 
cost-effective conceptual designs for annotation.
Researchers have studied the problem of 
predicting the costs of developing or maintaining 
pieces of software \cite{Boehm:SoftwareCost}.
%,Gao:PVLDB:2015 
Our work is orthogonal to the methods used for estimating
the costs of creating and maintaining concept 
annotation modules.

\section{Cost-Effective Conceptual Design}
\label{sec:cost-effective-design}
%--------------------------------------------------------------

\subsection{Basic Definitions}
Similar to previous works, we do not rigorously define the notion 
of named entity \cite{WebDataManage:Abiteboule:11}. %Probase:Wu:12,webconcept:ragu,
We define a named entity (entity for short) as a unique name in some (possibly infinite) domain.
A concept is a set of entities, i.e., its instances. 
Some examples of concepts are {\it person} and {\it country}. An entity of concept 
{\it person} is {\it Albert Einstein} and an entity of concept {\it country} is {\it Jordan}.
Concept $C$ is a {\it subclass} of concept 
$D$ iff we have $C \subset D$.
In this case, we call $D$ a {\it superclass} of $C$. 
For example, {\it person} is a superclass of {\it scientist}.
If an entity belongs to a concept C, it will belong to all 
its superclass's.

A taxonomy organizes concepts in a domain of interest \cite{WebDataManage:Abiteboule:11}.
We first investigate the properties of tree-shaped 
taxonomies and later in Section~\ref{sec:dag-taxonomy}
we will explore the taxonomies that are 
directed acyclic graphs.
Formally, we define {\it taxonomy} $\mX = $ $(R, \mC, \mR)$  
as a rooted tree, with {\it root concept} $R$, vertex set $\mC$ and edge set $\mR$.
$\mC$ is a finite set of concepts.  
For $C,D\in \mC$ we have $(C,D) \in \mR$ iff 
$D$ is a subclass of $C$.
Every concept in $\mC$ that is not a superclass of any other 
concept in $\mC$ is a {\it leaf concept}.
The leaf concepts are leaf nodes in taxonomy $\mX$. 
For instance, concepts {\it athlete} and {\it artist} are leaf concepts in Figure~\ref{fig:DBpedia}.
Let $ch(C)$ denote the children of concept $C$.
For the sake of simplicity, we assume that 
$\cup_{D \in ch(C)} D$ $= C$ for all concepts $C$ in a 
taxonomy.

Each data set is a set of documents. 
Data set $DS$ is in the domain of taxonomy $\mX$ iff 
some entities of concepts in $\mX$ appear in
some documents in $DS$.
For instance, the set of documents in 
Figure~\ref{fig:wikipedia1} are in the domain of
the taxonomy shown in Figure~\ref{fig:DBpedia}.
An entity in $\mX$ may appear in
several documents in a data set.  
For brevity, we refer to the occurrences of entities 
of a concept in a data set as the occurrences of 
the concept in the data set.

A query $q$ over $DS$ is a pair $(C, T)$, 
where $C \in \mC$ and $T$ is a set of terms. 
Some example queries are 
{\it (person, \{Michael Jordan\})} or  {\it (location, \{Jordan\})}.  
This type of queries has been widely used to 
search and explore annotated data sets
\cite{OptimizingAnnotation:Chakrabarti,Chu-Carroll06,KeywordStructuredQuery:Pound}. 
%{\bf Query $(C, T)$ over data set $DS$ is answered 
%by a function that maps $T$ to a ranked list of documents 
%such that each document in the list contains an occurrence of 
%an entity in $C$. One may use any reasonable 
%ranking function to answer the query \cite{IRBook}.}
Empirical studies on real world query logs indicate that the majority of 
entity centric queries refer to a single entity \cite{MoreSenses:Sanderson}. 
In this paper, we consider queries that refer to a single entity.
Considering more complex queries that seek information 
about relationships between several entities requires
more sophisticated models and algorithms and more space than a paper.
It is also an interesting topic for future work.  

\subsection{Conceptual Design}
{\it Conceptual design} $\mS$ over taxonomy $\mX = $ $(R, \mC, \mR)$ is a 
non-empty subset of $\mC - \set{R}$. For brevity, in the rest of the paper,  
we refer to conceptual design as {\it design}.
A design divides the set of leaf nodes in $\mC$ into some partitions,
which are defined as follows.

\begin{definition}
\label{def:partition}
Let $\mS$ be a design over taxonomy $\mX =$ $(\mC, \mR)$, and let $C\in\mS$.
We define the partition of $C$ as a subset of leaf nodes of 
$\mC$ with the following property.
A leaf node $D$ is in the partition of $C$ iff $D=C$ or $C$ is the lowest ancestor of $D$ in $\mS$.
\end{definition}
\noindent
Let function $\part$ map each concept into its partition.  

\begin{example} 
Consider the taxonomy described in Figure~\ref{fig:annotation}. Let
design $\mS$ be $\{agent, person\}$. 
The partitions of $\mS$ are $\{artist,$ $politician,$ $athlete\}$ 
and $\set{school, legislature}$. 
Also, $\part(\text{person})$ = $\{artist,$ $politician,$ $athlete\}$  
and $\part(\text{agent})$ = $\{school,$
$legislature\}$.
\end{example}  
For each design $\mS$, the set of \emph{leaf concepts} that do not belong to any partition are 
called {\it free concepts} and denoted as $\free(\mS)$.
These concepts neither belong to $\mS$ nor are descendant 
of a concept in $\mS$.
\begin{figure}[htb]
  \centering
    \includegraphics[height=1.75in,width=2.25in]{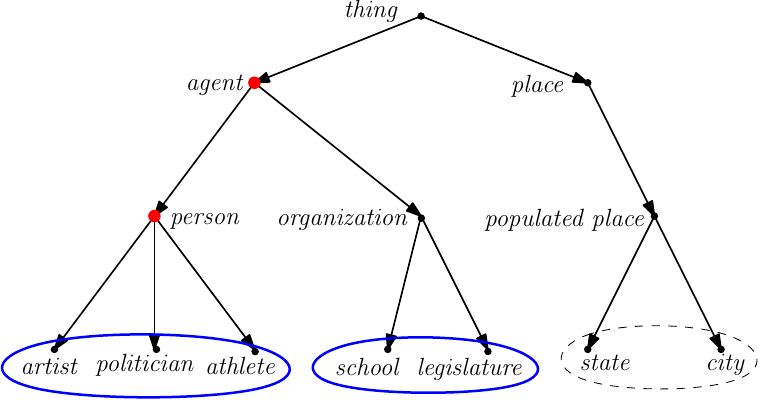}
      \caption{The concepts in red, {\it agent} and {\it person}, denote the design. The blue curves denote the partitions created after annotating the design and the dashed curved shows the free concepts of the selected design.}
  \label{fig:annotation}
\end{figure}
\begin{comment}
content...

\begin{figure}
\scriptsize{
\begin{alltt}
<article> 
 {\bf<person>} John Adams {\bf</person>} has been a former member 
 of the {\bf<agent>} Ohio House of Representatives {\bf</agent>} 
 from 2007 to 2014. ...
</article> 
<article> 
 {\bf<person>} John Adams {\bf</person>} is a composer whose music is inspired 
 by nature, ...
</article>
\end{alltt} 
}
\vspace{-0.5cm}
\caption{Wikipedia article excerpts annotated by design \{{\it person, organization}\}}
\label{fig:wikipedia-general-1}
\end{figure}
\end{comment}

\begin{example} 
Again consider design $\{person,$ $agent\}$ 
over the taxonomy described in Figure~\ref{fig:annotation}. 
The free concepts of $\mS$ are $\{state,$ $city\}$ as 
they are not in any partition of $\mS$. 
\end{example} 

Let $DS$ be a data set in the domain of taxonomy
$\mX =$$(R, \mC, \mR)$ and $\mS$ be a design over $\mX$.
$\mS$ is the design of data set $DS$ iff for all concept $C\in \mS$,  
all occurrences of concepts in the partition of $C$ are annotated by $C$.
In this case, we say $DS$ is an {\it instance} of $\mS$.
%For instance, the data set shown in Figure~\ref{fig:wikipedia2}
%is an instance of the design \{{\it politician, artist, city}\}
%from taxonomy depicted in Figure~\ref{fig:DBpedia}. 
%We state this formally as follows.
%\begin{definition}
%\label{def:instance}
%Let $DS$ be a data set in the domain of taxonomy
%$\mX =$$(R, \mC, \mR)$, $\mS$ be a conceptual design over $\mX$, 
%and $C \in \mC$ be a concept of partition $part(C)$ in 
%$\mS$. If every mention to each entity of 
%$C$ in $DS$ are organized under the root of $part(C)$,  
%$DS$ is an {\it instance} of $\mS$.
%\end{definition}
%\noindent
For example, consider the design 
$\mT = $ \{{\it person, organization}\}
over the taxonomy in Figure~\ref{fig:DBpedia}.
The data set in Figure~\ref{fig:wikipedia-general}
is an instance of $\mT$  as all instances of concepts 
{\it athlete}, {\it artist} and {\it politician}, that belong 
to the partition of {\it person}, are annotated by 
{\it person} and all instances of concepts
{\it school} and {\it legislature}, that constitute 
the partition of {\it organization},
are annotated by {\it organization} in the data set.

%----------------------------------------------
\subsection{Design Queriability}
\label{sec:design-queriability}
Let $\mQ$ be a set of queries over data set $DS$. 
Given design $\mS$ over taxonomy 
$\mX =$$(R, \mC, \mR)$, we would like to measure the
degree by which $\mS$ improves the effectiveness of 
answering queries in $\mQ$ over $DS$. 
The value of this function should be larger for the 
designs that help the query interface to answer a 
larger number of queries in $\mQ$ more effectively.
As most entity-centric information needs are precision-oriented \cite{Classification-Enhanced:Bennett,Chu-Carroll06},  
we use the standard metric of 
{\it precision at $k$} ($p@k$ for short)
to measure the effectiveness of answering 
queries over structured data sets \cite{IRBook}.
The value of $p@k$ is
the fraction of relevant answers in the 
top $k$ returned answers for the query.
We average the values of $p@k$ 
over queries in $\mQ$ to measure the amount of
effectiveness in answering queries in $\mQ$.
The problem of design in order to maximize 
other objective functions, such as recall, 
is an interesting subject for future work.

Let $Q:(C,T)$ be a query in $\mQ$ such that
$C$ belongs to the partition of $P \in \mS$.
The query interface may consider only the documents that 
contain information about entities annotated by 
$P$ to answer $Q$. 
For instance, consider query $Q_1 =$ $(politician,$ $John Adams)$
over data set fragment in Figure~\ref{fig:wikipedia-general}
whose design is \{{\it person, organization}\}. The query interface
may examine only the entities annotated by {\it person}
in this data set to answer $Q_1$.
Thus, the query interface will avoid 
non-relevant results that otherwise may have been placed in 
the top $k$ answers for $Q$. 
It may further rank them according to 
its ranking function, such as the traditional TF-IDF 
scoring methods \cite{IRBook}. 
Our model is orthogonal to the method used to rank the 
candidate answers for the query. 

The query interface still has to examine all documents that 
contain some mentions to the entities annotated by
concept $P$ to answer $Q:(C,T)$. 
Nevertheless, only a fraction
of these documents may contain 
information about entities of $C$. For instance,
to answer query $(politician, John Adams)$ 
over the data set fragment in 
Figure~\ref{fig:wikipedia-general}, the query interface has to examine all documents that contain instances 
of concept {\it person}.
Some documents in this set have matching 
entities form concepts other 
than {\it politician}, such as John Adams, the artist.
%There is not generally sufficient meta-data 
%information in the data set for the query interface to pinpoint 
%the occurrences of the entities in $C$.
%Therefore, it may view the concept name $C$ and the keywords in $T$ 
%as a bag of words and use some document ranking function 
%to return the top $k$ answers for $Q$. 
We like to estimate the fraction of the results for $Q:(C,T)$ 
that contains a matching entity in concept $C$. Given all other
conditions are the same, the larger this fraction is, the more 
likely it is that the query interface delivers more relevant answers,
and therefore, a larger value of $p@k$ for $Q$.

Let $d_{DS}(C)$ denote the fraction of 
documents that contain entities of concept $C$ in data set $DS$. 
We call $d_{DS}(C)$ the {\it frequency} of $C$ over $DS$. 
When $DS$ is clear from the context, 
we denote the frequency of $C$ as $d(C)$. 
We want to compute the fraction of the returned 
answers for query $Q:(C,T)$ that 
contain a matching instance of concept $C$. 
These entities are annotated by concept 
$P$, such that $C$ is in the partition of $p$.
Let $d(P)$ be the total frequency of leaf concepts
in the partition of $P$. The fraction of these documents
 that contain information about $C$ is $\frac{d(C)}{d(P)}$.
%Generally, the value of $d(p)$ may not be 
%equal to the sum of frequencies of final concepts in $C$ as these
%concepts may overlap. For instance, a  {\it director} 
%may also be an {\it actor}.
The larger this fraction is, the 
more likely it is that query interface returns more
documents about entities of concept $C$
for query $Q:(C,T)$. Thus, it is more
likely for query interface to return relevant answers
for $Q$ and improve its $p@k$.
For instance, assume that 
the mentions to the entities of concept {\it artist} 
appear more frequently 
in data set $DS$ than the ones of concept {\it politician}. 
Also assume that we only annotate {\it person} from $DS$. 
Given query $(politician, John Adams)$ 
it is more likely for articles about John Adams, the artist,
to appear in the top-ranked answers than about John Adams, the politician. 
%For example, since the full Wikipedia collection
%contains more articles about {\it Michael Jordan}, the athlete, 
%than {\it Michael Jordan}, the scientist, 
%it is very likely that most articles in top-ranked results 
%contain information about {\it Michael Jordan}, the athlete.
%Hence, extracting the superclass of {\it athlete}, {\it person},
%will deliver reasonably effective results for queries about 
%the entities of {\it athlete}. 
%If extracting {\it person} is less costly than extracting
%{\it athlete}, the designs that contain {\it person} 
%will include many other concepts and may deliver 
%more effective results for a larger fraction of queries 
%than the designs that contain {\it athlete}.

We call the fraction of queries in 
$\mQ$ whose concept is $C$ the {\it popularity} of $C$ in 
$\mQ$. Let $u_{\mQ}$ be the function that maps concept 
$C$ to its popularity in $\mQ$. 
When $\mQ$ is clear from the context, we simply use $u$ instead of $u_{\mQ}$. The degree of improvement in value of $p@k$
in answering queries of concept $C$ over $DS$ 
is proportional to $\frac{u(C)\ d(C)}{d(p)}$.
Hence, the amount of the contribution of queries of the concepts 
in partition of $P$ to the value of $p@k$ will be:
$$\sum_{C \in \part(P)} \frac{u(C)\ d(C)}{d(P)}.$$ 
Given all other conditions are the same, the larger this value 
is, the more likely it is that the query interface will achieve a larger $p@k$ value over queries in $\mQ$.

Annotators, however, may make mistakes in identifying 
the correct concepts of entities in a collection \cite{Chu-Carroll06}. 
An annotator may recognize some appearances of entities 
from concepts that are not $P$ as the 
occurrences of entities in $P$.
For instance, the annotator of concept {\it person}
may identify {\it Lincoln}, the movie, as a person. 
The {\it accuracy} of annotating concept $P$ over $DS$ is
the number of correct annotations of $P$ divided by 
the number of all annotations of $P$ in $DS$. 
We denote the accuracy of annotating concept $P$
over $DS$ as $\pr_{DS}(P)$. When $DS$ is clear from the 
context, we show $\pr_{DS}(P)$ as $\pr(P)$. 
%Given query $Q:(C,T)$ and $C \in \mS$, 
%it is reasonable to assume that 
%$1-\pr(C)$ of the top $k$ results may contain 
%information about entities 
%that do not belong to $C$. 
Hence, we refine our estimate to the following.
\begin{equation}
\sum_{C \in \part(P)}\frac{u(C)\ d(C)}{d(P)}\ \pr(P).
\label{eq:queriability2}
\end{equation}
\noindent  
%where $\pr(p)$ is the accuracy of extracting the root concept 
%of partition $p$.

Next, we compute the amount of improvement 
that $\mS$ provides for queries whose concepts do not belong
to any partition, i.e., free concepts.
If concept $C$ is a free concept with regard to design
$\mS$, the query interface has to examine all documents in the 
collection to answer $Q:(C,T)$. 
Thus, if $C$ is a free concept, 
the fraction of returned answers 
for $Q$ that contains a matching instance of 
concepts $C$ is $d(C)$.
Using equation~\ref{eq:queriability2}, we formally
define the function that estimates the likelihood of 
improvement for the
value of $p@k$ for all queries in a query workload 
over a data set annotated by design $\mS$. 
\begin{definition}
\label{def:queriability}
The queriability of design $\mS$ from taxonomy $\mX$ 
over data set $DS$ is
\begin{equation}
QU(\mS) = \sum_{P \in \mS} {\sum_{C\in \part(P)} \frac{u(C)\ d(C)\ \pr(P)}{d(P)}}\ + {\sum_{C\in \free(\mS)} u(C)d(C)}.
\label{eq:queriability}
\end{equation}
\noindent  
%where $\allparts(\mS)$ denotes the collection of all partitions 
%created by $\mS$ in $\mX$. 
\end{definition}

Similar to other optimization problems in data management, such
as query optimization \cite{DBBook}, 
the complete information about the
parameters of the objective function, i.e. 
frequencies and popularities of concepts, 
may not be available at the design-time.
Nevertheless, our empirical results in 
Section~\ref{sec:experiment} indicate that one 
can effectively estimate these parameters 
using a small sample of the full data set. 
For instance, we show that 
the frequencies of concepts over a collection of more than
a million documents can be effectively estimated using a 
sample of about three hundred documents.

\begin{comment}
Similar to the formula for annotation benefit in equation~\ref{eq:AnnotationBenefit}, 
the first term of the annotation benefit in equation~\ref{eq:AnnotationBenefitNoCons}
reflects the group of queries for which the query interface 
returns only the candidate answers with instances matching to 
their concepts.
The second term of the annotation benefit in 
equation~\ref{eq:AnnotationBenefitNoCons}, however, is different
from the second term in equation~\ref{eq:AnnotationBenefit}
and represents the impact of 
the frequency of a concept that is not in $\mS$ on the likelihood 
of the precisions of its queries.

 Our experimental results 
 in Section~\ref{opt:sec:sub:objfunc} indicate that in spite of this simplification, 
 our objective function effectively captures the degree of improvement  
 delivered by a design over a collection.
 This portion of answers will stay in the list of results for 
 $Q$ after the query interface eliminates all candidate answers 
 with matching instances from concepts in $\mS$.
 Hence, the fraction of the candidate answers that contain a 
 matching instance of concept $C$ in the list of answers for 
 a query in $\mQ$ is $\frac{d(C)}{\sum_{E \notin \mS} d(E)}$.
 \end{comment}

%-------------------------------------------------------------------
\subsection{Cost-Effective Design Problem}
Given taxonomy $\mX=$ $(\mC, \mR)$ and data set $DS$
in domain of $\mX$, the 
function $w_{DS}: \mC \rightarrow \mathbb{R}^+$, maps each concept 
$C$ to a real number that reflects the amount of 
resources used to annotate mentions of entities in 
$C$ from data set $DS$. When the data set is clear from the 
context, we simply denote the cost function as $w$. 
The enterprise may predict the costs of 
development and maintenance of annotation programs using 
available methods for predicting costs of software 
development and maintenance \cite{Boehm:SoftwareCost}.
If the cost is running time, the enterprise may use
current methods of estimating the 
execution time of concept annotators \cite{OptimizeSQLText:Jain}.
If there is not sufficient information to estimate the costs for concepts, 
the enterprise may assume that all concepts are equally costly.
We will show in Sections~\ref{sec:approximation-algorithms},
\ref{sec:pseudo-polynomial}, and \ref{sec:dag-taxonomy} that 
finding cost-effective designs is still challenging 
in the cases where concepts are equally costly. 

Similar to previous works on cost-effective concept 
annotation \cite{Termehchy:SIGMOD:14}, 
we assume that annotating certain concepts 
does not affect the cost and accuracies of other concepts.
The reasons behind this assumption are two-fold.
First, it usually takes significant amount of resources to
develop, execute, and maintain a concept annotator
even after pairing with other annotators. 
For instance, developers have to discover 
a large number of distinct features for each concept
to accurately annotate them.
Second, it may require exponential 
number of cost values to express the relationships 
between costs of concepts in a taxonomy,
which is not realistic and makes the problem 
extremely complex to express.
%Further, it is challenging for an enterprise 
%to estimate the cost of annotating each concept given 
%that some other concepts in the taxonomy 
%have been annotated.
However, finding a simplified framework that 
can effectively express the problem with 
relationships between the costs of annotating different concepts  
is an interesting subject for future work. 
%For instance, one may consider the problem
%subject to \emph{submodular} cost functions. 
%In some cases, however, extracting certain concepts may 
%reduce the cost of extracting other concepts 
%in the taxonomy. For example, if one accurately extract {\it person},
%from a data set, it may be easier to extract {\it athlete} 
%from the data set as one can focus on a subset of entities
%in the data set. To address these cases, we will
%propose and study a version of the problem that 
%creates a trade-off between considering the impact of extracting 
%certain concepts on extracting others and 
%solvability of finding cost-effective conceptual designs
%in Section~\ref{sec:approximation-algorithms}.

The cost of annotating a data set under
design $\mS$ is the sum of the costs the concepts in $\mS$.
Budget $B$ is a positive real number that 
represents the amount of available 
resources for organizing the data set.
Next, we formally define the problem of Cost-Effective 
Conceptual Design ({\it CECD} for short) as follows.
\begin{problem}
Given taxonomy $\mX$, data set $DS$ in the domain of $\mX$, 
and budget $B$, we like to find design $\mS$ over $\mX$
such that $\sum_{C \in \mS} w(C) \leq B$ and 
$\mS$ delivers the maximum queriability over $\mX$.
\end{problem}
\noindent
\iffalse
Extracting concepts may require different types of 
resources, such as time and manual labor, 
that cannot be aggregated into a single cost metric.
In these cases, the enterprise may solve a different 
instance of CECD problem for each type of such resources. 
\fi
Unfortunately, the CECD problem cannot be solved in 
polynomial time in terms of input size unless $\mathbf{P} = \NP$.  

\begin{theorem}
\label{theorem:NPHard}
The problem of CECD is $\NP$-hard.
\end{theorem}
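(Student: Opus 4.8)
The plan is to prove NP-hardness by a reduction from a known NP-hard problem. Looking at the structure of the CECD objective function in Definition~\ref{def:queriability}, the natural candidate is \textbf{Knapsack} (or its decision version, Subset-Sum), since CECD is itself a budgeted maximization problem where each selected concept carries a cost $w(C)$ constrained by the budget $B$. My first step would be to construct, from an arbitrary Knapsack instance with items of weights $w_i$ and values $v_i$ and capacity $B$, a tree-shaped taxonomy $\mX$ together with a data set $DS$ and query workload $\mQ$ so that choosing concepts into the design $\mS$ corresponds to choosing items into the knapsack.

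\medskip

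The key technical step is to design the taxonomy so that the queriability function decouples into independent per-item contributions. The obstacle is that $QU(\mS)$ in Equation~\ref{eq:queriability} is \emph{not} additive across chosen concepts in general: the term $\sum_{C\in\part(P)} u(C)d(C)\pr(P)/d(P)$ depends on the \emph{partition} of $P$, which in turn depends on which \emph{other} concepts lie between $P$ and the leaves. To neutralize this entanglement I would use a ``star-like'' or shallow tree: create a root $R$ with a set of independent gadget-subtrees hanging off it, one gadget per Knapsack item $i$. Each gadget should be built so that (i) annotating the designated concept $C_i$ in gadget $i$ contributes a gain proportional to $v_i$ at cost exactly $w_i$, and (ii) \emph{not} annotating $C_i$ leaves its leaves as free concepts contributing a fixed baseline independent of the choice. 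The partition structure must be arranged so that concepts in distinct gadgets never share a partition, which holds automatically when the gadgets are disjoint subtrees under the root and we never select the root. This makes $QU(\mS)=\mathrm{const}+\sum_{i\in\mS} (\text{gain}_i)$, with $\mathrm{gain}_i$ tuned to $v_i$.

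\medskip

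With the gadgets in place, I would set the frequencies $d(\cdot)$, popularities $u(\cdot)$, and accuracies $\pr(\cdot)$ so that selecting $C_i$ raises its gadget's contribution by an amount encoding $v_i$, while the cost $w(C_i)=w_i$ carries directly from the item weight. Then a design $\mS$ meeting budget $B$ achieves queriability at least $\mathrm{const}+V$ if and only if the corresponding item set has total value at least $V$ and total weight at most $B$. This establishes a polynomial-time many-one reduction from the decision version of Knapsack to the decision version of CECD, yielding NP-hardness.

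\medskip

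The main obstacle I anticipate is the \emph{normalization by $d(P)$} inside each partition term: because the contribution of $\part(P)$ is divided by the total frequency $d(P)$ of its partition, I must ensure the gadget's geometry keeps $d(P)$ fixed (or exactly accounted for) regardless of how the choices in \emph{other} gadgets are made, so that the per-item gains remain genuinely independent and additive. I would handle this by making each gadget self-contained --- its own leaves, its own designated annotatable concept, and a parent whose partition never reaches outside the gadget --- and by choosing the numerical parameters (frequencies summing appropriately within each gadget) so that the free-concept baseline and the annotated contribution differ by precisely the scaled value $v_i$. Verifying this arithmetic, and confirming that the root is never beneficial to select (so that partitions cannot merge across gadgets), is the delicate part of the argument.
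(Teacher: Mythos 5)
Your approach is correct, but it takes a different route from the paper's. The paper's proof is a two-line reduction by specialization: it observes that when the taxonomy is a star (a root $R$ with all other concepts as leaves), CECD coincides exactly with the problem of selecting cost-effective concepts from an unorganized \emph{set} of concepts, whose $\NP$-hardness was already established in \cite{Termehchy:SIGMOD:14}, so hardness of CECD follows immediately. You instead re-derive that hardness from scratch by a direct reduction from Knapsack, using essentially the same star-shaped embedding. Your plan is sound, and in fact simpler than you anticipate: in a pure star every selected leaf $C_i$ forms a singleton partition, so the $d(P)$ normalization you worry about cancels identically --- its contribution is $u(C_i)\pr(C_i)$ --- while every unselected leaf is free and contributes $u(C_i)d(C_i)$; hence
\begin{equation*}
QU(\mS) \;=\; \sum_{i} u(C_i)\,d(C_i) \;+\; \sum_{C_i\in\mS} u(C_i)\bigl(\pr(C_i)-d(C_i)\bigr),
\end{equation*}
which is automatically additive, and no multi-node gadgets are needed. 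Your concern about the root merging partitions is also moot: by definition a design is a subset of $\mC-\set{R}$, so the root can never be selected. The one point you should make explicit is that the reduction fixes the popularities, frequencies, and accuracies as numeric (rational) inputs, e.g.\ $\pr(C_i)=1$, $d(C_i)=1/2$, $u(C_i)\propto v_i$; this is consistent with how the paper itself treats these parameters (its pseudo-polynomial algorithm takes them as integers), but if one insisted on materializing an explicit document collection and query workload realizing popularities proportional to binary-encoded Knapsack values, the instance would not be polynomial-size. In terms of trade-offs, the paper's argument is shorter and leverages prior work, while yours is self-contained and makes concrete \emph{why} even the flat special case is hard (additive decoupling reduces it to Knapsack).
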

\begin{proof}
The problem of CECD can be reduced to the problem
of choosing cost-effective concepts from a set of concepts by
creating a taxonomy $\mX = (R, \mC, \mR)$ where all
nodes except for $R$ are leaf concepts, i.e. leaves.
Since the problem of choosing cost-effective concepts from 
a set of concepts is $\NP$-hard \cite{Termehchy:SIGMOD:14}, 
CECD will be $\NP$-hard. 
\end{proof}

\noindent
Because CECD is $\NP$-hard, we propose and study 
efficient approximation and 
pseudo-polynomial algorithms to solve it.

%================ END OF SECTION ==========================

\begin{comment}
%If the taxonomy is  mutually exclusive the queriability of the conceptual design will be:
%\begin{equation}
%QU(\mS) = \sum_{p \in \allparts(\mS)}{\frac{\sum_{C\in p} u(C)d(C)}{\sum_{C\in p} d(C)} } + \frac{\sum_{C\in \free(\mS)} u(C)d(C)}{\sum_{C\in \free(\mS)} d(C)}.
%\label{eq:queriability-mx}
%\end{equation}
%\noindent  

Entities from mutually exclusive concepts and 
similar names rarely appear in the same document.
For example, the animal and the car named {\it Jaguar} do not usually occur together in 
a document. Hence, we assume that the 
entities of mutually exclusive concepts and 
similar names do not appear in a document. 

\end{comment}

%\section{Approximation Algorithms}

\section{Level-Wise Algorithm}
\label{sec:approximation-algorithms}
\prob{Level-wise} algorithm solves the problem of CECD using a greedy approach. 
It returns a design whose concepts are all from a same level 
of the input taxonomy.
Our algorithm finds
the design with maximum queriability for each level using the algorithm 
proposed in \cite{Termehchy:SIGMOD:14}, called
approximate popularity maximization ({\it APM} for short), 
for finding the cost-effective subset of concepts over 
a set of concepts. 
It eventually delivers the 
design with largest queriability across all levels
in the taxonomy.

\begin{comment}
Precisely, let $\mC[i]$ be the set of all concepts of 
depth $i$ in $\mX$, and 
let $\mX_i = (\mC[i]\cup \{R_i\}, \mR_i)$ be 
a tree of height one with root $R_i$.  
All nodes of level $i$ in $\mX$ 
become leaves in $\mX_i$; thus, $\mR = \set{(R_i,C)|C \text{ is a concept of level } i \text{ in } \mX}$.  
For any leaf $C\in \mX_i$ 
we define its popularity $u_i(C)$  
to be the total popularity of its descendant 
leaves in its corresponding subtree in $\mX$.
Level-wise algorithm calls the APM algorithm 
to find the cost-effective
subset of concepts for every $\mX_i$. It then
compares various selected designs across $\mX_i$s 
and returns the answer with maximum queriability 
as its solution for the problem of CECD over taxonomy $\mX$.
The APM algorithm runs in $O(\card{\mC} \log(\card{\mC}))$ time, where $\card{\mC}$ denotes the
number of concepts in taxonomy $\mX=(R, \mC, \mR)$.
Thus, the time complexity of \prob{Level-wise} algorithm is
$O(h \card{\mC}\log(\card{\mC})$ over taxonomy $\mX$, where $h$ is the number of levels in $\mX$.
\end{comment}

Precisely, let $\mC[i]$ be the set of all concepts of 
depth $i$ in $\mX=$ $(R, \mC, \mR)$.  
For any concept $C\in \mC[i]$, 
we define its popularity $u(C)$  
to be the total popularity of its descendant 
leaf concepts in $\mX$.
Level-wise algorithm calls the APM algorithm 
to find the cost-effective
subset of concepts for every $\mC[i]$. 
It also computes the queriability of the design
that contains only the most popular leaf concept, i.e.,
the leaf concept with maximum $u$ value. 
It then compares various selected designs across $\mC[i]$s 
and returns the answer with maximum queriability 
as its solution for the problem of CECD over taxonomy $\mX$.
Figure~\ref{fig:level-or-max} illustrates the 
level-wise algorithm.  
Let $\card{\mC}$ denote the
number of concepts in taxonomy $\mX$. 
The APM algorithm runs in $O(\card{\mC} \log\card{\mC})$.
Thus, the time complexity of level-wise algorithm is
$O(h \card{\mC}\log\card{\mC})$ over taxonomy $\mX$.

In addition to being efficient, 
level-wise algorithm also has bounded and reasonably 
small worst-case approximation ratio for an
interesting case of CECD problem.
Sometimes, it may be easier 
to use and manage designs whose 
concepts are not subclass/ superclass of each other.
We call such a design a {\it disjoint design}.
Our empirical results in Section~\ref{sec:experiment}
shows that this strategy returns effective 
designs in the cases that the budget is relatively small.
In this case, we should restrict 
the feasible solutions in the CECD problem to be disjoint.
We call this case of CECD, {\it disjoint CECD}.

Recent empirical results suggest 
that the distribution of concept frequencies over a
large collection generally follows a \emph{power law} distribution 
\cite{Probase:Wu:12}. 
%({\it research.microsoft.com/en-us/projects/probase}).
We show that the \prob{level-wise} algorithm 
has a bounded and reasonably small worst-case 
approximation ratio for CECD with disjoint design
given that distribution of concept 
frequencies follows a power law distribution.
The following lemma bounds the queriability
that is obtained from the free concepts in any
solution given that distribution of concept 
frequencies follows a power law distribution.
\begin{lemma}
\label{lemma:costly-intell}
Let $C_{\max}$ be the leaf concept in 
taxonomy $\mX=(R, \mC, \mR)$ with maximum $u$ value and let assume that distribution of $u$ over leaf concepts follows a \emph{power law} distribution. 
Let $\mS$ be any schema. Then,
\begin{align*}
QU(\free(\mS)) \leq 2u(\mC_{\max})\log\card{\mC}.
\end{align*}
\end{lemma}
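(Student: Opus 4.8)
The plan is to bound $QU(\free(\mS))$, which by Definition~\ref{def:queriability} denotes the free-concept contribution $\sum_{C\in\free(\mS)} u(C)\,d(C)$, by a quantity that does not depend on the particular schema $\mS$, using only the power-law shape of the popularities. First I would discard the frequency factors: since $d(C)$ is a fraction of documents we have $0 \le d(C) \le 1$, hence
\begin{align*}
QU(\free(\mS)) = \sum_{C\in\free(\mS)} u(C)\,d(C) \le \sum_{C\in\free(\mS)} u(C).
\end{align*}
Because every free concept is a leaf concept and the free concepts are distinct, $\free(\mS)$ is a subset of the leaf concepts of $\mX$, so the right-hand side is at most the total popularity $\sum_{C\text{ leaf}} u(C)$ taken over all leaves. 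This removes the dependence on $\mS$ entirely and reduces the lemma to a statement purely about the popularity distribution.

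Next I would exploit the power-law assumption. Sort the leaf concepts in non-increasing order of popularity, writing $u_1 \ge u_2 \ge \cdots \ge u_n$ for their $u$-values, where $n$ is the number of leaves and $u_1 = u(C_{\max})$. Under a rank-based power law (with exponent essentially one) the $i$-th largest popularity satisfies $u_i \le u_1/i = u(C_{\max})/i$; this is the Zipfian regime that controls the sum. Summing over all leaves then yields
\begin{align*}
\sum_{C\text{ leaf}} u(C) = \sum_{i=1}^{n} u_i \le u(C_{\max}) \sum_{i=1}^{n} \frac{1}{i} = u(C_{\max})\,H_n,
\end{align*}
where $H_n$ denotes the $n$-th harmonic number.

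Finally I would bound the harmonic number. Using $H_n \le 1 + \ln n$ together with $n \le \card{\mC}$ (the leaves are a subset of the concepts) gives $H_n \le 1 + \ln\card{\mC} \le 2\log\card{\mC}$ for $\card{\mC}\ge 2$, and chaining the three displays produces $QU(\free(\mS)) \le 2\,u(C_{\max})\log\card{\mC}$, as claimed. The main obstacle I anticipate is not the arithmetic but pinning down the precise meaning of ``follows a power law distribution'': the $\log$ factor and the constant $2$ hinge on the exponent being (essentially) one, since for an exponent strictly above one the sum $\sum_i 1/i^s$ converges to a constant, whereas for an exponent below one it grows polynomially in $n$. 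I would therefore make the power-law hypothesis precise as the rank bound $u_i \le u(C_{\max})/i$ and check that this is exactly the case in which the asserted $\Theta(u(C_{\max})\log\card{\mC})$ growth is tight, so that the stated bound is both valid and not improvable in general.
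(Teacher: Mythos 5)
Your proof is correct, but it takes a genuinely different route from the paper's. The paper keeps the frequencies and factors out the popularity: it first writes $\sum_{C\in\free(\mS)} u(C)d(C) \leq u(C_{\max})\sum_{C\in\free(\mS)} d(C)$, then invokes the power-law assumption on the \emph{frequencies} to get $\sum_{C \in \leaf(\mC)} d(C) \leq 1+\log\card{\leaf(\mC)}$, and finishes with $\card{\leaf(\mC)}\leq\card{\mC}$. You do the mirror image: you discard the frequencies via $d(C)\leq 1$ and apply the power law to the \emph{popularities}, summing the rank bound $u_i \leq u(C_{\max})/i$ into a harmonic number. The two arguments have identical shape (a pointwise bound on one factor, a power-law/harmonic-sum bound on the other), and both need the same benign caveat $\card{\mC}\geq 2$ to absorb the additive $1$ into the factor $2$. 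The difference matters for how the lemma sits in the paper: the lemma statement does hypothesize the power law on $u$, which your proof matches exactly, but the surrounding text and Theorem~\ref{theorem:approx} --- the only place Lemma~\ref{lemma:costly-intell} is invoked --- assume the power law on the concept \emph{frequencies}, and that is the hypothesis the paper's proof actually consumes (the lemma statement appears to carry a typo). So yours is the faithful proof of the statement as literally written, while the paper's is the proof that supports the downstream theorem; neither subsumes the other, since your argument needs nothing about $d$ and the paper's needs nothing about $u$ beyond its maximum. Your closing observation about the exponent is also apt, and it applies equally to the paper's step bounding $\sum_{C} d(C)$: in both versions the $\Theta(\log\card{\mC})$ bound is exactly the Zipf exponent-one regime, and the constant would change (or the growth would become polynomial) for other exponents.
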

\begin{proof}
We have:
$${\sum_{C\in \free(\mS)} u(C)d(C) \leq u(C_{\max})} \sum_{C\in \free(\mS)}d(C).$$
\iffalse
Recent empirical results suggest 
that the distribution of concept frequencies over a
large collection generally follows a power law distribution 
\cite{Probase:Wu:12} ({\it research.microsoft.com/en-us/projects/probase}).
\fi
Since the frequencies of leaf concepts in $\mX$ follow a ``power law''
distribution,
\begin{align*}
\sum_{C \in \leaf(\mC)} d(C) \leq 1+\log(\card{\leaf(\mC)}),
\end{align*}
where $\leaf(\mC)$ is the set leaf concepts in $\mC$
and $\card{\leaf(\mC)}$ is the number of such concepts.
Since $|\leaf(\mC)|$ $\leq \card{\mC}$,
$$QU(\free(\mS)) \leq \sum_{C\in \free(\mS)} u(C)d(C) \leq (1 + \log\card{\mC})\  u(C_{\max}) \leq 2u(\mC_{\max})\log\card{\mC}.$$
\end{proof}

\begin{figure}[h]
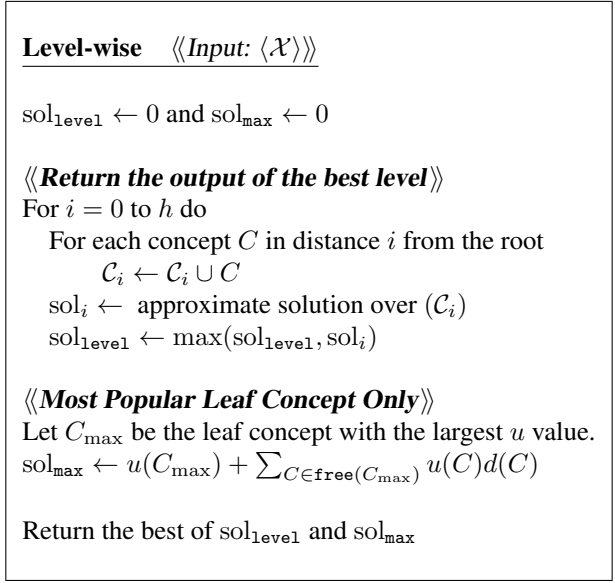

\begin{algo}
\\ \underline{\textbf{\prob{Level-wise}} \;\; \Comment{Input: $\left<\mX\right>$}}
%\prob{Disjoint CESD} Problem.
\\
\\ $\sol_{\tt level} \leftarrow 0$ and $\sol_{\tt max} \leftarrow 0$
\\
\\ \Comment{\textbf{Return the output of the best level}}
\\ For $i=0$ to $h$ do
%\\\> $R_i \leftarrow \emptyset$ \Comment{Root of the corresponding tree of level $i$}
\\\> For each concept $C$ in distance $i$ from the root
\\\>\> $\mC_i \leftarrow \mC_i \cup C$
%\\\>\> $\mR_i \leftarrow (R_i, C)$
%\\\> $\mX_i \leftarrow (R_i,\mC_i,\mR_i)$
\\\> $\sol_{i} \leftarrow \text{ approximate solution over }(\mC_i)$
\\\> $\sol_{\tt level} \leftarrow \max(\sol_{\tt level}, \sol_{i})$
\\
\\ \Comment{\textbf{Most Popular Leaf Concept Only}}
\\ Let $C_{\max}$ be the leaf concept with the largest $u$ value.
\\ $\sol_{\tt max} \leftarrow u(C_{\max}) + \sum_{C\in\free(C_{\max})} u(C)d(C)$
\\ 
\\ Return the best of $\sol_{\tt level}$ and $\sol_{\tt max}$
\\
\end{algo}
\vspace{-0.5 cm}
\caption{Level-wise algorithm.}
\label{fig:level-or-max}
\end{figure}

\begin{theorem}
\label{theorem:approx}
Let $\mX = (R, \mC, \mR)$ be a taxonomy with height $h$
and the minimum accuracy of $\pr_{\min}$ $=\min_{C\in \mC}{\pr(C)}$.
The \prob{Level-wise} algorithm is a 
$O({h+\log\card{\mC}\over\pr_{\min}})$-approximation for the CECD problem with disjoint solution on $\mX$ and budget $B$
given that the distribution of frequencies in $\mC$ 
follows a power law distribution. %of concepts 
\end{theorem}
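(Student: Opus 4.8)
The plan is to upper-bound $QU(\opt)$ by the queriability of the design returned by \prob{Level-wise}, which I denote $QU(\mS_{\mathrm{ALG}})$, losing only the claimed factor. Let $\opt$ be an optimal \emph{disjoint} design for budget $B$. Because $\opt$ is disjoint, no concept of $\opt$ is an ancestor of another, so the lowest ancestor in $\opt$ of any leaf $D$ is the unique concept of $\opt$ on the root-to-$D$ path; hence $\part(P)$ equals exactly the set of leaf descendants of $P$ for every $P\in\opt$. I split the objective as $QU(\opt)=A+F$, where $A=\sum_{P\in\opt}\sum_{C\in\part(P)}\frac{u(C)\,d(C)\,\pr(P)}{d(P)}$ is the partition part and $F=\sum_{C\in\free(\opt)}u(C)d(C)$ is the free part, and bound each separately.

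For $F$, Lemma~\ref{lemma:costly-intell} gives $F\le 2u(C_{\max})\log\card{\mC}$. On the other side, assuming each concept is individually affordable, the one-concept design $\{C_{\max}\}$ is feasible and its true queriability is at least $u(C_{\max})\pr(C_{\max})\ge \pr_{\min}\,u(C_{\max})$; since \prob{Level-wise} evaluates this design through $\sol_{\tt max}$ and returns the best candidate, $QU(\mS_{\mathrm{ALG}})\ge \pr_{\min}\,u(C_{\max})$. Combining the two inequalities yields $F\le \frac{2\log\card{\mC}}{\pr_{\min}}\,QU(\mS_{\mathrm{ALG}})$.

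For $A$, I group the concepts of $\opt$ by depth, writing $A=\sum_{i=0}^{h}A_i$ where $A_i$ collects the terms of the concepts of $\opt$ at level $i$, and pick by averaging a level $i^\star$ with $A_{i^\star}\ge A/(h+1)$. The crux is showing the APM call on level $i^\star$ recovers a constant fraction of $A_{i^\star}$. The key observation is that restricting CECD to designs using only level-$i^\star$ concepts is \emph{exactly} an instance of the flat ``cost-effective concepts over a set'' problem APM solves: modeling each level-$i^\star$ concept $C$ as an independent concept whose effective popularity is the frequency-weighted average $\frac{1}{d(C)}\sum_{D}u(D)d(D)$ over its leaf descendants $D$ makes any level-$i^\star$ design's queriability coincide with the flat objective, since a selected $C$ then contributes $\pr(C)\cdot\frac{1}{d(C)}\sum_D u(D)d(D)$ and an unselected $C$ contributes $\sum_D u(D)d(D)$, matching the partition and free terms of the full formula. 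The concepts of $\opt$ at level $i^\star$ form a feasible level-$i^\star$ design of queriability at least $A_{i^\star}$, so the optimal level-$i^\star$ queriability is at least $A_{i^\star}$; APM, a constant-factor approximation for the flat problem, thus returns a level-$i^\star$ design $\mS_{i^\star}$ of queriability $\Omega(A_{i^\star})$. As $\mS_{i^\star}$ uses only level-$i^\star$ concepts, its free set and hence its full queriability agree with its level-$i^\star$ queriability, and \prob{Level-wise} evaluates it, so $QU(\mS_{\mathrm{ALG}})\ge QU(\mS_{i^\star})=\Omega(A/(h+1))$, i.e. $A\le O(h)\,QU(\mS_{\mathrm{ALG}})$.

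Adding the two bounds gives $QU(\opt)=A+F\le O\!\big(h+\tfrac{\log\card{\mC}}{\pr_{\min}}\big)QU(\mS_{\mathrm{ALG}})$, and since $\pr_{\min}\le 1$ this is $O((h+\log\card{\mC})/\pr_{\min})$, as claimed. I expect the main obstacle to be the step involving the level-$i^\star$ reduction: one must verify that the single-level subproblem is a faithful instance of the flat problem solved by APM — in particular that an \emph{internal} concept aggregating several leaf descendants is correctly captured by the frequency-weighted effective popularity, so that APM's guarantee transfers to the \emph{true} queriability and not to an overestimate, and that the free contributions computed inside the subproblem coincide with those in the full instance. The remaining ingredients (the disjointness fact about partitions, the averaging over the $h+1$ levels, and the use of Lemma~\ref{lemma:costly-intell} together with $\sol_{\tt max}$) are routine once this reduction is established.
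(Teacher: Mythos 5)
Your proposal follows essentially the same route as the paper's proof: decompose the optimal disjoint design into its per-level partition contributions plus the free-concept contribution, bound the free part via Lemma~\ref{lemma:costly-intell} against the $\sol_{\tt max}$ candidate, and recover the dominant level through the per-level APM call; the paper merely phrases your additive combination of the two bounds as a two-case analysis. The reduction you flag as the main obstacle --- checking that the level-restricted subproblem with frequency-weighted effective popularities is a faithful instance of the flat problem, so that APM's guarantee transfers to true queriability (up to the $1/\pr_{\min}$ factor) --- is precisely the step the paper leaves implicit, so your treatment is, if anything, more explicit than the published one.
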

\begin{proof}
Let $\mS^*$ be a disjoint schema over $\mX$ with total 
cost at most $B$ that maximizes $QU$ function.  
Let $\mS^*[i]$ be the set of concepts in $S^*$ of depth $i$.  
By the definition of disjointness, 
$\part(\mS^*[i])$ $\cap$ $\part(\mS^*[j]) = \emptyset$, 
for all $1\leq i,j\leq h$.  It follows:
\[
QU(\mS^*) = \sum_{1\leq i\leq h}{QU(\mS^*[i])} + QU(\free(\mS^*)),
\]
where $QU(\free(\mS^*)) = {\sum_{C\in \free(\mS^*)} u(C)d(C)}$
is the queriability obtained from the 
free concepts in $\mS^*$.

We consider two possible cases.  
First, assume that $\sum_{i=1}^h$ $QU(\mS^*[i])$ $\geq QU(\free(\mS^{*}))$.  
It immediately follows that the \prob{level-wise} algorithm output gives a $(2h/\pr_{\min})$-approximation. In the other case in which $QU(\free(\mS^{*})) \geq$ $\sum_{1\leq i\leq h} QU(\mS^*[i])$, by Lemma~\ref{lemma:costly-intell}, extracting the 
concept with the maximum $u$ value gives a $(4\log(\card{\mC})/$ $\pr_{\min})$-approximation.
These two cases together imply that we have an $O({h+\log\card{\mC}\over\pr_{\min}})$-approximation.
\end{proof}

\noindent
The value of $\pr_{\min}$ is generally large because concept annotation algorithm are reasonably accurate \cite{ChiticariuLRR10,McCALLUM:ACMQueue:05}.

\section{Pseudo-polynomial Time Algorithm}
\label{sec:pseudo-polynomial}
In this section we describe a pseudo-polynomial time 
%dynamic programming 
algorithm for the CECD problem over tree taxonomies. 
As many other optimization problems on the tree structure, one approach is to find an 
optimal solution \emph{bottom-up} using dynamic programming technique.
The main idea is to define the CECD problem over all subtrees of the given taxonomy 
$\mX=(R, \mC,\mR)$. Next we show that in order to solve the subproblem 
defined over the subtree rooted at $C$, it is enough to solve the subproblems defined over 
the subtrees rooted at the children of $C$.

Let $\child(C)$ be the set of all children of the concept 
$C$ in $\mX$.
Moreover, let $\mX_C$ be the subtree of $\mX$ rooted 
at $C$. 
%A partition is %entirely 
%in $\mX_C$ iff all
%members of the partition are descendants of $C$ in $\mX$.
%To solve the problem via dynamic programming technique, lets solve the following subproblem in each $\mX_C$. 
%only care about the queriability obtained from the concepts contained by a partition of the design; in other words, for now do not consider the queriability of the free concepts. For each subtree rooted at $C$ define the subproblem as maximizing the queriability obtained by partitions that are entirely in $\mX_C$. Note that these partitions are made by design whose all concepts are in $\mX_C$. 
Formally given budget $B_C$, the subproblem over $\mX_C$ is to find a design $\mS_C \subseteq \mX_C$ whose total cost is at most $B_C$ and the queriability of the partitions obtained by $\mS_C$ is the maximum. Note that by annotating $\mS_C$ in $\mX_C$ there may exist a set of leaf concepts in $\mX_C$ that do not belong to any of $\part(S)$ for $S\in \mS$. Let $\nullpart(\mS_C, C)$ denotes the leaf concepts of $\mX_C$ that are not assigned to any partition of $\mS_C$. 

In order to computer the maximum queriability of the best design in $\mX_C$, one of the cases we should consider is the one in which $C$ is annotated.  To apply dynamic programming in this case we need to evaluate the queriability of $\part(C)$ which is $\sum_{Ch \in \child(C)}\sum_{C'\in \nullpart(\mS_{Ch}, Ch)} u(C')d(C')$. Thus besides the total queriability of partitions in $\mX_C$, we should compute the value of $\sum_{C' \in \nullpart(\mS_{Ch}, Ch)} u(C')d(C')$. All together we are required to solve the subproblem $Q$ defined over the subtree rooted at $C$ with parameter $B_C$ and $N_C$ where $B_C$ denotes the available budget for annotating concepts in $\mX_C$ and $N_C$ denotes the value of $\sum_{C'\in \nullpart(\mS_C, C)} u(C')d(C')$.

Further we assume that $u(C)$, $d(C)$, and $w(C)$ are positive integers for each $C \in \mC$. 
In Section~\ref{sec:experiment}, we show that the algorithm
can handle real values with scaling techniques in expense 
of reporting a near optimal solution instead of an optimal one.
We define $D = \sum_{C \in \leaf(\mC)} d(C)$, 
$U =$ $\sum_{C \in \leaf(\mC)} u(C)$. Let 
$B_{\tt total}$ denote the total available budget. 
We propose an algorithm whose time complexity is polynomial 
in $U$, $D$, $B_{\tt total}$, and $\card{\mC}$. 
 
\iffalse
Further let $\nullpart(\mX_C)$ be the set of leaf concepts in $\mX_C$ 
that do {\bf not} belong to any partition 
that is entirely in $\mX_C$. 
We define $Q[C,B,N]$ to denote the 
maximum queriability we can obtain constraint 
to the budget $B$ from the partitions %entirely 
in $\mX_C$ 
such that $N=\sum_{C\in \nullpart(\mX_C)} u(C)d(C)$. 
Note that $Q$ does not include the queriabilty corresponding to the free concepts in $\mX_C$ which is $N$. 
\fi
We have the following 
recursive rules for the non-leaf concepts in $\mC$ based on the value of $Q$ for their children.

\begin{figure}[htb]
  \centering
    \includegraphics[height=1.2in,width=3.8in]{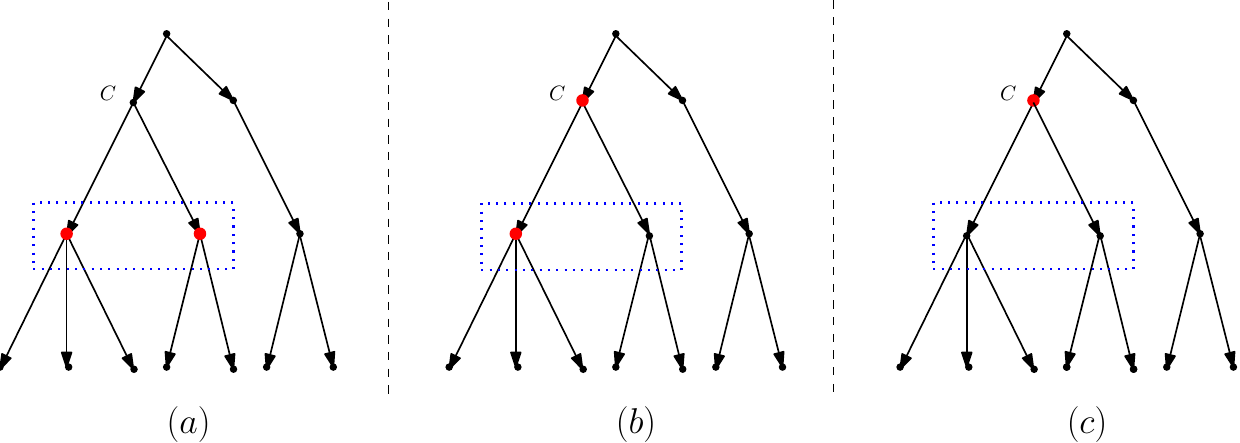}
      \caption{The concepts in red denote the ones that are picked in the design. $(a)$, $(b)$ and $(c)$ show three different types of the subproblems required to solve in order to compute $Q[C,B,0]$.} %Note that in $(a)$, $\leaf(\mX_C)=\emptyset$ and in $(b)$ and $(c)$, $\leaf(\mX_C)$ will be in the corresponding partition of $C$.}
  \label{fig:dynamic-0}
\end{figure}

\begin{align*}
Q[C,B,0] = \max\{&\max_{\mB,\mN} (\sum_{Ch\in \child(C)} Q[Ch,\mB(Ch),\mN(Ch)] + \frac{\pr(C)}{d(C)} {\sum_{Ch \in \child(C)} \mN(Ch)}),\\ 
&\max_{\mB'}\sum_{Ch\in \child(C)} Q[Ch,\mB'(Ch),0]\}
\end{align*}
For each $Ch$, $\mB(Ch)$, $\mB'(Ch)$  and $\mN(Ch)$ are integer
values satisfying the following conditions:
(1) $B = w(C) + \sum_{Ch\in\child(C)}\mB(Ch)$,
(2) $B = \sum_{Ch\in\child(C)}\mB'(Ch)$, and
(3) $U D \geq \sum_{Ch\in\child(C)}\mN(Ch)$. 
%In turn, sets $\mB$, $\mB'$ and $\mN$ are defined as follows.
%$\mB=$ $\{\mB(Ch)|$ $Ch\in\child(C)\}$, $\mB'=$ $\{\mB'(Ch)|$ $Ch\in\child(C)\}$ 
%and $\mN=\set{\mN(Ch)|Ch\in\child(C)}$.

The first term in the recursive rule corresponds to the case in which  
we select concept $C$ in the output design ($(b)$ and $(c)$ in Figure~\ref{fig:dynamic-0}) and 
the second term corresponds to the case in which 
for any child of $C$, $Ch$, $\nullpart(Ch)=\emptyset$ ($(a)$ in Figure~\ref{fig:dynamic-0}). 
In a design $\mS_C$ in $\mX_C$ with the maximum queriability and empty $\nullpart$ whose total cost is $B$, either $C$ is selected in the design and the budget $B-w(C)$ is divided among the children of $C$ (first term of the above rule), or the whole budget $B$ is divided among the children of $C$ and all leaf concepts of $\mX_C$ is assigned to a proper descendant of $C$ in the design (second term of the above rule).    

Similarly, for the case in which $N\neq 0$ we have the following recursive rule:
\begin{align*}
Q[C,B,N] &= \max_{\mB,\mN} \sum_{Ch\in \child(C)} Q[Ch,\mB(Ch),\mN(Ch)]
\end{align*}
where $\mB=\set{\mB(Ch)|Ch\in\child(C)}$ and $\mN=$ $\{\mN(Ch)|$ $Ch\in\child(C)\}$ such that 
$B = \sum_{Ch\in\child(C)}\mB(Ch)$ and 
$N =\sum_{Ch\in\child(C)}\mN(Ch)$. 
For each leaf concept $C_\ell$ in $\mC$, we have the following.
%\begin{itemize}
\squishlisttwo
\item{$Q[C_\ell,B,N] = 0 \text{ if } N=u(C_\ell)d(C_\ell) \text{ and } -\infty \text{ otherwise }$}

\item{$Q[C_\ell,B,0] = \pr(C_{\ell})u(C_\ell) \text{ if } B \geq w(C_\ell) \text{ and } -\infty \text{ otherwise}.$}
\end{list}
%\end{itemize}
\noindent
The maximum value of the queriability on $\mX=(R, \mC, \mR)$ is
\begin{align}
%
%\max_{} Q\ \eps_{u} + N\ \eps_{u}\ \eps_{d}  
\max_{N} Q[R,B_{\tt total},N] + N,
\label{eq:pseudo-final-recursion}
\end{align}
where $B_{\tt total}$ is the total available budget. The first term, $Q[R,B_{\tt total},N]$, denotes the profit obtained form the partitions of an optimal design and the second term corresponds to the profit obtained from the free concepts with respect to the output design.
%Because we compute the maximum, we further simplify  Equation~\ref{eq:pseudo-final-recursion} and use $Q + N\ \eps_{d}$.

To compute the running time of the algorithm we need to give an
upper bound on the number of cells in $Q$ and the time required 
to compute the value of each cell. 
%Let us for now assume that we know the exact value of 
%optimal queriability $\opt$ in advance. 
%Moreover, we have $N \leq \sum_{C\in\mC} u(C)d(C)$, 
%therefore, $N \leq U D$. 
The time to compute a single cell in $Q$ is
exponential in terms of the maximum degree of the taxonomy.  
Consequently, the algorithm runs much faster 
if the maximum degree in $\mX$ is bounded by a small constant.
As we show next, 
we can modify the taxonomy $\mX$ to obtain 
taxonomy $\mX'$ such that each concept $C$ in 
$\mX'$ has at most two children and the number of nodes in $\mX'$
is at most twice the number of nodes in $\mX$.
Since each node in $\mX'$ has two children, 
the required amount of time to compute a single cell in $Q$ is 
$O(B_{\tt total} U D)$; at most $B_{\tt total}$ ways to divide the budget between the two children and at most $UD$ ways to divide $N$ between the two children. %$O(\opt U D)$. 
Since the first argument in $Q$ can be any of the concepts 
in $\mC$, $N \leq U D$ and $B \leq B_{\tt total}$,
there are $O(B_{\tt total}UD)$ cells to evaluate in order 
to compute the design with maximum queribility.
Thus the total time for computing all cells
in $Q$ is $O(\card{\mC} (B_{\tt total} U D)^{2})$. 

Next, we explain how to transform an arbitrary taxonomy to 
a binary taxonomy. 
Let $C$ be a non-leaf concept in $\mX$.
We replace the induced subtree of $C\cup\child(C)$ with a 
full binary tree $\mX_C'$ whose root is $C$ and whose 
leaves are $\child(C)$ as shown in Figure~\ref{fig:binary-transform}.
Some internal nodes of $\mX_C'$ do not 
correspond to any node in $\mX$.  We refer to such 
internal nodes as {\it dummy nodes}, 
and set their cost to $B_{\tt total}+1$ to make sure that 
our algorithm does not include them in 
the output design.

\begin{figure}[htb]
  \centering
    \includegraphics[height=1.5in,width=3.75in]{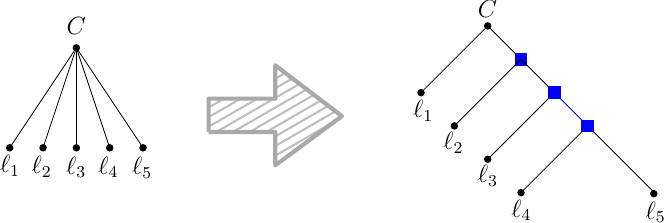}
      \caption{Transforming an input taxonomy $\mX$ into a binary taxonomy. 
		Blue square nodes correspond to dummy nodes.}
  \label{fig:binary-transform}
\end{figure}

Applying the mentioned transformation to all 
nodes of $\mX$, we obtain a binary taxonomy 
$\mX'= (R, \mC', \mR')$.  
The number of nodes in $\mC'$ is at most twice 
the number of nodes in $\mC$.
%, and the height of 
%$\mX'$ is at most $\lceil\log \card{\mC}\rceil$ 
%times larger than the height of $\mX$.  
It follows that the running time of our pseudo-polynomial algorithm
on the input $\mX'$ is 
$O(\card{\mC} (B_{\tt total} U D)^{2})$. 

Since this transformation does not change the subset of 
leaf concepts in the subtree rooted in any internal node, 
any internal node in $\mX$ corresponds to a 
solution in $\mX'$ with the same cost and queriability.  
Since dummy nodes are too expensive to be chosen,
they do not introduce any new solution to the 
set of feasible solutions.
\begin{theorem}
There is an algorithm to solve the CECD problem over 
taxonomy $\mX = (R, \mC, \mR)$ with budget $B$ in 
$O(\card{\mC} B^2 U^2 D^2)$. % time
\end{theorem}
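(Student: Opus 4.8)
The plan is to establish the theorem in two stages: first verify that the dynamic program computes the optimal queriability correctly, and then bound its running time. For correctness I would prove, by structural induction on the subtrees of $\mX$ worked bottom-up, that $Q[C,B,N]$ equals the maximum total queriability obtainable from the partitions lying entirely inside $\mX_C$, taken over all designs $\mS_C \subseteq \mX_C$ of total cost at most $B$ whose unassigned leaves satisfy $\sum_{C' \in \nullpart(\mS_C,C)} u(C')d(C') = N$. The base cases are the leaf concepts, where the two displayed rules are immediate: a single leaf either is selected (contributing $\pr(C_\ell)u(C_\ell)$ and emptying the null-part) or is itself the null-part (contributing nothing to partitions while forcing $N = u(C_\ell)d(C_\ell)$).

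The inductive step is the heart of the argument, and I expect it to be the main obstacle. For the combined cell $Q[C,B,0]$ one must argue that an optimal sub-design either selects $C$ — in which case every leaf left unassigned by the child subtrees is absorbed into $\part(C)$, contributing exactly $\frac{\pr(C)}{d(C)} \sum_{Ch \in \child(C)} \mN(Ch)$ and emptying the null-part — or does not select $C$ while leaving no leaf unassigned, so that the whole budget is split among the children with each child's null-part forced empty. The delicate points are (i) checking that the queriability of $\part(C)$ decomposes additively over the children's null-part contributions, which holds because the partitions of distinct selected concepts are leaf-disjoint, and (ii) confirming that ranging $\mB$, $\mB'$, and $\mN$ over all integer splittings subject to conditions (1)--(3) exhausts precisely the feasible sub-designs, so that the outer $\max$ attains the true optimum. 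The case $N \neq 0$ is analogous but simpler, since a nonempty null-part forces $C$ to be unselected and the recurrence merely reassembles the children's contributions with $\sum_{Ch}\mN(Ch)=N$ and $\sum_{Ch}\mB(Ch)=B$.

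Having established correctness of $Q$, the optimal value for the whole instance is $\max_N Q[R,B_{\tt total},N] + N$: the first term collects the queriability of all partitions, while the additive $N$ is exactly $\sum_{C \in \free(\mS)} u(C)d(C)$, the contribution of the free concepts, which are precisely the leaves left in the null-part at the root. This matches Definition~\ref{def:queriability}.

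For the running time I would first reduce to a binary taxonomy $\mX'$ via the stated transformation, observing that the node count at most doubles and that dummy nodes, having cost $B_{\tt total}+1$, never enter a feasible design, so the set of achievable cost--queriability pairs is unchanged. On a binary tree each cell $Q[C,B,N]$ is computed by trying all ways to split the budget between the two children ($O(B_{\tt total})$ choices) together with all ways to split $N$ between them ($O(UD)$ choices, since $N \le UD$), for $O(B_{\tt total}\, U D)$ work per cell. Indexing cells by a concept ($O(\card{\mC})$ choices), a budget value ($\le B_{\tt total}$), and an $N$ value ($\le UD$) gives $O(\card{\mC}\, B_{\tt total}\, U D)$ cells in all. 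Multiplying the cell count by the per-cell cost yields total time $O(\card{\mC}\,(B_{\tt total}\, U D)^2) = O(\card{\mC}\, B^2 U^2 D^2)$, as claimed.
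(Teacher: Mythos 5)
Your proposal is correct and follows essentially the same route as the paper: the identical dynamic program over $Q[C,B,N]$ with the same leaf base cases and two-case recurrence at internal nodes, the same final expression $\max_N Q[R,B_{\tt total},N]+N$, and the same binary-tree transformation with dummy nodes of cost $B_{\tt total}+1$ yielding the $O(\card{\mC}\,(B_{\tt total}\,U D)^2)$ bound. The only difference is presentational — you make the correctness argument an explicit structural induction, whereas the paper justifies the recurrences informally — so there is nothing substantive to add.
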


%Let $\lambda = {\eps M \over n}$ where $M =\max_{C\in \mC} u(C)$. 
%We define $\hat{u}(C) = \lfloor {u(C) \over \lambda} \rfloor$. This implies that
%\begin{align*}
%\lambda \hat{u}(C) \leq u(C) \leq \lambda(\hat{u}(C)+1)
%\end{align*}
%Let $\mS_{\scaled}$ be the schema selected by our algorithm considering $\hat{u}$ and let $\mS_{\org}$ be the optimal solution considering the original popularity values, $u()$. Thus,

\iffalse
Next, as we mentioned earlier in applications $u$ and $d$ values are positive real. 
To solve the problem with real $u$ and $d$ values, we define:
\begin{align*}
\hat{u}(C) &= \floor{u(C) \over (\eps u_{\max}/n)}, \text{ and } \\
\hat{d}(C) &= \floor{d(C) \over (\eps d_{\max}/n)}
\end{align*}
The accuracy of the dynamic programming algorithm depends on 
values of $\eps_d$ and $\eps_u$. If these values are sufficiently 
small, the algorithm finds the optimal schema.
Small values of these parameters, however, may result in 
large $U$ and $D$ values and increase the running 
time of the algorithm considerably. 
Hence, one may choose relatively larger values for 
$\eps_d$ and $\eps_u$ to obtain a schema whose 
queriability is close to the one of the optimal schema 
in a shorter time.
\fi

Table~\ref{table:results} presents a summary of 
proposed algorithms for the CECD problem. 
\begin{table}[t]
\begin{center}
%\scriptsize{
\begin{tabular}{|c|c|c|}
\hline
Algorithm & Approximation ratio & Running time \\
\hline
%\prob{Greedy-Or-Max}
Level-wise &  
$O((h+\log\card{\mC})/\pr_{\min})$ (Disjoint CECD)& $O(h \card{\mC} \log(\card{\mC}))$\\
%\hline
%Dynamic Programming &  
%$O(\log(\card{\mC}) \over\pr_{min})$ (Disjoint Schema) & \\
\hline
Dynamic Programming & Pseudo-polynomial & $O(\card{\mC} B^ 2 U^2 D^{2})$  \\
\hline
\end{tabular}
%}
\end{center}
\vspace{-0.55cm}
\caption{Algorithms for the CECD problem.} %D. Prog stands for Dynamic Programing.}
\label{table:results}
\end{table}

\section{Cost-Effective Design for DAG Taxonomies}
\label{sec:dag-taxonomy}
%-------------------------------------------------------------------
\subsection{Directed Acyclic Graph Taxonomies}
While taxonomies are traditionally in form of trees,
many of them have evolved into \emph{directed acyclic graphs} (DAGs) 
to model more involved subclass/ superclass 
relationships between concepts
in their domains. Figure~\ref{fig:schema-org} shows fragments of
{\it schema.org} taxonomy. Some concepts in 
this taxonomy are included in multiple superclasses. 
For example,
a {\it hospital} is both a {\it place} and an {\it organization}. Therefore, a tree 
structure is not able to represent these relationships.

Formally, a {\it directed acyclic graph taxonomy} 
$\mX = $ $(R, \mC, \mR)$,  ({\it DAG taxonomy} for short), 
is a DAG, with vertex set $\mC$, edge set $\mR$, and \textit{root} $R$. 
$\mC$ is a set of concepts, 
$(D, C) \in \mR$ iff $D, C \in \mC$ and   
$D$ is a superclass of $C$.  Finally, $R$ is a node
in $\mX$ without any superclass.  
A concept $C \in \mC$ is a {\it leaf concept} iff it has no subclass in $\mX$; i.e, there is not 
any node $D \in \mC$ where $(C, D) \in \mR$.
The definitions of {\it child}, {\it ancestor},
and {\it descendant} 
over tree taxonomies naturally extends to DAG taxonomies.

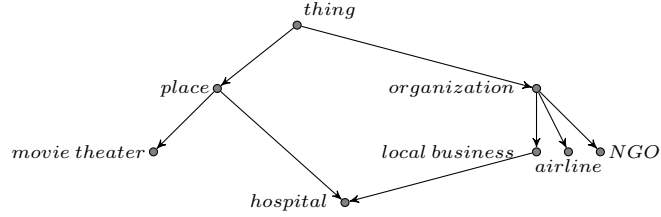
\begin{figure}[htb] %[b]{0.45\textwidth} %[h!]
\centering
%\begin{subfigure}[b]{0.45\textwidth}
\begin{tikzpicture}[->,>=stealth',scale=0.84]
\tikzstyle{every node}=[circle,draw,fill=black!50,inner sep=1pt,minimum width=3pt,font=\scriptsize]
\tikzstyle{every label}=[rectangle,draw=none,fill=none,font=\scriptsize]
\node (thing) [label=above right:$thing$] at (1.25,3) {};
\node (place) [label=left:$place$] at (0,2) {};
\node (org) [label=left:$organization\ \ \ $] at (5,2) {};
%\node (person) [label=left:$Person\ \ $] at (6,2) {};
\node (airline) [label=below:$airline$] at (5.5,1) {};
\node (NGO) [label=right:$NGO$] at (6,1) {};
\node (local-business) [label= left:$local\ business\ \ \ $] at (5,1) {};
\node (movie-theater) [label=left:$movie\ theater$] at (-1,1) {};
\node (hospital) [label=left:$hospital\ \ $] at (2,0.2) {};
\draw [->] (thing) -- (place);
\draw [->] (place) -- (movie-theater);
\draw [->] (place) -- (hospital);
\draw [->] (thing) -- (org); 
\draw [->] (org) -- (local-business);
\draw [->] (local-business) -- (hospital);
\draw [->] (org) -- (airline);
\draw [->] (org) -- (NGO);  
%\draw [->] (thing) -- (person);
\end{tikzpicture}
%\end{subfigure}
\caption{Fragments of {\it schema.org} taxonomy}
%from {\it mappings.dbpedia.org/server/ontology/classes}.}
\label{fig:schema-org}
\end{figure}

\subsection{Design Queriability}
Design $\mS$ over DAG taxonomy $\mX = $ $(R, \mC, \mR)$ is 
a non-empty subset of $\mC - \set{R}$.
Due to the richer structure of DAG taxonomies, 
designs over DAG taxonomies may improve the effectiveness
of answering queries in more ways than the ones 
over tree taxonomies.
For example, let data set $DS$ be in the domain of 
the DAG taxonomy in Figure~\ref{fig:schema-org}, and $\mS_1= $ $\{place,$ $organization\}$ be a design. 
The query interface will examine 
the documents that are organized under {\it organization} 
in $DS$ to answer queries about concept {\it airline}. 
As query interface does not have sufficient information 
to pinpoint the entities of concept {\it airline} in $DS$,
it may return some non-relevant answers for these queries, 
e.g., matching entities that are NGOs.
On the other hand, because concept {\it hospital} 
is a subclass of both {\it place} and
{\it organization}, its entities in $DS$ are annotated 
by both concepts {\it place} and {\it organization}.
By examining the entities that are annotated by both
{\it place} and {\it organization}, the query interface
is able to identify the instances of {\it hospital} in $DS$.
Thus, it will not return entities that belong to other concepts
when answering queries about instances of {\it hospital}.
Generally, the query interface may pinpoint instances of some concepts in the data set by considering the 
intersections of multiple concepts in a design
over a DAG taxonomy. Hence, subsets of a design may 
create partitions in a DAG taxonomy. 
Next, we extend the notion of partitions for designs 
over DAG taxonomies.
\begin{definition}
\label{def:direct-ans}
Let $\mS$ be a design over DAG taxonomy $\mX = (R, \mC, \mR)$, and let $C \in \mC$ be a leaf concept.
An ancestor $A$ of $C$ in $\mS$ is $C$'s direct ancestor iff one of the following properties hold.
\squishlisttwo
\item $A = C$.
\item For each $D\in\mS$, if $D$ is an ancestor of $C$ then $D$ is not a descendant of $A$.
\end{list}
\end{definition}

The {\it full-ancestor-set} of $C$ is the set of {\it all its direct ancestors}.
%\begin{definition}
%\label{def:direct-ans}
%Let $\mS$ be a design over DAG taxonomy $\mX = (R, \mC, \mR)$, and let $C \in \mC$ be a leaf concept.
%Also, let $\mathcal{A}$ be the set of all ancestors of $C$.
%The direct ancestor set of $C$ is a maximal set $O \subseteq \mS\cap \mathcal{A}$ that has exactly one of the following propertys.
%\begin{itemize}
%\item $O = \{C\}$.
%\item For any $A \in O$, no $(A,C)$-path intersects $O$ except at $A$.
%\end{itemize} 
%\end{definition}
For instance, the set $\{place,$ $organization\}$ is the full-ancestor-set of the 
concept {\it hospital} in design $\mS_1 =$ $\{place,$ $organization\}$, and the set 
$\{place,$ $local~business\}$ is the full-ancestor-set of the concept {\it hospital} in design
$\mS_2 =$ \{{\it place, organization, local business}\}
over the taxonomy in Figure~\ref{fig:schema-org}.

\begin{definition}
\label{def:partitionDAG}
Given design $\mS$ over DAG taxonomy $\mX =$ $(R, \mC, \mR)$, 
the partition of a set of concepts $\mathcal{D} \subseteq \mS$  is 
a set of leaf concepts $\mathcal{L} \subseteq \mC$ such that for every leaf concept
$L \in \mathcal{L}$, $\mathcal{D}$ is the full-ancestor-set of $L$. 

\end{definition}
\noindent
For instance, {\it hospital} belongs to 
the partition of $\{place,$\\ $organization\}$ 
in $\mS_1$. But, it does not belong to the partition of
$\set{place}$, since $\set{place}$ is not the full-ancestor-set of {\it hospital}.
The definitions of functions $\part$
and $\free$ over DAG taxonomies 
extend from their definitions over tree taxonomies.

%Let $\mT_1$ and $\mT_2$ be conceptual designs over tree taxonomy 
%$\mX_1 =$ $(R_1, \mC_1, \mR_1)$ and DAG taxonomy
%$\mX_2 =$ $(R_2, \mC_2, \mR_2)$, respectively. Also, 
%assume that we have $\card{\mC_1} = $ $\card{\mC_2}$ and 
%$\card{\mT_1} =$ $\card{\mT_2}$. 
%Because subsets whose size is more than one in $\mT_1$
%is not root of any partition in $\mX_1$, 
%the number of partitions created by $\mT_2$ over $\mX_2$  
%are generally more than the number of partitions 
%created by $\mT_2$ over $\mX_2$.

Similar to tree taxonomies, we define the frequency of partition $P$, denoted by $d(P)$, 
as the frequency of the intersection of concepts in its root.
Using a similar analysis to the one in Section~\ref{sec:design-queriability}, we 
define the queriability of conceptual design $S$ over 
DAG taxonomy $\mX = $ $(R, \mC, \mR)$ as follows.
\begin{equation}
QU(\mS) = \sum_{P \in \allparts(\mS)}{\sum_{C\in P} u(C)d(C)  \over d(P)} + {\sum_{C\in \free(\mS)} u(C)d(C)}.
\label{eq:queriability-DAG}
\end{equation}
The function $\allparts(\mS) \subseteq 2^{\mS}$ returns the collection of all full-ancestor-sets of $\mS$ in $\mX$.  
We remark that the size of $\allparts(\mS)$ is linear, since we have at most one new partition per any leaf concept in $\mX$.
\noindent  

%---------------------------------------------------------------
\subsection{Hardness of Cost-Effective Design Over DAG Taxonomies}
We define the CECD problem over DAG taxonomies similar to the 
CECD problem over tree taxonomies.
Following from the $\NP$-hardness results for CECD problem 
over tree taxonomy, 
CECD problem over DAG taxonomies is NP-hard. 
In this section, we prove that finding 
an approximation algorithm 
with a reasonably small bound on its approximation ratio 
for the problem CECD over DAG taxonomies 
is significantly hard. Unfortunately, this is true even for 
the special cases where concepts in the taxonomy 
have equal costs or the design is disjoint.  

We show that the CECD problem 
over a DAG taxonomy generalizes 
a hard problem in the approximation algorithms literature: \prob{Densest-$k$-Subgraph} \cite{Khot:FOCS:04}.
Given a graph $G=(V,E)$, in the 
the \prob{Densest-$k$-Subgraph} 
problem, the goal is to compute a subset $U\in V$ of size $k$ that 
maximizes the number of edges in the induced subgraph of $U$.
It is known that, unless $\mathbf{P}=\NP$, no polynomial time approximation 
scheme, i.e., PTAS, exists to compute the densest subgraph~\cite{Khot:FOCS:04}. 
Moreover, there are strong evidences that 
\prob{Densest-$k$-Subgraph} does not admit any approximation 
guarantee better than polylogarithmic factor \cite{Bhaskara:SODA:12,AroraDensest10}. 
The following theorem shows that approximating the $k$-densest subgraph reduces to approximating CECD.

\begin{lemma}\label{lem:annotate-improve}
Let $\mS$ be a design over taxonomy $\mX=(R,\mC,\mR)$ that is constructed from input $G=(V,E)$ as above. Let $S_v\in \mC\setminus \mS$ be a non-leaf concept. Then $QU(\mS\cup \set{S_v}) \geq QU(\mS)$. 
\end{lemma}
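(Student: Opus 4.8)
The plan is to prove a stronger monotonicity fact: adding any concept to a design can only shrink the intersection that governs each leaf's partition, and shrinking those intersections can only increase $QU$. First I would rewrite the objective as a single sum over leaf concepts. For a leaf concept $L$ and a design $\mS$, let $\delta_{\mS}(L)$ denote the frequency of the intersection of all concepts of $\mS$ that are ancestors of $L$, with the convention that the empty intersection is the whole data set, so that $\delta_{\mS}(L)=1$ when $L\in\free(\mS)$. The claim is that
\begin{equation*}
QU(\mS)=\sum_{L\text{ leaf}} \frac{u(L)\,d(L)}{\delta_{\mS}(L)}.
\end{equation*}
Granting this, the lemma follows once I show $\delta_{\mS\cup\set{S_v}}(L)\le \delta_{\mS}(L)$ for every leaf $L$, since every summand is nonnegative and weakly increases when its denominator weakly decreases.

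To justify the reformulation I would use the fact that the intersection over a full-ancestor-set equals the intersection over the \emph{entire} set of selected ancestors: a non-minimal selected ancestor is a superclass of a minimal one, hence a superset, and therefore does not affect the intersection. Consequently, for each partition $P\in\allparts(\mS)$ with full-ancestor-set $\mathcal{D}$, every leaf $L\in P$ satisfies $\delta_{\mS}(L)=d\bigl(\bigcap_{D\in\mathcal{D}}D\bigr)=d(P)$, so that leaf's term $u(L)d(L)/d(P)$ is exactly $u(L)d(L)/\delta_{\mS}(L)$; and each free leaf $L$ contributes $u(L)d(L)=u(L)d(L)/\delta_{\mS}(L)$ because $\delta_{\mS}(L)=1$. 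Summing over all leaves recovers Equation~\ref{eq:queriability-DAG}. (Leaves of zero frequency, where $\delta_{\mS}(L)=0$, contribute $0$ and may be discarded.)

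For the monotonicity of $\delta$, fix a leaf $L$ and set $I_{\mS}(L)=\bigcap_{A\in\mS:\ A\text{ is an ancestor of }L}A$, so $\delta_{\mS}(L)=d(I_{\mS}(L))$. If $S_v$ is not an ancestor of $L$, the set of selected ancestors of $L$ is unchanged and $\delta_{\mS\cup\set{S_v}}(L)=\delta_{\mS}(L)$. If $S_v$ is an ancestor of $L$, then the new intersection is $I_{\mS\cup\set{S_v}}(L)=I_{\mS}(L)\cap S_v\subseteq I_{\mS}(L)$, whence $\delta_{\mS\cup\set{S_v}}(L)=d(I_{\mS}(L)\cap S_v)\le d(I_{\mS}(L))=\delta_{\mS}(L)$, because frequency is monotone under set inclusion. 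In both cases the denominator weakly decreases, so each summand in the reformulated objective weakly increases and $QU(\mS\cup\set{S_v})\ge QU(\mS)$.

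The main obstacle, and the reason the reformulation is worth isolating, is that in Equation~\ref{eq:queriability-DAG} several leaves share a common denominator $d(P)$, so the effect of adding $S_v$ couples them: adding $S_v$ may simultaneously split one partition into several, move free leaves into partitions, and demote previously direct ancestors. Arguing term-by-term on the members of $\allparts(\mS)$ would force a delicate case analysis of how the full-ancestor-sets are rearranged. The per-leaf reformulation sidesteps this entirely by attributing to each leaf its own denominator $\delta_{\mS}(L)$, after which the only thing left to check is the elementary inclusion $I_{\mS}(L)\cap S_v\subseteq I_{\mS}(L)$. A secondary point to handle cleanly is the uniform treatment of free leaves, which the convention $\delta_{\mS}(L)=1$ absorbs, since every frequency lies in $[0,1]$ and the free term carries denominator one.
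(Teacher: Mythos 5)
Your argument is sound for the paper's \emph{general} DAG framework, where $d(P)$ genuinely is the document frequency of the intersection of the concepts in $P$'s full-ancestor-set over a fixed data set; there, the per-leaf rewriting plus inclusion-monotonicity of frequency gives a clean and more general proof. But Lemma~\ref{lem:annotate-improve} is a statement about the particular instance constructed in the reduction of Theorem~\ref{thm:log-apx}, and in that instance the partition frequencies are \emph{not} intersection frequencies: the construction decrees $d(p) = 1/(m\log n)$ if $\card{p}=1$ and $d(p)=1$ otherwise. This makes $d(p)$ a function of the cardinality of $p$, which depends on the current design, not only on the underlying intersection of concepts. Concretely, consider the partition rooted at $\set{S_v}$: under the design $\set{S_v}$ it contains $a_v$ and every $a_e$ with $e$ incident to $v$, so $d(p)=1$; under the design consisting of $S_v$ together with $S_u$ for all neighbors $u$ of $v$, the same full-ancestor-set $\set{S_v}$ roots the singleton $\set{a_v}$, so $d(p)=1/(m\log n)$. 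The same intersection is assigned two different frequencies by two different designs, so there is no fixed, design-independent, inclusion-monotone function $d(\cdot)$ on entity sets underlying this instance, and your key step --- $\delta_{\mS\cup\set{S_v}}(L)\le\delta_{\mS}(L)$ ``because frequency is monotone under set inclusion'' --- has nothing to rest on here.

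What the monotonicity of the per-leaf denominators amounts to in this instance is exactly the combinatorial claim you said your reformulation ``sidesteps entirely'': adding $S_v$ can only shrink (or preserve) each leaf's partition, so in particular a leaf lying in a singleton partition still lies in a singleton partition afterwards. That is the content of the paper's proof (each leaf ends up in a partition of smaller or equal size, and in this instance a leaf's contribution depends only on that size, increasing as the size shrinks). In the reduction instance the claim is easy to verify because no non-leaf concept is an ancestor of another: when $S_v$ is added, the set of selected ancestors of any leaf can only grow, every selected ancestor is automatically a direct ancestor, so full-ancestor-sets only gain elements and partitions only split; hence singletons stay singletons and every denominator weakly decreases. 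Your proof can be repaired by inserting this verification in place of the appeal to inclusion monotonicity; as written, it establishes a (correct) statement about a different class of instances than the one the lemma is actually about.
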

\begin{proof}
After annotating a non-leaf concept $S_v$, each leaf concept $C$ will be contained by a partition of either smaller or the same size. Since the contribution of a leaf concept $C$ to $QU$ only depends on the size of the partition contains $C$ and this dependence is a non-decreasing function in terms of the size of partition, after annotating $S_v$ the contribution of $C$ to $QU$ either increases or remains unchanged. Thus $QU(\mS \cup \set{S_v}) \geq QU(\mS)$.  
\end{proof}

\begin{figure}[htb]
\centering
\includegraphics[height=1.15in,width=3.5in]{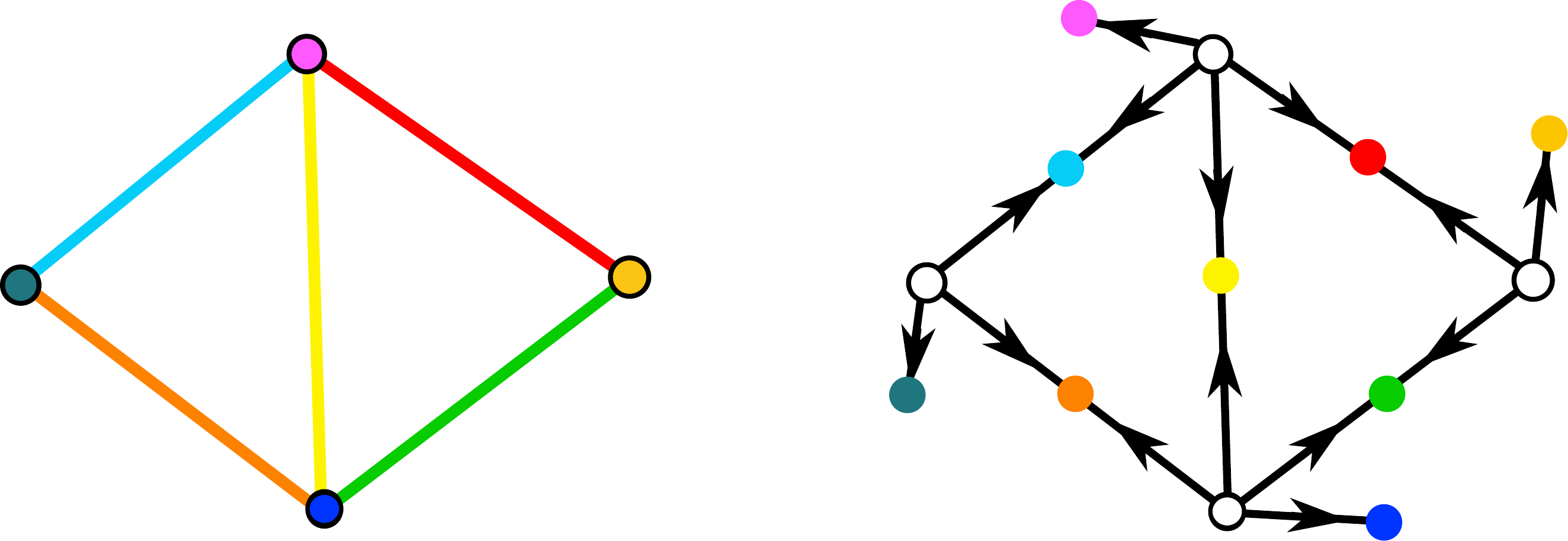}
\caption{Reducing the \prob{Densest-$k$-Subgraph} problem to CECD over DAG taxonomies where colors show correspondences in the reduction. The input graph for densest $k$-subgraph problem is shown on the left and its corresponding DAG taxonomies are in right. Colored vertices are leaf concepts and white vertices are non-leaf concepts in the DAG taxonomy.}
\label{fig:drawing}
\end{figure}

\noindent
The main result of this section is the following theorem. 
\begin{theorem}
\label{thm:log-apx}
A $(\log m)$-approximation algorithm for 
the CECD problem over DAG taxonomy with $m$ number of concepts implies that
there is an algorithm for the \prob{Densest-$k$-Subgraph} problem on $G=(V,E)$ 
with $n$ vertices that returns a $O(\log n)$-approximate solution. %\emph{[{\bf proof in Appendix~\ref{app:omitted}}]}
\end{theorem}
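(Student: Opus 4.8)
The plan is to exhibit a value-preserving reduction from \prob{Densest-$k$-Subgraph} to CECD over DAG taxonomies, so that running the assumed $(\log m)$-approximation on the constructed instance and reading off the chosen vertices yields an $O(\log n)$-approximate dense subgraph. Given $G=(V,E)$ with $\card{V}=n$, I would build a DAG taxonomy $\mX=(R,\mC,\mR)$ as follows: a root $R$; one non-leaf concept $S_v$ per vertex $v\in V$, placed as a child of $R$; and one leaf concept $C_e$ per edge $e=(u,v)\in E$, made a common child of exactly the two concepts $S_u$ and $S_v$. To keep error terms under control I also attach to each $S_v$ a private, high-frequency, essentially zero-popularity ``filler'' leaf, so that $d(S_v)$ is large while the only shared mass between $S_u$ and $S_v$ is $d(S_u\cap S_v)=d(C_{(u,v)})$. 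I set $u(C_e)=1$ and $d(C_e)=\delta$ (small), give each $S_v$ unit cost, price every leaf concept out of the budget (cost $>B$), and take the budget $B=k$. The total number of concepts is $m=O(n^2)$, hence $\log m=O(\log n)$.

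Next I establish the correspondence between $k$-vertex subsets and designs. Because leaf concepts cannot be afforded, every feasible design is a set of vertex concepts $\set{S_v:v\in U}$ with $\card{U}\le k$, and by Lemma~\ref{lem:annotate-improve} annotating one more vertex concept never decreases queriability, so I may pad any returned design up to $\card{U}=k$ for free. For such a design the full-ancestor-set of an edge concept $C_{(u,v)}$ is $\set{S_u,S_v}$ precisely when both $u,v\in U$, in which case $C_{(u,v)}$ sits in its own partition of frequency $d(C_{(u,v)})$ and contributes $u(C_{(u,v)})d(C_{(u,v)})/d(C_{(u,v)})=1$; it is a singleton $\set{S_u}$ when only one endpoint is chosen, contributing a term $\propto d(C_{(u,v)})/d(S_u)$ that the filler mass drives to near zero; and it is a free concept when neither endpoint is chosen, contributing $u(C_{(u,v)})d(C_{(u,v)})=\delta$. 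Summing over $E$ through the queriability formula~(\ref{eq:queriability-DAG}) gives
\[
QU(\set{S_v:v\in U}) \;=\; \card{E(U)} + \eta,
\]
where $E(U)$ is the edge set induced by $U$ and $\eta$ collects the boundary and free contributions.

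The step I expect to be the main obstacle is making $\eta$ provably negligible, since the two error groups are exactly what could otherwise swamp a small optimum. I would fix the filler frequencies so that each boundary term is $O(\delta/M)$ and choose $\delta<1/(4\card{E})$ so the free terms sum below $1/4$; together these force $\eta<1/2$ (scaling the common edge popularity upward pushes the relative error to $o(1)$ if a cleaner bound is wanted). Because $\card{E(U)}$ and $\mathrm{OPT}_{DkS}=\max_{\card{U}=k}\card{E(U)}$ are integers, rounding $QU$ then recovers $\card{E(U)}$ exactly, and the optimal design is (again by Lemma~\ref{lem:annotate-improve}) a full set of $k$ vertex concepts realizing the densest subgraph, so $\mathrm{OPT}_{CECD}=\mathrm{OPT}_{DkS}+\eta^{*}$ with $\eta^{*}<1/2$.

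Finally I transfer the ratio. Running the assumed $(\log m)$-approximation on $\mX$, padding to $\card{U}=k$, and extracting $U$ yields
\[
\card{E(U)} \;=\; QU(\mS)-\eta \;\ge\; \frac{1}{\log m}\,\mathrm{OPT}_{CECD}-\eta \;\ge\; \frac{1}{\log m}\,\mathrm{OPT}_{DkS}-o(1).
\]
Since $\log m=O(\log n)$ and the additive slack is strictly below one edge, integrality absorbs it (handling the trivial case $\mathrm{OPT}_{DkS}=0$ separately), giving a clean $O(\log n)$ multiplicative guarantee and proving the theorem. The only genuinely delicate point is the error bookkeeping above; the structural facts—leaf concepts priced out, Lemma~\ref{lem:annotate-improve} forcing a full $k$-vertex design, and the intersection $S_u\cap S_v$ isolating the single edge concept $C_{(u,v)}$—are immediate from the construction.
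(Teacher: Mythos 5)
Your construction is essentially the paper's reduction---vertices become unit-cost non-leaf concepts, each edge becomes a leaf shared by its two endpoint concepts, leaves are priced above the budget $B=k$---and your filler-leaf variant is arguably cleaner than the paper's gadget, which instead hangs an extra leaf $a_v$ under each $S_v$ and forces every singleton partition to have frequency $1/(\card{E}\log n)$, so that $QU \approx (t+r)\cdot\card{E}\log n$ with $t$ the induced edges and $r$ the fully-surrounded vertices. The genuine gap is in your final transfer step. From the assumed guarantee you obtain
\begin{equation*}
\card{E(U)} \;\geq\; \frac{\mathrm{OPT}_{DkS}}{\log m} \;-\; \eta, \qquad 0 \leq \eta < \tfrac{1}{2},
\end{equation*}
and the error $\eta$ of the \emph{returned} design is subtracted at full strength, not divided by $\log m$. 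Hence whenever $\mathrm{OPT}_{DkS} < (\log m)/2$ the right-hand side is nonpositive and integrality yields only $\card{E(U)} \geq 0$. This failure is not hypothetical: with your parameters ($\delta$ just under $1/(4\card{E})$), a design of $k$ vertex concepts inducing no edges but leaving most edge concepts free has queriability close to $1/4$, which exceeds $\mathrm{OPT}_{CECD}/\log m$ once $\mathrm{OPT}_{DkS} \lesssim (\log m)/4$ (e.g., a star graph with $k=2$, where $\mathrm{OPT}_{DkS}=1$); the CECD algorithm may legitimately return that design, and your extraction then outputs zero edges against a positive optimum, an unbounded ratio. Handling only $\mathrm{OPT}_{DkS}=0$ does not cover this regime, and your parenthetical fix of scaling the edge popularity upward does nothing, because $QU$ is linear in the popularities, so the signal and the error terms scale together.

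The missing ingredient is exactly what the paper supplies at the end of its proof: a baseline for the small-optimum regime. The paper runs the greedy algorithm that finds $k$ vertices inducing at least $k/2$ edges (solving exactly when $\card{E} < k/2$) and returns the better of greedy and the extracted set; this absorbs the $2k\log n$ additive term in its inequality $t \leq 2t'\log n + 2k\log n + 4$. Within your framework the repair is even simpler, and either of two options works: (i) return the better of your extracted set and any single edge completed arbitrarily to $k$ vertices, noting that when $\mathrm{OPT}_{DkS} < \log m$ a single edge already achieves ratio below $\log m = O(\log n)$; or (ii) shrink $\delta$ further so that the total error of \emph{every} design is below $1/(2\log m)$, after which $\card{E(U)} \geq (\mathrm{OPT}_{DkS}-\tfrac{1}{2})/\log m > 0$ forces $\card{E(U)} \geq 1$ by integrality, and a two-case analysis according to whether $\mathrm{OPT}_{DkS}$ is above or below $\log m$ gives the $O(\log n)$ ratio. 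With either repair, your argument is correct and matches the paper's in spirit while avoiding its bookkeeping with $t,r,t',r'$.
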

\begin{proof}
Given $G$ and $k$, we build an instance of the CECD over 
a DAG taxonomy as follows.
\iffalse
{\bf First we build a collection of sets $\{S_1, S_2, \dots, S_n\}$ 
corresponding to the vertices of $G$ with the following properties: (Ali: It is only making it hard to read. What if we remove them and directly explain the reduction)}
\begin{itemize}
 \item if $(i,j)$ is an edge in $G$ then $S_i\cap S_j \neq \emptyset$, $S_i\cap S_j \neq S_i$ and $S_i\cap S_j \neq S_j$.
 
 \item if $(i,j)$ is not an edge in $G$ then $S_i\cap S_j = \emptyset$.
 
 \item for any $i,j,k,\ell\in \{1,\dots,n\}$ if $S_i\cap S_j \neq \emptyset$ then $S_i\cap S_j \neq S_k\cap S_\ell$.
\end{itemize}
\fi
For each edge $e \in E$, we introduce a leaf concept $a_e$ 
and an for each vertex $v \in V$, 
we introduce a leaf concept $a_v$ and a non-leaf concept $S_v$ 
such that $S_v$ is the super class of $a_v$ and all the 
concepts corresponding to the incident edges to $v$ in $G$. 
Further, we set the budget $B$ to $k$, the cost of each non-leaf 
concept to $1$, and the cost of each leaf concept to $k+1$. 

Note that if we select $S_v$ and $S_u$ in the design and $(u,v) \in E$, then $a_{e}$ will be a singleton partition.
We also set the popularities and frequencies of all concepts in the taxonomy respectively to the same fixed values $u$ and $d$. Let $m$ be the number of edges in $G$ (or equivalently the number of leaf concepts in $\mC$) and $n$ be the number of vertices in $G$ (or equivalently the number of non-leaf concepts in $\mC$). For each partition $p\in \part(\mS)$ we set $d(p) = 1/(m\log n)$ if $\card{p} =1$ and $d(p)=1$ otherwise.

By Lemma~\ref{lem:annotate-improve}, annotating a non-leaf concept will not decrease the queribility of the design.
Since the leaf concepts are not affordable, and annotating a non-leaf concept will not decrease the total queribility, there exists an optimal design that annotates exactly $k$ non-leaf concepts. 
Note that in any design $\mS$ of size $k$, the contribution of any leaf concept in a non-singleton partition (partition of size greater than one) is exactly $u\cdot d$. 
In what follows we show that a $\log n$-approximation algorithm for the CECD problem implies a $O(\log n)$-approximation for the \prob{Densest-$k$-Subgraph} problem. To this end, by contradiction, let $\mA$ be a $\log n$-approximation algorithm of CECD problem.

Let $H_{\mS}$ be the set of vertices in $G$ of whose corresponding non-leaf concepts in $\mC$ are annotated in design $\mS$. $E(H_{\mS})$ denotes the set of edges with both endpoint in $H$ which corresponds to the set of edge-concepts of $\mC$ whose both non-leaf concepts corresponding to their endpoints are annotated by $\mS$. 
  
Let $\mS_{\opt}$ be an optimal solution of the CECD problem. Suppose that $QU(\mS_{\opt}) = (t  + r)\cdot m\log n + (m-t + n-r)$ where $t$ denotes the number of edges in $H_{\mS_{\opt}}$ and $r$ denotes the number of vertices in $H_{\mS_{\opt}}$ whose all incident edges are in $E(H_{\mS_{\opt}})$. It is straightforward to see that the corresponding leaf concepts to edges in $E(H_{\mS_{\opt}})$ and vertices with all incident edges in $E(H_{\mS_{\opt}})$ are the only singleton partitions with respect to design $\mS_{\opt}$. 

\iffalse
Assuming $k < n$ and since $G$ is connected any concept annotation 
of total benefit $k+t$ corresponds to $k$ vertices whose induced subgraph 
has $t$ edges. 
\fi

Now, let $\mS_{\mA}$ be the design returned by $\mA$ and similarly assume that $QU(\mS_{\mA}) = (t'+r')\cdot m\log n + (m-t' + n-r')$. Since $\mA$ is a $\log n$-approximation algorithm of the CECD problem, $(t  + r)\cdot m\log n + (m-t + n-r)$ is at most $\log n\cdot ( (t'+r')\cdot m\log n + (m-t' + n-r'))$. Thus,
\begin{align*}
t(m\log n -1) \leq t'(m\log^2n -1) + r'm\log^2 n + (m+n)\log n.
\end{align*}
Note that since the size of a feasible design is $k$, $r'\leq k$. Thus with some simplifications,
\begin{align*}
{tm\log n\over 2} \leq t'(m\log^2n) + km\log^2 n + 2m\log n,
\end{align*}
which implies that 
\begin{align}\label{eq:approx-ratio}
t \leq 2t'\log n + 2k\log n + 4 \leq 5\log n \cdot \max\set{k,t'}.
\end{align}

Now consider the greedy approach of \prob{Densest-$k$-Subgraph} problem such that in each step the algorithm picks a vertex $v$ and add it to the already selected set of vertices $S$ if $v$ has the maximum number of edges incident to $S$. It is easy to see that the greedy approach guarantee $k/2$ number of edges. Note that if the input graph has less than $k/2$ edges, we can solve \prob{Densest-$k$-Subgraph} problem optimally by picking all edges. 
Using the simple greedy approach and the result returned by $\mA$, we can find a set of $k$ vertices whose induced subgraph has at least $\max\set{k/2,t'}$ number of edges. Thus by (\ref{eq:approx-ratio}), we can find a $O(\log n)$-approximate solution of the \prob{Densest-$k$-Subgraph} problem which completes the proof. %of Theorem~\ref{thm:log-apx}. 
\end{proof}

\iffalse
Theorem~\ref{thm:log-apx} gives strong evidences 
that the existence of any poly-logarithmic 
approximation guarantee for CECD over DAG taxonomies 
is unlikely. Further, the following corollary 
immediately results from Theorem~\ref{thm:log-apx}.
\begin{corollary}
\label{corollary:DAG}
If there is a pseudo polynomial algorithm for
CECD over DAG taxonomies, there is a constant approximation 
algorithm with polynomial time complexity for the \prob{Densest $k$-Subgraph} problem. 
\end{corollary}
\noindent
Hence, there are strong evidences that it is not possible to
find a pseudo polynomial algorithm for CECD over DAG taxonomies.
\fi

Since the concepts in the instance of the 
CECD problem discussed in the proof of 
Theorem~\ref{thm:log-apx} have equal costs and
its optimal solution is disjoint, i.e., there is no 
directed path between any two of concepts in the design,   
the hardness results of Theorem~\ref{thm:log-apx}
is true even for the special cases of CECD problem over DAG taxonomies where the {\it concepts are equally costly} and/or 
{\it the problem has disjoint solutions}.

Figure~\ref{fig:DAG-instance} illustrates a simple example for which the \prob{level-wise} algorithm is arbitrarily 
worse than the optimal solution over DAG taxonomies. For the sake of 
simplicity, let $d$ and $u$ values be positive integers.
Let $u(C_4) = 4$, $d(C_4) = 1$, $u(C_5) = 1$, $d(C_5) = M$, $u(C_6) = M$, $d(C_6) = 1$, $u(C_7) = 1$ and $d(C_7) = M$.
Also, let $w(C_1) = w(C_2) = w(C_3) = 1$, and $B=2$.
The greedy algorithm first picks $C_1$ because of its high immediate queriability, and then $C_2$ or $C_3$ (but not both of them).
So its total queriability is $5$.  
On the other hand, by picking $C_2$ and $C_3$ one may 
acquire $C_6$ for free, whose queriability is $M$.
Since we can choose $M$ to be any number, 
the optimal solution can be arbitrarily better that the solution
delivered by the greedy approach.
Intuitively, the situation can be exacerbated to a 
large extent if the subset with large queriability
can be obtained by intersecting more than two concepts.

\begin{figure} %[b]{0.45\textwidth} %[h!]
\centering
\begin{tikzpicture}[->,>=stealth',scale=0.84]
\tikzstyle{every node}=[circle,draw,fill=black!50,inner sep=1pt,minimum width=3pt,font=\scriptsize]
\tikzstyle{every label}=[rectangle,draw=none,fill=none,font=\scriptsize]
\node (C0) [label=above:$C_0$] at (2,3) {};
\node (C1) [label=above:$C_1$] at (0,2) {};
\node (C2) [label=left:$C_2$] at (2,2) {};
\node (C3) [label=right:$C_3$] at (4,2) {};
\node (C4) [label=left:$C_4$; u:4 d:1] at (0,1) {};
\node (C5) [label=left:$C_5$; u:1 d:M] at (2,1) {};
\node (C6) [label=below:$C_6$; u:M d:1] at (3,1) {};
\node (C7) [label=right:$C_7$; u:1 d:M] at (4,1) {};
\draw [->] (C0) -- (C1);
\draw [->] (C0) -- (C2);
\draw [->] (C0) -- (C3);
\draw [->] (C1) -- (C4);
\draw [->] (C2) -- (C5);
\draw [->] (C2) -- (C6);
\draw [->] (C3) -- (C6); 
\draw [->] (C3) -- (C7);
\end{tikzpicture}
\caption{An instance of CECD problem over DAG taxonomy}
\label{fig:DAG-instance}
\end{figure}
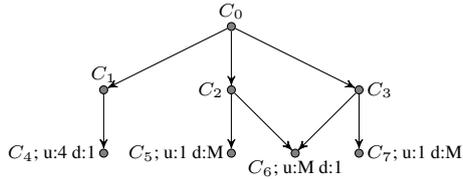

\newcommand{\mb}[1]{\textbf{#1}}
\newcommand{\mi}[1]{\textit{#1}}
\newcommand{\ml}[1]{\underline{#1}}
\newcommand{\mbi}[1]{\textit{\textbf{#1}}}
\newcommand{\mil}[1]{\underline{\textit{#1}}}
\newcommand{\mbl}[1]{\underline{\textbf{#1}}}
\newcommand{\mbil}[1]{\underline{\textit{\textbf{#1}}}}
\newcommand{\mo}[1]{\uwave{#1}}

%%===========================================================
\section{Experiments}
\label{sec:experiment}

%---------------------------------------
\subsection{Experiment Setting}
\label{sec:expsetting}

%-------------------
\begin{table*}
\caption{The sizes and heights of taxonomies and the sizes 
of corresponding datasets and query workloads.
%``\#selected queries" shows the numbers of queries with a single concept that remain after filtering out queries whose ranking quality is not improved by annotating all concepts in the corresponding taxonomy.
%``\#original queries" shows the numbers of original queries with a single concept.
}
%TODO: this sentence is too complex. Just put the number of original queries and the number of selected queries for the experiment and explain them. Do the same fot the queries with multiple concepts both in the text for T1, T2, T3, and in the table for the others.
%``\%original queries" shows the sizes of the query workloads as 
%the percentages of the numbers of queries before 
%filtering out queries whose ranking quality is not improved 
%by annotating all concepts in the corresponding taxonomies.}
%%$SC$ denotes queries with single concept 
%%and $MC$ denotes queries with more than one concepts}
\label{tab:expinfo}
\centering
{\sffamily
\begin{tabular}{r|c|c|c|c|c|c|c|c}
Taxonomy & T1 & T2 & T3 & T4 & T5 & T6 & T7 & T8 \\
\hline
\#Concept	& 10 & 17 & 17 & 28 & 63 & 185 & 279 & 2269 \\
\#Height	& 2 & 2 & 3 & 7 & 6 & 8 & 8 & 9 \\
%\#leaves	& 9 & 13 & 13 & 16 & 39 & 91 & 1388 & 14302 \\
\#Documents	& 68982 & 267653 & 88479 & 1470661 & 1470661 & 1470661 & 1470661 & 1470661 \\
\#queries	& 648 & 256 & 146 & 4219 & 4888 & 4728 & 5216 & 11156 \\
\end{tabular}}
\end{table*}
%-------------------

%\noindent {\bf Taxonomies and dataset}: 
\subsubsection{Taxonomies and datasets}

\noindent
{\bf Taxonomies:} We have extracted eight taxonomies from YAGO ontology 
version 2008-w40-2 \cite{Suchanek:YAGO} to validate 
our model and evaluate effectiveness and efficiency 
of our proposed algorithms.
YAGO organizes its concepts using subclass 
relationships in a DAG with a single root.
We have created the breath-first tree of YAGO 
and randomly selected the concepts from the tree for our taxonomies.
To validate our model, we have to  
compute and compare the effectiveness of 
answering queries using 
every feasible design over a taxonomy. 
Thus, we need tree taxonomies with 
relatively small number of concepts for our 
validation experiments.
We have extracted three tree taxonomies with relatively small numbers of 
nodes, called {\em T1}, {\em T2} and {\em T3}, 
to use in our validation experiments. 
The selected concepts for T1, T2 and T3 
are from level 3 to 6 of the full YAGO tree
which has a total of 9 levels.
%We do {\it not} select any concept from higher levels as they are too abstract. The concepts in levels lower than 6 in YAGO are too specific and rarely do they have any instance in our query workload.
We have further picked five other tree taxonomies with larger numbers of concepts, 
denoted as {\em T4}, {\em T5}, {\em T6}, {\em T7} and {\em T8}.
The concepts for T4 and T5 are selected from 
level 3 to 6 of the full YAGO tree. 
The concepts for T6 and T7 are randomly selected 
from levels 2 to 9 of the full YAGO tree.
T8 contains all concepts from the original YAGO tree that appear at least once
in the collection of English Wikipedia articles.
%T10 contains all concepts from the original YAGO tree.
We use all tree taxonomies to evaluate 
the effectiveness and efficiency of our proposed algorithms.
Table~\ref{tab:expinfo} shows the properties of these taxonomies.

\noindent
{\bf Datasets:} 
We have used the collection of English Wikipedia articles 
from the Wikipedia dump\footnote{\it dumps.wikimedia.org} 
of October 8, 2008 
that is annotated by concepts from YAGO ontology 
in our experiments \cite{Suchanek:YAGO}.
%The English Wikipedia articles collection from 
%Wikipedia dump is available from ``https://dump.wikimedia.org".
This collection originally contains 2,666,190 
articles of which 1,470,661 articles are annotated 
by at least a concept from YAGO ontology.
For each taxonomy in our set of taxonomies, 
we have extracted a subset of the original 
Wikipedia collection where each document 
contains at least a mention to an entity 
of a concept in the taxonomy.
We use each dataset in the experiments 
over its corresponding taxonomy. 
Table~\ref{tab:expinfo} shows the properties of these 
eight datasets.
The annotation accuracies of the concepts 
in selected taxonomies over these datasets are 
between 0.8 and 0.95.

%---------------------------------------
\subsubsection{Query Workload}
We use a subset of Bing\footnote{\it bing.com}, query log whose  
relevant answers are Wikipedia articles 
\cite{EvaluatingEvidences:Demidova}.
The relevant answers for each query in this query workload 
have been determined using the click-through information 
by eliminating noisy clicks. 
Each query contains between one to six keywords and has 
between one to two relevant answers 
with most queries having one relevant answer. 
Because the query log does not have the concepts behind 
its queries, we adapt an automatic approach to find 
%the most specific concept associated with each query.
the leaf concept from the taxonomy associated with each query.
We label each query by the concept of 
the matching instance in its relevant answer(s).
Then, we select queries labeled with a {\em single} concept.
%----------
\ignore{
If a query is labeled with $n>1$ concepts, we create $n$ copies of that query and label each of them with one of the $n$ concepts separately.
} %explaining how to handle about multiple concept 
%----------
Using this method, we create a query workload per 
each of our datasets. 
%\redtext{
%, and each query is labeled with a single concept.
%}
It is well known that the effectiveness of 
answering some queries may not be improved by 
annotating the dataset \cite{MoreSenses:Sanderson}.
For instance, all candidate answers for a query may 
contain mentions to the entities of the query concept.
To reasonably evaluate our algorithms, 
we have ignored the queries whose rankings 
remain the same over the unannotated version 
and the version of the dataset where 
all concepts in the taxonomy are annotated.
Table~\ref{tab:expinfo} shows the information about 
our query workloads. 
\ignore{
Table~\ref{tab:expinfo} shows the original number of 
queries with a single concept and 
the queries remaining after filtering out the
ones whose ranking quality is not improved by annotating
all concepts in the corresponding taxonomy.
}

We estimate the popularities of concepts for each taxonomy by sampling a small subset 
of randomly selected queries from their corresponding query workloads. 
We compute the popularity of each concept using estimation error rate of 
5\% under the 95\% confidence level. 
%The number of sampled queries are between 100 and 350 for each taxonomy.
The number of sampled queries are between 100 and 500 for each taxonomy.
%-------------------
\ignore{
We use two-fold cross validation to calculate 
the popularities, $u$, of concepts in each taxonomy 
over their corresponding query workload.
} %2-fold explanation
%-------------------
Because some concepts in a taxonomy may not appear 
in its query workload, we smooth the popularity of 
a concept $C$, $u(C)$, 
using the Bayesian $m$-estimate method \cite{IRBook}: 
$\hat{u}(C) = \frac{\hat{P}(C|QW)+mp}{m+\sum_C{\hat{P}(C|QW)}}$,
where $\hat{P}(C|QW)$ is the probability that 
$C$ occurs in the query workload $QW$ and 
$p$ denotes the prior probability. 
We set the value of the smoothing parameter, $m$, 
to 1 and use a uniform distribution for all the 
prior probabilities.

%---------------------------------------
\subsubsection{Query Interface}
We index our datasets using Lucene\footnote{\it lucene.apache.org}. 
%(\emph{lucene$\bdot$apache$\bdot$org}). 
Given a query, we rank its candidate answers using 
BM25 ranking formula, which is shown to be more effective 
than other similar document ranking methods \cite{IRBook}.
Then, we apply the information about the concepts 
in the query and documents to return the answers
whose matching instances have the same concept as 
the concept of the query. 
If the concept in the query has not been annotated in %from 
the collection, the query interface returns the list of 
documents ranked by BM25 method without any modification. 
We have implemented our query interface and algorithms in 
Java 1.7 and performed our experiments on a Linux server 
with 100 GB of main memory and two quad core processors.

%---------------------------------------
\subsubsection{Effectiveness Metric}

Because all queries in our query workloads have
one or two relevant answers, we measure the
ranking quality of answering queries over a dataset
using mean reciprocal rank (MRR) \cite{IRBook}.
MRR is an inverse of the rank of the first relevant answer 
in the returned list of answers.
MRR is used to measure the effectiveness of results for queries 
that have a small number of relevant answers \cite{IRBook}.
Since most queries in our query workload have a single relevant answer, 
MRR is an appropriate metric to measure the effectiveness of their results.

We also report the precision at top 3 answers ($p@3$) for our results.
Because all queries in our query workloads have
one or two relevant answers, we measure the
ranking quality of answering queries over a dataset
using precision at top 3 answers ($p@3$) \cite{IRBook}.
One may also use a larger number of top answers 
to measure the precision and ranking quality of the results. 
However, since the total numbers of relevant answers 
in our query workloads are less than three, 
the maximum possible value for $p@k$ is at most $\frac{2}{k}$. 
Consequently, the reported values will be very small when $k$ is large. 
%The maximum possible value of the average $p@3$ for our query workload is close to 0.4.
Thus, using larger values of $k$ makes comparing and 
analyzing the effectiveness of query results rather hard. 
Further, using larger values of $k$ may hide the 
advantages of some query interfaces.
For instance, let a query have one relevant answer. 
Assume that the query interface $I1$ places the relevant 
answer to the query in the first position and the query 
interface $I2$ places it in the 10th position of 
their returned results. 
Intuitively, the result of $I1$ is more effective than $I2$. 
The values of $p@3$ are $0.33$ for $I1$ and $0$ 
for $I2$; however, 
the values of $p@10$ for both of them will be equal.
%in our technical report \cite{DBLP:journals/corr/VakilianCTN15}.
%Generally, the results of MRR and $p@3$ follow similar trends.

We measure the statistical significance of our results using 
the paired-$t$-test at a significant level of 0.05.

%---------------------------------------
\subsubsection{Cost Models}

We use three models for generating costs of concept annotation.
First, we assign a randomly generated cost to each concept in a taxonomy.
The results reported for this model are averaged over 5 sets of random cost assignments per budget. 
We call this model {\it random cost} model. 
Second, when there is not any reliable estimation available for the cost of annotating concepts, 
an organization may assume that all concepts are equally costly. 
Hence, in our second cost model, we assume that 
all concepts in the input taxonomy have equal cost.
We name this model {\it uniform cost} model.
These two cost models have also been used to model the costs 
of large-scale data curation \cite{Dong:2012:LMS:2448936.2448938,Termehchy:SIGMOD:14,
Rekatsinas:2014:CSF:2588555.2610504}. 
Lastly, an organization may also assume that 
the cost of a concept depends on how likely
the concept appears in a collection.
Hence, in our third cost model, 
we assume that a cost of a concept 
is proportional to its frequency.
We call this model {\it frequency-based cost} model.
%---------------
\ignore{
Lastly, sometimes, annotating a concept in a higher level of a taxonomy
can be easier than annotating a concept in a lower level.
For instance, annotating entities of concept {\em person} 
is easier than annotating entities of concept {\em criminal}.
Hence, in our fourth model, we set the cost of a concept 
to be proportional to its height in a taxonomy.
More specifically, we set a cost of a concept $c$ 
as $w(c) = w_1 + w_2 \times height(c)$.
In the formula, $height(c)$ denote a longest distance from a concept $c$ 
to any of its descendants that are leaves. 
$w_1$ and $w_2 \times height(c)$ represent a base cost 
and variable cost based on height when annotating $c$, respectively. 
We randomly generate both $w_1$ and $w_2$ per set of cost assignment.
The results reported for this model are averaged over 5 sets of cost assignments. % per budget.
We call this model {\it height-based cost} model.
}
%---------------
We use a range of budgets between 0 and 1 with 
a step size of 0.1 where 1 means sufficient 
budget to annotate all leaf concepts in a taxonomy 
and 0 means no budget is available.

%------------------------------------------------------------
\subsection{Validating Queriability Function}
\label{sec:expvalidate}

\noindent {\bf Oracle:} 
Given a fixed budget, Oracle enumerates all 
feasible designs over the input taxonomy. Given an effectiveness metric, such as MRR, %$p@3$, 
for each design, it computes the average the effectiveness metric for all queries in the query workload 
over the dataset annotated by the design.
It then returns the design with maximum value of the average effectiveness metric. 
%the respective effectiveness metric being measured.
%YODS's comment: we do have oracle results with MRR for single query case
\ignore{
Since oracle does not use any heuristic to predict the amount of improvement in effectiveness by a design, 
we use it to evaluate the accuracy of other methods that predict the degree of 
improvement in effectiveness achieved by a design.
Since each run of Oracle tests all possible designs over a taxonomy 
and runs all queries in the query workload per design, 
it may take several days to finish each run of Oracle over a taxonomy. 
Hence, we run Oracle only for the value of $p@3$.
}

\noindent {\bf Popularity Maximization ({\it PM}):} 
Following the traditional approach towards 
conceptual design for databases, 
one may select concepts in a design that are 
{\it more important} for users \cite{DBBook}.
Hence, we implement an algorithm, called {\em PM}, 
that enumerates all feasible designs, %given a budget, 
such as $\mS$, in a taxonomy and returns the one with 
the maximum value of 
$\sum_{p \in \part(\mS)} \sum_{C \in p}$ $u(C)\pr(p)$. 
This design contains the concepts that are 
more frequently queried by users and 
also annotated more accurately. 

\noindent {\bf Queriability Maximization ({\it QM}):} 
QM enumerates all feasible designs over the input taxonomy 
and returns the one with the maximum queriability as 
computed in Section~\ref{sec:design-queriability}.

Because we would like to explore how accurately PM and QM 
predict the amount of improvement in the effectiveness 
of answering queries by a design, 
we assume that these algorithms have complete information 
about the popularities and frequencies of concepts. 
Since Oracle, PM and QM algorithms enumerate 
all feasible designs, it is not possible to run them 
over large taxonomies. 
Hence, we run these algorithms over small taxonomies, 
i.e. T1, T2 and T3.  

Table~\ref{tab:model-others} illustrates the average MRR
achieved by Oracle, QM and PM over taxonomies T1, T2 and T3
for various budgets.
We do not report the values of average MRR for budgets 
greater than 0.7 for T1 and T2 and 0.6 for T3 
because all algorithms are equally effective.
%We do not report the values of average $p@3$ for
%budgets greater than 0.7 for T3
%because they are equal to the average $p@3$ for budgets $0.7$.
The values of MRR show at budget 0.0 is the one achieved by 
BM25 ranking without annotating any concept in the datasets.
We note that there is no improvement in average MRR over
taxonomy T1 for budget 0.1 under uniform cost 
because each concept in the taxonomy costs more than 0.1.

%-------------------
\begin{table*}
\caption{Average $p@3$ for Oracle, PM and QM over T1, T2 and T3. 
Statistically significant differences between PM and QM and between Oracle and QM are marked in bold and italic, respectively. 
$B$ denotes a given budget.}
\label{tab:model-result}
\centering
{\scriptsize \sffamily
\begin{tabular}{r|c|ccc|ccc|ccc}%{l|l*{2}{|ccc}}
\multicolumn{1}{c}{} & & \multicolumn{3}{c}{Uniform Cost} & \multicolumn{3}{|c}{Random Cost} & \multicolumn{3}{|c}{Frequency-based Cost} \\
\cline{3-11}
\multicolumn{1}{c}{} & $B$ & Oracle & QM & PM & Oracle & QM & PM & Oracle & QM & PM \\
\hline\hline
\multirow{9}{*}{T1} %3
&	0.0	&	\multicolumn{9}{|c}{0.089} \\ \cline{3-11}											
&	0.1	&	0.089	&	0.089	&	0.089	&	0.115	&	0.115	&	0.107	&	0.180	&	0.180	&	0.180	\\
&	0.2	&	0.149	&	\mb{0.149}	&	0.089	&	0.159	&	\mb{0.159}	&	0.102	&	0.180	&	0.180	&	0.180	\\
&	0.3	&	0.168	&	\mb{0.168}	&	0.091	&	0.170	&	\mb{0.170}	&	0.094	&	0.180	&	0.180	&	0.180	\\
&	0.4	&	0.183	&	\mb{0.177}	&	0.106	&	0.183	&	\mb{0.177}	&	0.116	&	0.180	&	0.180	&	0.180	\\
&	0.5	&	0.192	&	\mb{0.192}	&	0.166	&	0.192	&	\mb{0.192}	&	0.146	&	0.180	&	0.180	&	0.180	\\
&	0.6	&	0.194	&	\mb{0.193}	&	0.185	&	0.194	&	\mb{0.193}	&	0.177	&	0.180	&	0.180	&	0.180	\\
&	0.7	&	0.195	&	0.195	&	0.194	&	0.195	&	0.194	&	0.190	&	0.180	&	0.180	&	0.180	\\
%&	0.8	&	0.195	&	0.195	&	0.195	&	0.195	&	0.195	&	0.195	&	0.180	&	0.180	&	0.180	\\
%&	0.9	&	0.195	&	0.195	&	0.195	&	0.195	&	0.195	&	0.195	&	0.180	&	0.180	&	0.180	\\
\hline
\multirow{9}{*}{T2} %5
&	0.0	&	\multicolumn{9}{|c}{0.200} \\ \cline{3-11}
&	0.1	&	0.241	&	0.232	&	0.234	&	\mi{0.252}	&	0.234	&	0.233	&	0.254	&	0.249	&	0.249	\\
&	0.2	&	0.275	&	\mb{0.272}	&	0.245	&	0.283	&	0.276	&	0.246	&	0.292	&	0.289	&	0.289	\\
&	0.3	&	\mi{0.303}	&	\mb{0.285}	&	0.247	&	\mi{0.309}	&	\mb{0.298}	&	0.248	&	0.303	&	0.293	&	0.293	\\
&	0.4	&	0.318	&	\mb{0.315}	&	0.250	&	0.318	&	\mb{0.316}	&	0.264	&	0.322	&	0.322	&	0.322	\\
&	0.5	&	0.320	&	\mb{0.318}	&	0.258	&	0.322	&	\mb{0.319}	&	0.278	&	0.323	&	0.322	&	0.322	\\
&	0.6	&	0.323	&	\mb{0.322}	&	0.290	&	0.323	&	0.321	&	0.299	&	0.323	&	0.323	&	0.323	\\
&	0.7	&	0.326	&	0.324	&	0.326	&	0.326	&	0.323	&	0.314	&	0.323	&	0.323	&	0.323	\\
%&	0.8	&	0.326	&	0.326	&	0.326	&	0.326	&	0.325	&	0.326	&	0.323	&	0.323	&	0.322	\\
%&	0.9	&	0.326	&	0.326	&	0.326	&	0.326	&	0.326	&	0.326	&	0.324	&	0.324	&	0.324	\\
\hline
\multirow{7}{*}{T3} %1
&	0.0	&	\multicolumn{9}{|c}{0.171} \\ \cline{3-11}
&	0.1	&	0.222	&	0.210	&	0.208	&	0.240	&	0.235	&	0.235	&	0.210	&	0.210	&	0.210	\\
&	0.2	&	0.260	&	0.249	&	0.258	&	0.271	&	0.265	&	0.258	&	0.231	&	0.231	&	0.231	\\
&	0.3	&	0.281	&	0.269	&	0.258	&	0.292	&	\mb{0.289}	&	0.278	&	0.265	&	0.249	&	0.249	\\
&	0.4	&	0.304	&	0.304	&	0.288	&	0.304	&	\mb{0.304}	&	0.293	&	0.288	&	0.288	&	0.288	\\
&	0.5	&	0.306	&	0.304	&	0.299	&	0.306	&	0.304	&	0.303	&	0.304	&	0.304	&	0.304	\\
&	0.6	&	0.306	&	0.306	&	0.306	&	0.306	&	0.306	&	0.306	&	0.304	&	0.304	&	0.304	\\
%&	0.7	&	0.306	&	0.306	&	0.306	&	0.306	&	0.306	&	0.306	&	0.306	&	0.306	&	0.306	\\
%&	0.8	&	0.306	&	0.306	&	0.306	&	0.306	&	0.306	&	0.306	&	0.306	&	0.306	&	0.306	\\
%&	0.9	&	0.306	&	0.306	&	0.306	&	0.306	&	0.306	&	0.306	&	0.306	&	0.306	&	0.306	\\
\end{tabular}}
\end{table*}
%-------------------
\begin{table*}
\caption{Average MRR for Oracle, PM and QM over T1, T2 and T3. 
Statistically significant differences between PM and QM and between Oracle and QM are marked in bold and italic, respectively.
$B$ denotes a given budget.}
\label{tab:model-others}
\centering
{\scriptsize \sffamily
\begin{tabular}{r|c|ccc|ccc|ccc}%{l|l*{2}{|ccc}}
\multicolumn{1}{c}{} & & \multicolumn{3}{c}{Uniform Cost} & \multicolumn{3}{|c}{Random Cost} & \multicolumn{3}{|c}{Frequency-based Cost} \\
\cline{3-11}
\multicolumn{1}{c}{} & $B$ & Oracle & QM & PM & Oracle & QM & PM & Oracle & QM & PM \\
\hline\hline
\multirow{8}{*}{T1} %3
&	0.0	&	\multicolumn{9}{|c}{0.187} \\ \cline{3-11}											
&	0.1	&	0.187	&	0.187	&	0.187	&	0.259	&	0.255	&	0.239	&	0.475	&	0.475	&	0.475	\\
&	0.2	&	0.362	&	\mb{0.362}	&	0.197	&	0.385	&	\mb{0.384}	&	0.226	&	0.475	&	0.475	&	0.475	\\
&	0.3	&	0.415	&	\mb{0.406}	&	0.203	&	0.424	&	\mb{0.417}	&	0.212	&	0.475	&	0.475	&	0.475	\\
&	0.4	&	0.459	&	\mb{0.459}	&	0.227	&	0.461	&	\mb{0.461}	&	0.262	&	0.475	&	0.475	&	0.475	\\
&	0.5	&	0.492	&	\mb{0.492}	&	0.400	&	0.492	&	\mb{0.492}	&	0.341	&	0.475	&	0.475	&	0.475	\\
&	0.6	&	0.501	&	\mb{0.501}	&	0.444	&	0.501	&	\mb{0.499}	&	0.426	&	0.475	&	0.475	&	0.475	\\
&	0.7	&	0.507	&	0.507	&	0.497	&	0.507	&	\mb{0.504}	&	0.476	&	0.475	&	0.475	&	0.475	\\
%&	0.8	&	0.507	&	0.507	&	0.507	&	0.507	&	0.507	&	0.505	&	0.475	&	0.475	&	0.475	\\
%&	0.9	&	0.507	&	0.507	&	0.507	&	0.507	&	0.507	&	0.507	&	0.475	&	0.475	&	0.475	\\
\hline
\multirow{8}{*}{T2} %5
&	0.0	&	\multicolumn{9}{|c}{0.342} \\ \cline{3-11}											
&	0.1	&	0.504	&	0.479	&	0.504	&	0.515	&	0.471	&	0.482	&	0.598	&	0.598	&	0.598	\\
&	0.2	&	0.577	&	0.543	&	0.551	&	0.616	&	0.586	&	0.559	&	0.662	&	0.662	&	0.662	\\
&	0.3	&	\mi{0.641}	&	\mb{0.629}	&	0.574	&	0.677	&	\mb{0.661}	&	0.576	&	0.677	&	0.674	&	0.674	\\
&	0.4	&	0.729	&	\mb{0.729}	&	0.586	&	0.725	&	\mb{0.725}	&	0.616	&	0.741	&	0.741	&	0.741	\\
&	0.5	&	0.745	&	\mb{0.745}	&	0.615	&	0.744	&	\mb{0.742}	&	0.652	&	0.747	&	0.747	&	0.747	\\
&	0.6	&	0.751	&	\mb{0.751}	&	0.647	&	0.754	&	0.754	&	0.695	&	0.748	&	0.748	&	0.748	\\
&	0.7	&	0.763	&	0.763	&	0.759	&	0.763	&	0.758	&	0.732	&	0.748	&	0.748	&	0.748	\\
%&	0.8	&	0.764	&	0.764	&	0.764	&	0.764	&	0.763	&	0.763	&	0.751	&	0.748	&	0.751	\\
%&	0.9	&	0.764	&	0.764	&	0.764	&	0.764	&	0.764	&	0.764	&	0.763	&	0.763	&	0.763	\\
\hline
\multirow{7}{*}{T3} %1
&	0.0	&	\multicolumn{9}{|c}{0.322} \\ \cline{3-11}											
&	0.1	&	0.469	&	0.469	&	0.453	&	0.528	&	0.521	&	0.514	&	0.447	&	0.447	&	0.447	\\
&	0.2	&	0.595	&	0.594	&	0.579	&	0.646	&	\mb{0.629}	&	0.587	&	0.531	&	0.531	&	0.531	\\
&	0.3	&	0.680	&	\mb{0.679}	&	0.600	&	0.714	&	\mb{0.712}	&	0.660	&	0.594	&	0.594	&	0.594	\\
&	0.4	&	0.745	&	\mb{0.734}	&	0.685	&	0.747	&	0.739	&	0.711	&	0.725	&	0.725	&	0.725	\\
&	0.5	&	0.754	&	0.754	&	0.741	&	0.758	&	0.757	&	0.748	&	0.734	&	0.734	&	0.734	\\
&	0.6	&	0.760	&	0.760	&	0.760	&	0.760	&	0.760	&	0.760	&	0.734	&	0.734	&	0.734	\\
%&	0.7	&	0.760	&	0.760	&	0.760	&	0.760	&	0.760	&	0.760	&	0.739	&	0.739	&	0.739	\\
%&	0.8	&	0.760	&	0.760	&	0.760	&	0.760	&	0.760	&	0.760	&	0.745	&	0.739	&	0.739	\\
%&	0.9	&	0.760	&	0.760	&	0.760	&	0.760	&	0.760	&	0.760	&	0.754	&	0.754	&	0.754	\\
\end{tabular}}
\end{table*}
%-------------------

% ? vs oracle
Over all taxonomies and cost models, the designs selected by QM 
deliver close MRR values to the ones selected by Oracle.
There are a few cases where the results of QM are significantly 
worse than the results of Oracle.
For instance, consider the results of QM and Oracle for 
budget 0.3 over taxonomy T2.
In T2, concept {\em writing} 
is the parent of a leaf concept {\em dramatic composition}
and a couple other leaf concepts whose popularities are much less than that of 
{\em dramatic composition}.
QM picks a design that contains {\em dramatic composition}. 
This design will deliver the highest values of MRR
for queries with concept {\em dramatic composition}, 
but it does {\it not} help improving the values of MRR
for queries whose concepts are other children of {\em writing}
over the unannotated dataset.
%YODS'comment: I changed the explanation here as your rewrite sentence shows wrong explanation.
%\sout{%this text are striked through.
%That is, QM picks relatively more popular concepts, thus, improves
%mainly the effectiveness of queries with more popular concepts.}
That is, QM picks a relatively less popular concept, but maximizes 
the improvement of the effectiveness for queries with this concept.
On the other hand, Oracle selects {\em writing}
instead of {\em dramatic composition}.
Intuitively, this design improves the values of MRR  
for queries with concept {\em dramatic composition} 
less than the design selected by QM.
However, this design will improve the values of MRR  
for queries with other child concepts of {\em writing}.
For this dataset, selecting {\em writing} helps 
improving the values of MRR for queries 
with concept {\em dramatic composition} 
as equal as selecting {\em dramatic composition}.
Because, the design selected by QM is not able to 
improve the effectiveness of answering queries whose concepts are 
other children of {\em writing},
QM is less effective than Oracle.
We have observed a similar behavior for other cases 
when the results of QM are significantly worse than Oracle.
This observation suggests that if the budget is relatively small, 
it is sometimes better to annotate rather general concepts.
%With this choice, the resulting design can improve 
%the effectiveness of answering larger number of queries.

We must note that, for frequency-based cost, 
Oracle, QM and PM are equally effective.
This is because, under this cost model, 
the leaf concepts are cheaper than any internal concept node.
Hence, every algorithm chooses leaf concepts 
and returns the same design.
Furthermore, Oracle, QM and PM return the same values of average MRR 
over T1 for all budgets. 
This is because these algorithms can select 8 out of 9 leaf concepts 
using budget of 0.1, and the returned designs answer queries 
with those concepts effectively.
However, the remaining leaf concept, e.g., {\em person}, 
costs more than 0.9 because of its 92\% frequency in T1.
Because each internal concepts either costs more than {\em person}
or has all of their leaf descendants included in a design, 
these algorithms choose all leaf concepts except {\em person}
and do not include any internal concept.
Therefore, over T1, the algorithms are equally effective 
across all test budgets under frequency-based cost model.

% PM vs QM
Nevertheless, 
the results from Table~\ref{tab:model-others} indicate that 
QM delivers designs that improve the average MRR %$p@3$
of answering queries more than the ones picked by PM. 
Overall, PM annotates more general concepts from the taxonomy
to improve the effectiveness of larger number of queries. 
Hence, to answer a query, the query interface often has to 
examine the documents annotated by an ancestor of the query concept.
As this set of documents contain many answers 
whose concepts are different from the query concept, 
the query interface is usually not able to improve
the value of MRR for a query significantly. 
QM selects the designs with relatively less general concepts.
Although its designs may not improve the ranking quality of 
every query, the designs significantly improve the ranking 
quality of queries whose concepts belong to the selected designs.

Table~\ref{tab:model-result} shows the average $p@3$ 
delivered by the designs returned by Oracle, QM and PM
over taxonomies T1, T2 and T3 
under uniform and random cost models for various budgets.
%We do not report the results for budgets greater than 0.6 for T3 
%because they are equal to the results for budgets 0.6.
%Similar to the observation over $p@3$, 
Overall, 
QM delivers designs with $p@3$ values close to the ones selected by Oracle.
Also, the average $p@3$ values for the designed selected by QM 
are generally higher than those delivered by PM.

%------------------------------------------------------------
\subsection{Effectiveness of the Proposed Algorithm}
\label{sec:approxalg}

%-------------------
\begin{table*}
\caption{Average $p@3$ for APM, APM-L, LW and DP$_{\epsilon=0.05}$
over T1, T2, T3, T4 and T5.
Statistically significant difference between APM and LW, 
between APM and DP, between DP and LW, between APM-L and LW, and between DP and APM-L 
are in italic, bold, underline, apostrophe(') and star(*), respectively.
$B$ denotes a given budget.}
\label{tab:approx-prec3}
\centering
{\scriptsize \sffamily
\begin{tabular}{r|c*{3}{|l@{\hspace{0.7em}}l@{\hspace{0.7em}}l@{\hspace{0.7em}}l}}
\multicolumn{1}{c}{} & & \multicolumn{4}{c}{Uniform Cost} & \multicolumn{4}{|c}{Random Cost} & \multicolumn{4}{|c}{Frequency-based Cost} \\
\cline{3-14}
\multicolumn{1}{@{\hspace{0em}}c}{} & $B$ & APM & APM-L & LW & DP & APM & APM-L & LW & DP & APM & APM-L & LW & DP \\
\hline\hline
\multirow{9}{*}{T1}														
&	0.1	&	0.089	&	0.088	&	0.089	&	0.089	&	0.107	&	0.113	&	0.113	&	0.115	&	0.180	&	0.180	&	0.180	&	0.180	\\
&	0.2	&	0.089	&	0.103	&	\mi{0.103}	&	\mbl{0.149}*	&	0.092	&	0.137	&	\mi{0.137}	&	\mbl{0.159}*	&	0.180	&	0.180	&	0.180	&	0.180	\\
&	0.3	&	0.103	&	0.164	&	\mi{0.164}	&	\mb{0.168}	&	0.125	&	0.161	&	0.161	&	\mbl{0.170}*	&	0.180	&	0.180	&	0.180	&	0.180	\\
&	0.4	&	0.164	&	0.183	&	\mi{0.183}	&	\mb{0.177}	&	0.149	&	0.183	&	\mi{0.183}	&	\mb{0.177}	&	0.180	&	0.180	&	0.180	&	0.180	\\
&	0.5	&	0.183	&	0.192	&	\mi{0.192}	&	\mb{0.192}	&	0.175	&	0.192	&	\mi{0.192}	&	\mb{0.189}	&	0.180	&	0.180	&	0.180	&	0.180	\\
&	0.6	&	0.192	&	0.193	&	0.193	&	0.193	&	0.188	&	0.192	&	0.192	&	0.189	&	0.180	&	0.180	&	0.180	&	0.180	\\
&	0.7	&	0.193	&	0.193	&	0.193	&	0.193	&	0.192	&	0.193	&	0.193	&	0.193	&	0.180	&	0.180	&	0.180	&	0.180	\\
&	0.8	&	0.193	&	0.195	&	0.195	&	0.193	&	0.193	&	0.193	&	0.193	&	0.193	&	0.180	&	0.180	&	0.180	&	0.180	\\
&	0.9	&	0.195	&	0.195	&	0.195	&	0.195	&	0.193	&	0.195	&	0.195	&	0.195	&	0.180	&	0.180	&	0.180	&	0.180	\\
\hline														
\multirow{9}{*}{T2}														
&	0.1	&	0.234	&	0.232	&	0.232	&	0.232	&	0.231	&	0.233	&	0.233	&	0.229	&	0.247	&	0.249	&	0.247	&	0.249	\\
&	0.2	&	0.242	&	0.245	&	0.245	&	0.245	&	0.255	&	0.262	&	0.262	&	0.271	&	0.251	&	0.289	&	\mi{0.289}	&	\mb{0.289}	\\
&	0.3	&	0.253	&	0.285	&	\mi{0.285}	&	\mb{0.285}	&	0.264	&	0.290	&	\mi{0.290}	&	\mb{0.292}	&	0.293	&	0.293	&	0.293	&	0.292	\\
&	0.4	&	0.292	&	0.316	&	\mi{0.316}	&	\mb{0.315}	&	0.295	&	0.301	&	0.301	&	\mb{0.310}	&	0.293	&	0.322	&	\mi{0.322}	&	\mb{0.320}	\\
&	0.5	&	0.294	&	0.318	&	\mi{0.318}	&	\mb{0.318}	&	0.298	&	0.320	&	\mi{0.320}	&	\mb{0.313}	&	0.309	&	0.322	&	\mi{0.322}	&	\mb{0.322}	\\
&	0.6	&	0.323	&	0.320	&	0.322	&	0.320	&	0.306	&	0.321	&	0.321	&	0.320	&	0.309	&	0.323	&	\mi{0.323}	&	\mb{0.323}	\\
&	0.7	&	0.323	&	0.326	&	0.326	&	0.324	&	0.323	&	0.323	&	0.323	&	0.323	&	0.323	&	0.323	&	0.323	&	0.323	\\
&	0.8	&	0.323	&	0.326	&	0.326	&	0.326	&	0.324	&	0.324	&	0.325	&	0.324	&	0.323	&	0.323	&	0.323	&	0.323	\\
&	0.9	&	0.323	&	0.326	&	0.326	&	0.326	&	0.325	&	0.325	&	0.326	&	0.324	&	0.323	&	0.323	&	0.323	&	0.323	\\
\hline														
\multirow{9}{*}{T3}														
&	0.1	&	0.208	&	0.210	&	0.210	&	0.210	&	0.215	&	0.216	&	0.224'	&	0.232*	&	0.192	&	0.192	&	0.192	&	\mbl{0.210}*	\\
&	0.2	&	0.208	&	0.249	&	\mi{0.249}	&	\mb{0.249}	&	0.242	&	0.253	&	0.262'	&	0.265*	&	0.231	&	0.231	&	0.231	&	0.231	\\
&	0.3	&	0.258	&	0.269	&	0.269	&	0.269	&	0.265	&	0.280	&	\mi{0.287}'	&	\mb{0.283}	&	0.249	&	0.249	&	0.249	&	0.249	\\
&	0.4	&	0.288	&	0.297	&	\mi{0.304}	&	\mb{0.304}	&	0.280	&	0.297	&	\mi{0.297}	&	\mb{0.304}	&	0.265	&	0.269	&	0.269	&	\mbl{0.285}*	\\
&	0.5	&	0.297	&	0.304	&	\mi{0.306}	&	\mb{0.306}	&	0.291	&	0.305	&	\mi{0.305}	&	\mb{0.305}	&	0.290	&	0.304	&	\mi{0.304}	&	\mb{0.304}	\\
&	0.6	&	0.304	&	0.306	&	0.306	&	0.306	&	0.302	&	0.305	&	0.306	&	0.306	&	0.297	&	0.304	&	0.304	&	0.304	\\
&	0.7	&	0.304	&	0.306	&	0.306	&	0.306	&	0.302	&	0.306	&	0.306	&	0.306	&	0.279	&	0.306	&	\mi{0.306}	&	\mb{0.306}	\\
&	0.8	&	0.306	&	0.306	&	0.306	&	0.306	&	0.305	&	0.306	&	0.306	&	0.306	&	0.290	&	0.306	&	\mi{0.306}	&	\mb{0.306}	\\
&	0.9	&	0.306	&	0.306	&	0.306	&	0.306	&	0.306	&	0.306	&	0.306	&	0.306	&	0.306	&	0.306	&	0.306	&	0.306	\\
\hline														
\multirow{9}{*}{T4}														
&	0.1	&	0.158	&	0.172	&	\mi{0.172}	&	\mb{0.172}	&	0.173	&	0.175	&	0.180'	&	0.176	&	0.157	&	0.157	&	0.157	&	0.161	\\
&	0.2	&	0.160	&	0.187	&	\mi{0.187}	&	\mb{0.187}	&	0.175	&	0.189	&	\mi{0.189}	&	\mb{0.188}	&	0.164	&	0.164	&	0.164	&	\mbl{0.171}*	\\
&	0.3	&	0.177	&	0.197	&	\mi{0.197}	&	\mb{0.200}	&	0.179	&	0.203	&	\mi{0.206}	&	\mb{0.203}	&	0.171	&	0.172	&	0.171	&	\mbl{0.196}*	\\
&	0.4	&	0.178	&	0.219	&	\mi{0.219}	&	\mb{0.215}	&	0.180	&	0.214	&	\mi{0.214}	&	\mb{0.215}	&	0.208	&	0.217	&	0.217	&	0.213	\\
&	0.5	&	0.185	&	0.223	&	\mi{0.223}	&	\mbl{0.232}*	&	0.189	&	0.227	&	\mi{0.227}	&	\mb{0.228}	&	0.215	&	0.225	&	0.225	&	0.225	\\
&	0.6	&	0.197	&	0.235	&	\mi{0.235}	&	\mb{0.235}	&	0.203	&	0.234	&	\mi{0.234}	&	\mb{0.231}	&	0.216	&	0.233	&	\mi{0.233}	&	\mb{0.233}	\\
&	0.7	&	0.215	&	0.240	&	\mi{0.240}	&	\mb{0.239}	&	0.213	&	0.239	&	\mi{0.239}	&	\mb{0.238}	&	0.208	&	0.233	&	\mi{0.233}	&	\mb{0.233}	\\
&	0.8	&	0.216	&	0.240	&	\mi{0.240}	&	\mb{0.239}	&	0.218	&	0.240	&	\mi{0.240}	&	\mb{0.238}	&	0.211	&	0.237	&	\mi{0.237}	&	\mb{0.235}	\\
&	0.9	&	0.218	&	0.241	&	\mi{0.241}	&	\mb{0.239}	&	0.220	&	0.241	&	\mi{0.241}	&	\mb{0.240}	&	0.219	&	0.240	&	\mi{0.240}	&	\mb{0.237}	\\
\hline														
\multirow{9}{*}{T5} %3														
&	0.1	&	0.173	&	0.183	&	\mi{0.183}	&	\mb{0.188}	&	0.183	&	0.192	&	0.194'	&	0.191	&	0.168	&	0.167	&	0.170'	&	0.169	\\
&	0.2	&	0.190	&	0.207	&	\mi{0.207}	&	\mb{0.202}	&	0.193	&	0.210	&	\mi{0.213}	&	\mb{0.210}	&	0.174	&	0.174	&	\mi{0.181}'	&	\mb{0.182}*	\\
&	0.3	&	0.206	&	0.222	&	\mi{0.222}	&	\mb{0.222}	&	0.212	&	0.224	&	\mil{0.226}	&	0.219	&	0.182	&	0.214*	&	\mi{0.214}	&	\mb{0.197}	\\
&	0.4	&	0.217	&	0.234*	&	\mil{0.234}	&	\mb{0.227}	&	0.222	&	0.232	&	\mi{0.235}	&	\mb{0.232}	&	0.219	&	0.238	&	\mi{0.238}	&	\mb{0.230}	\\
&	0.5	&	0.224	&	0.240	&	\mi{0.240}	&	\mb{0.235}	&	0.228	&	0.240	&	\mi{0.241}	&	\mb{0.238}	&	0.226	&	0.243	&	\mi{0.243}	&	\mb{0.243}	\\
&	0.6	&	0.235	&	0.245	&	\mi{0.245}	&	0.241	&	0.236	&	0.245	&	\mi{0.245}	&	\mb{0.241}	&	0.233	&	0.244	&	\mi{0.243}	&	\mb{0.244}	\\
&	0.7	&	0.236	&	0.247	&	\mi{0.247}	&	\mb{0.247}	&	0.239	&	0.248	&	0.248	&	0.245	&	0.221	&	0.247	&	\mi{0.247}	&	\mb{0.247}	\\
&	0.8	&	0.241	&	0.249	&	0.249	&	0.249	&	0.242	&	0.249	&	0.249	&	0.244	&	0.227	&	0.249	&	\mi{0.249}	&	\mb{0.248}	\\
&	0.9	&	0.245	&	0.250	&	0.250	&	0.250	&	0.246	&	0.250	&	0.250	&	0.250	&	0.223	&	0.250	&	\mi{0.250}	&	\mb{0.250}	\\
\end{tabular}}
\end{table*}
%-------------------
\begin{table*}
\caption{Average MRR for APM, APM-L, LW and DP$_{\epsilon=0.05}$
over T1, T2, T3, T4 and T5.
Statistically significant difference between APM and LW, 
between APM and DP, between DP and LW, between APM-L and LW, and between DP and APM-L 
are in italic, bold, underline, apostrophe(') and star(*), respectively.
$B$ denotes a given budget.}
\label{tab:approx-mrr}
\centering
{\scriptsize \sffamily
\begin{tabular}{r|c*{3}{|l@{\hspace{0.7em}}l@{\hspace{0.7em}}l@{\hspace{0.7em}}l}}
\multicolumn{1}{c}{} & & \multicolumn{4}{c}{Uniform Cost} & \multicolumn{4}{|c}{Random Cost} & \multicolumn{4}{|c}{Frequency-based Cost} \\
\cline{3-14}
\multicolumn{1}{@{\hspace{0em}}c}{} & $B$ & APM & APM-L & LW & DP & APM & APM-L & LW & DP & APM & APM-L & LW & DP \\
\hline\hline
\multirow{9}{*}{T1} %3														
&	0.1	&	0.187	&	0.187	&	0.187	&	0.187	&	0.239	&	0.250	&	0.250	&	0.255	&	0.475	&	0.475	&	0.475	&	0.475	\\
&	0.2	&	0.197	&	0.220	&	\mi{0.220}	&	\mbl{0.362}*	&	0.204	&	0.318	&	\mi{0.318}	&	\mb{0.384}	&	0.475	&	0.475	&	0.475	&	0.475	\\
&	0.3	&	0.221	&	0.394	&	\mi{0.394}	&	\mb{0.406}	&	0.288	&	0.390	&	\mi{0.390}	&	\mbl{0.417}*	&	0.475	&	0.475	&	0.475	&	0.475	\\
&	0.4	&	0.394	&	0.438	&	\mi{0.438}	&	\mbl{0.459}*	&	0.357	&	0.440	&	\mi{0.440}	&	\mbl{0.460}*	&	0.475	&	0.475	&	0.475	&	0.475	\\
&	0.5	&	0.438	&	0.492	&	\mi{0.492}	&	\mb{0.492}	&	0.421	&	0.492	&	\mi{0.492}	&	\mb{0.486}	&	0.475	&	0.475	&	0.475	&	0.475	\\
&	0.6	&	0.492	&	0.501	&	0.501	&	0.492	&	0.471	&	0.496	&	\mi{0.496}	&	0.488	&	0.475	&	0.475	&	0.475	&	0.475	\\
&	0.7	&	0.501	&	0.502	&	0.502	&	0.493	&	0.494	&	0.502	&	\mi{0.502}	&	0.494	&	0.475	&	0.475	&	0.475	&	0.475	\\
&	0.8	&	0.502	&	0.507	&	0.507	&	0.507	&	0.502	&	0.503	&	0.503	&	0.502	&	0.475	&	0.475	&	0.475	&	0.475	\\
&	0.9	&	0.507	&	0.507	&	0.507	&	0.507	&	0.502	&	0.506	&	0.506	&	0.506	&	0.475	&	0.475	&	0.475	&	0.475	\\
\hline														
\multirow{9}{*}{T2} %3														
&	0.1	&	0.504	&	0.479	&	0.479	&	0.479	&	0.481	&	0.466	&	0.466	&	0.476	&	0.589	&	0.598	&	0.589	&	0.598	\\
&	0.2	&	0.534	&	0.566	&	\mi{0.566}	&	\mb{0.566}	&	0.564	&	0.567	&	0.571'	&	\mbl{0.591}*	&	0.595	&	0.662	&	\mi{0.662}	&	\mb{0.662}	\\
&	0.3	&	0.580	&	0.629	&	\mi{0.629}	&	\mb{0.629}	&	0.607	&	0.638	&	\mi{0.638}	&	\mbl{0.652}*	&	0.667	&	0.668	&	0.668	&	0.671	\\
&	0.4	&	0.651	&	0.718	&	\mi{0.718}	&	\mb{0.729}	&	0.666	&	0.686	&	\mi{0.686}	&	\mbl{0.713}*	&	0.668	&	0.741	&	\mi{0.741}	&	\mb{0.739}	\\
&	0.5	&	0.673	&	0.745	&	\mi{0.745}	&	\mb{0.734}	&	0.673	&	0.738*	&	\mil{0.738}	&	0.728	&	0.703	&	0.747	&	\mi{0.747}	&	\mb{0.747}	\\
&	0.6	&	0.746	&	0.751	&	0.751	&	0.750	&	0.702	&	0.747	&	0.748	&	0.746	&	0.709	&	0.748	&	\mi{0.748}	&	\mb{0.748}	\\
&	0.7	&	0.751	&	0.758	&	0.758	&	0.763	&	0.747	&	0.755	&	0.755	&	0.751	&	0.747	&	0.748	&	0.748	&	0.748	\\
&	0.8	&	0.757	&	0.764	&	0.764	&	0.764	&	0.755	&	0.760	&	0.761	&	0.759	&	0.748	&	0.748	&	0.748	&	0.748	\\
&	0.9	&	0.757	&	0.764	&	0.764	&	0.764	&	0.758	&	0.763	&	0.764	&	0.761	&	0.748	&	0.748	&	0.748	&	0.756	\\
\hline														
\multirow{9}{*}{T3} %3														
&	0.1	&	0.453	&	0.469	&	0.469	&	0.469	&	0.458	&	0.475	&	0.499'	&	0.524*	&	0.407	&	0.407	&	0.407	&	0.447	\\
&	0.2	&	0.453	&	0.594	&	\mi{0.594}	&	\mb{0.594}	&	0.545	&	0.608	&	\mi{0.615}'	&	\mbl{0.629}*	&	0.531	&	0.531	&	0.531	&	0.531	\\
&	0.3	&	0.579	&	0.679	&	\mi{0.679}	&	\mb{0.679}	&	0.616	&	0.682	&	\mi{0.698}'	&	\mb{0.701}*	&	0.577	&	0.577	&	0.577	&	0.594	\\
&	0.4	&	0.664	&	0.734	&	\mi{0.734}	&	\mb{0.734}	&	0.648	&	0.732	&	\mi{0.732}	&	\mb{0.734}	&	0.626	&	0.679	&	\mi{0.679}	&	\mb{0.688}*	\\
&	0.5	&	0.719	&	0.739	&	\mi{0.739}	&	\mb{0.739}	&	0.685	&	0.738	&	\mi{0.738}	&	\mb{0.744}	&	0.669	&	0.734	&	\mi{0.734}	&	\mb{0.734}	\\
&	0.6	&	0.737	&	0.760	&	\mi{0.760}	&	\mb{0.760}	&	0.730	&	0.752	&	\mi{0.752}	&	0.751	&	0.715	&	0.734	&	0.734	&	0.734	\\
&	0.7	&	0.758	&	0.760	&	0.760	&	0.760	&	0.750	&	0.760	&	0.760	&	0.752	&	0.676	&	0.739	&	0.739	&	0.739	\\
&	0.8	&	0.760	&	0.760	&	0.760	&	0.760	&	0.759	&	0.760	&	0.760	&	0.760	&	0.701	&	0.739	&	0.739	&	0.739	\\
&	0.9	&	0.760	&	0.760	&	0.760	&	0.760	&	0.759	&	0.760	&	0.760	&	0.760	&	0.739	&	0.749	&	0.749	&	0.754	\\
\hline														
\multirow{9}{*}{T4} %3														
&	0.1	&	0.343	&	0.413	&	\mi{0.413}	&	\mb{0.413}	&	0.408	&	0.423	&	\mi{0.439}'	&	\mb{0.426}	&	0.344	&	0.344	&	0.344	&	\mbl{0.363}*	\\
&	0.2	&	0.363	&	0.456	&	\mi{0.456}	&	\mb{0.456}	&	0.422	&	0.457	&	\mi{0.467}'	&	\mb{0.462}*	&	0.364	&	0.364	&	0.365	&	\mb{0.394}*		\\
&	0.3	&	0.433	&	0.491	&	\mi{0.491}	&	\mb{0.496}*	&	0.440	&	0.516	&	\mi{0.518}	&	\mb{0.508}	&	0.389	&	0.395	&	0.391	&	\mbl{0.488}*	\\
&	0.4	&	0.435	&	0.563*	&	\mil{0.563}	&	\mb{0.544}	&	0.441	&	0.540	&	\mi{0.548}'	&	\mb{0.547}*	&	0.525	&	0.556*	&	\mil{0.556}	&	0.544	\\
&	0.5	&	0.456	&	0.573	&	\mi{0.573}	&	\mbl{0.601}*	&	0.464	&	0.587	&	\mi{0.587}	&	\mb{0.588}	&	0.544	&	0.584	&	\mi{0.581}	&	\mb{0.584}	\\
&	0.6	&	0.480	&	0.608	&	\mi{0.608}	&	\mb{0.612}	&	0.503	&	0.606	&	\mi{0.606}	&	\mb{0.597}	&	0.548	&	0.609	&	\mi{0.609}	&	\mb{0.609}	\\
&	0.7	&	0.547	&	0.627*	&	\mil{0.627}	&	\mb{0.621}	&	0.539	&	0.622	&	\mi{0.622}	&	\mb{0.621}	&	0.525	&	0.609	&	\mi{0.609}	&	\mb{0.609}	\\
&	0.8	&	0.550	&	0.628	&	\mi{0.628}	&	\mb{0.627}	&	0.556	&	0.627	&	\mi{0.627}	&	\mb{0.627}	&	0.539	&	0.619	&	\mi{0.619}	&	\mb{0.619}	\\
&	0.9	&	0.555	&	0.629	&	\mi{0.629}	&	\mb{0.629}	&	0.563	&	0.629	&	\mi{0.629}	&	\mb{0.629}	&	0.559	&	0.628	&	\mi{0.628}	&	\mb{0.628}	\\
\hline														
\multirow{9}{*}{T5} %3														
&	0.1	&	0.376	&	0.442	&	\mi{0.442}	&	\mb{0.448}	&	0.431	&	0.471	&	\mi{0.475}'	&	\mb{0.467}	&	0.373	&	0.374	&	\mi{0.379}'	&	\mb{0.379}*	\\
&	0.2	&	0.450	&	0.516	&	\mi{0.516}	&	\mb{0.510}	&	0.458	&	0.540*	&	\mil{0.540}	&	\mb{0.528}	&	0.394	&	0.394	&	\mi{0.409}'	&	\mb{0.415}*	\\
&	0.3	&	0.501	&	0.571	&	\mi{0.571}	&	\mb{0.571}	&	0.523	&	0.577	&	\mi{0.580}'	&	\mb{0.580}*	&	0.416	&	0.545*	&	\mil{0.545}	&	\mb{0.461}	\\
&	0.4	&	0.543	&	0.608*	&	\mil{0.608}	&	\mb{0.574}	&	0.561	&	0.605	&	\mi{0.605}	&	\mb{0.600}	&	0.557	&	0.608	&	\mi{0.608}	&	\mb{0.608}	\\
&	0.5	&	0.572	&	0.621*	&	\mil{0.621}	&	\mb{0.608}	&	0.581	&	0.624	&	\mi{0.624}	&	\mb{0.617}	&	0.579	&	0.624	&	\mi{0.624}	&	\mb{0.624}	\\
&	0.6	&	0.607	&	0.638	&	\mi{0.638}	&	\mb{0.624}	&	0.608	&	0.637	&	\mi{0.637}	&	\mb{0.625}	&	0.600	&	0.626	&	\mi{0.625}	&	\mb{0.627}	\\
&	0.7	&	0.613	&	0.647	&	\mi{0.647}	&	\mb{0.647}	&	0.621	&	0.648*	&	\mil{0.648}	&	\mb{0.638}	&	0.556	&	0.648	&	\mi{0.648}	&	\mb{0.648}	\\
&	0.8	&	0.633	&	0.653	&	\mi{0.653}	&	\mb{0.651}	&	0.631	&	0.653	&	\mi{0.653}	&	\mb{0.651}	&	0.581	&	0.652	&	\mi{0.652}	&	\mb{0.652}	\\
&	0.9	&	0.642	&	0.656	&	\mi{0.656}	&	\mb{0.656}	&	0.643	&	0.655	&	\mi{0.655}	&	\mb{0.655}	&	0.569	&	0.655	&	\mi{0.655}	&	\mb{0.655}	\\
\end{tabular}}
\end{table*}
%-------------------

Queriability formula needs the value of the frequency 
for each concept in the input taxonomy over the dataset. 
However, it is not possible to find the exact frequencies 
of concepts without annotating the mentions to their entities 
in the dataset. 
Similar to \cite{Termehchy:SIGMOD:14}, we estimate 
the concept frequencies by sampling a small subset of 
randomly selected documents from the dataset. 
We compute the frequency of each concept using 
an estimation error rate of 5\% 
under the 95\% confidence level, 
which is about 400 documents for all datasets. 
We also smooth the sampled frequencies using 
Bayesian $m$-estimates with 
smoothing parameter of 1 and uniform priors.
We denote the \prob{Level\hyp{}wise} algorithm as {\em LW} 
and the dynamic programming algorithm as {\em DP} for brevity. 
We also compare LW and DP with the {\em APM} algorithm from 
\cite{Termehchy:SIGMOD:14} 
which finds a design over a set of concepts.
We use all concepts in the taxonomy as a set of concepts 
for an input to APM.
APM uses a scaling technique to convert popularities and costs 
to positive integers \cite{Termehchy:SIGMOD:14}.  
We set the $\epsilon$ value of the scaling for APM to 0.01. 
As we have mentioned in Section~\ref{sec:approximation-algorithms}, 
LW uses APM to find the optimal design in each level. 
We set the scaling factor of the APM algorithm used by LW to 0.01.
%The algorithms sometimes do not exhaust all the available budget. In these cases, we select the remaining concepts from the taxonomy in descending order of the ratio of their popularities to their costs till there is no budget left.
In addition, %when appropriate, an algorithm should pick concepts 
%from leaf levels in order to optimize query answering.
we also perform APM algorithm %\cite{Termehchy:SIGMOD:14}
only over a set of leaf concepts in a taxonomy,
and denote this modification of APM as {\em APM-L}.
%We denote the modification of APM algorithm as {\em APM-L}.
%Similarly, we set the scaling factor of APM-L to 0.01.

Since DP also assumes popularity ($u$), frequency ($d$) and cost ($w$)
to be positive integers, we also use scaling 
%similar to the one used in APM and LW algorithms 
to convert the values of popularity, frequency 
and cost of all concept in the input taxonomy 
to positive integers \cite{Vazirani:Book:Approx}. 
Let $u_{\max}$ be the maximum 
popularity of all leaf concepts in the taxonomy 
and $\eps < 1$, we scale $u(C)$ as  
$\hat{u}(C) \bequal \lfloor {u(C)\over{\eps\cdot u_{\max}} }\rfloor$.
We use similar techniques
to scale the values of $d(C)$ and $w(C)$.
Intuitively, the smaller the value of $\eps$ is, 
the more exact result DP will deliver. 
However, DP takes longer to run for smaller values 
of $\eps$ as the range of $U$, $D$ and 
$B_\mathtt{total}$ will become larger.
Since we cannot use a very small value of $\epsilon$, 
our preliminary results of using this scaling technique 
shows that many concepts have $u$ or $d$ values 
equals to 0 after scaling.
Hence, DP may not explore all feasible designs 
and may miss some popular concepts whose $d$ value are small.
Thus, we modify the 
aforementioned scaling technique by
adding a constant value of 1
to both $\hat{u}(C)$ and $\hat{d}(C)$.
This addition ensures that each concept, after the scaling, 
contributes in the computation of queriability, 
and DP will explore all feasible designs. 
We set the value of $\eps$ for DP to 0.05 for the 
experiments in this section.

% APM vs others
Table~\ref{tab:approx-mrr} %and ~\ref{tab:approx-prec3} 
shows the values of average MRR for APM, APM-L, LW and DP 
over T1, T2, T3, T4 and T5.
Overall, the designs returned by LW and DP improve 
the effectiveness of answering queries for all taxonomies 
more than the designs returned by APM.
This is because APM does not consider the structural information of the taxonomy.
APM often picks many popular concepts that are ancestor or descendant of each other, 
LW and DP use structural information of the taxonomy 
when selecting a design and thus avoid this problem.

% APM-L vs LW
Although APM-L is shown to be more effective than APM,
LW is generally as effective or more effective than APM-L.
Since APM-L returns designs only over leaf concepts of a taxonomy,
it does not have the same drawback as that of APM and
so the average MRR values of APM-L is higher than APM.
In many cases, the designs returned APM-L and LW
are equally effective.
This is because there are only two levels of concepts in T1 and T2,
and the given budgets are usually enough to select popular concepts 
in the leaf level.
Hence, both LW and APM-L returns the same designs.
In addition, although the height of T4 and T5 are higher than T1, T2 and T3,
the taxonomy trees are also unbalanced,
and for each internal concept, 
the distribution of its child concepts popularity and frequency 
are extremely skewed.
For instance, {\em organism} has only two children in T4, 
and both children are leaves.
One of the children, namely {\em person}, 
has almost the same popularity as that of {\em organism}.
Hence, a design with concepts from a leaf level 
and a design with concepts from the same level as {\em organism}
are equally effective.
Nevertheless, 
LW returns different designs than APM-L in some cases,
and they are more effective than those of APM-L.

DP is also generally more effective than APM-L.
This is because designs returned by DP can contain 
both leaf and non-leaf concepts.
For instance, given a budget of 0.3 over T2, 
APM-L picks {\em dramatic composition} and {\em literary composition}
while DP picks {\em writing} instead of {\em dramatic composition}.
Since {\em writing} is a parent of both {\em dramatic composition} 
and {\em literary composition}, 
DP can answer queries of {\em literary composition} and 
queries of other concepts that are children of {\em writing}
more effectively. 

% DP vs LW
The results shown in 
Table~\ref{tab:approx-prec3} and~\ref{tab:approx-mrr}
indicate that the designs returned by DP are more effective than
those returned by LW for small budgets and small taxonomies such as T1, T2 and T3.
%For T3, the average $p@3$ values for LW and DP 
%are similar in most cases.
However, the designs returned by LW 
are more effective than those delivered by DP 
for moderate to large budgets, e.g., 0.4-0.7.
Because the frequencies and popularities of concepts in most taxonomies
follow a power-law distribution, most of all concepts have 
frequency and popularity close to zero.
When a given budget is very small, neither methods have a sufficient budget 
to select the concepts with medium or small frequencies and popularities. 
%TODO: this contradicts the above claim, so I removed it
%YODS's comment: It does not actually contradict with the above claim. For T4-T5, LW and APM do have similar results over small budgets.
%Hence, their effectiveness are similar.
However, over sufficiently large budgets, both algorithms are able to select some of these concepts.
If the DP algorithm does not use sufficiently small values for $\epsilon$, 
concepts with considerably different frequencies and popularities may end up with equal values of scaled frequencies and popularities.
Since it takes a very long time to run DP with an $\epsilon$ value less than 0.05, 
one cannot use a sufficiently small value of $\epsilon$ 
to preserve the differences in popularities and frequencies for all concepts
As the DP algorithm is not able to distinguish these concepts, 
it may not be able to find a design with the most queriability.
Since we are able to run LW using sufficiently small scaling factors in 
its APM component, 
LW often returns more effective designs than DP over relatively large budgets.

%-------------------
\begin{table*}
\caption{Average $p@3$ for APM, APM-L and LW over T6 and T8. %, T7, T8, T8 and T10. 
Statistically significant differences between APM and LW, between APM and APM-L, and between APM-LW and LW are marked in bold, italic and underline, respectively.
$B$ denotes a given budget.}
\label{tab:approx-large-prec3}
\centering
{\scriptsize \sffamily
\begin{tabular}{r|c*{3}{|ccc}}
\multicolumn{1}{r}{} & & \multicolumn{3}{c}{Uniform Cost} & \multicolumn{3}{|c}{Random Cost} & \multicolumn{3}{|c}{Frequency-based Cost} \\
\cline{3-11}
\multicolumn{1}{r}{} & $B$ & APM & APM-L & LW & APM & APM-L & LW & APM & APM-L & LW \\
\hline\hline
\multirow{5}{*}{T6}									
&	0.1	&	0.191	&	\mi{0.221}	&	\mb{0.221}	&	0.194	&	\mi{0.219}	&	\mbl{0.220}	&	0.153	&	\mi{0.170}	&	\mb{0.170}	\\
&	0.2	&	0.229	&	\mi{0.238}	&	\mb{0.238}	&	0.230	&	\mi{0.239}	&	\mbl{0.240}	&	0.173	&	0.174	&	\mbl{0.176}	\\
&	0.3	&	0.242	&	0.247	&	0.247	&	0.241	&	\mi{0.247}	&	\mb{0.247}	&	0.176	&	0.177	&	\mbl{0.180}	\\
&	0.4	&	0.243	&	0.248	&	0.248	&	0.246	&	0.248	&	0.248	&	0.180	&	\mi{0.227}	&	\mb{0.227}	\\
&	0.5	&	0.246	&	0.248	&	0.248	&	0.247	&	0.248	&	0.248	&	0.225	&	\mi{0.238}	&	\mb{0.238}	\\
&	0.6	&	0.248	&	0.248	&	0.248	&	0.248	&	0.248	&	0.248	&	0.226	&	\mi{0.240}	&	\mb{0.241}	\\
&	0.7	&	0.248	&	0.248	&	0.248	&	0.248	&	0.248	&	0.248	&	0.226	&	\mi{0.244}	&	\mb{0.244}	\\
%&	0.8	&	0.248	&	0.248	&	0.248	&	0.248	&	0.248	&	0.248	&	0.228	&	\mi{0.246}	&	\mb{0.246}	\\
%&	0.9	&	0.248	&	0.248	&	0.248	&	0.248	&	0.248	&	0.248	&	0.229	&	\mi{0.248}	&	\mb{0.248}	\\
\hline									
\multirow{5}{*}{T7}									
&	0.1	&	0.217	&	\mi{0.233}	&	\mb{0.234}	&	0.220	&	\mi{0.235}	&	\mbl{0.237}	&	0.182	&	\mi{0.197}	&	\mb{0.197}	\\
&	0.2	&	0.244	&	\mi{0.255}	&	\mb{0.255}	&	0.244	&	\mi{0.256}	&	\mb{0.256}	&	0.199	&	0.198	&	0.198	\\
&	0.3	&	0.251	&	\mi{0.259}	&	\mb{0.259}	&	0.255	&	\mi{0.260}	&	\mb{0.260}	&	0.199	&	\mi{0.236}	&	\mb{0.236}	\\
&	0.4	&	0.260	&	0.261	&	0.261	&	0.259	&	0.261	&	0.261	&	0.237	&	\mi{0.248}	&	\mb{0.248}	\\
&	0.5	&	0.260	&	0.261	&	0.261	&	0.260	&	0.261	&	0.261	&	0.238	&	\mi{0.249}	&	\mb{0.249}	\\
&	0.6	&	0.260	&	0.261	&	0.261	&	0.260	&	0.261	&	0.261	&	0.239	&	\mi{0.251}	&	\mb{0.251}	\\
%&	0.7	&	0.260	&	0.261	&	0.261	&	0.260	&	0.261	&	0.261	&	0.239	&	\mi{0.254}	&	\mb{0.254}	\\
%&	0.8	&	0.260	&	0.261	&	0.261	&	0.260	&	0.261	&	0.261	&	0.238	&	\mi{0.257}	&	\mb{0.257}	\\
%&	0.9	&	0.260	&	0.261	&	0.261	&	0.261	&	0.261	&	0.261	&	0.238	&	\mi{0.259}	&	\mb{0.259}	\\
%\hline									
%\multirow{5}{*}{T8}									
%&	0.1	&	0.234	&	\mi{0.242}	&	\mb{0.242}	&	0.239	&	\mi{0.245}	&	\mb{0.246}	&	0.169	&	\mi{0.184}	&	\mb{0.184}	\\
%&	0.2	&	0.247	&	\mi{0.255}	&	\mb{0.255}	&	0.250	&	\mi{0.255}	&	\mb{0.255}	&	\mbi{0.186}	&	0.184	&	0.184	\\
%&	0.3	&	0.255	&	0.256	&	0.256	&	0.254	&	0.256	&	0.256	&	0.189	&	\mi{0.235}	&	\mb{0.235}	\\
%&	0.4	&	0.255	&	0.256	&	0.256	&	0.255	&	0.256	&	0.256	&	0.210	&	\mi{0.244}	&	\mb{0.244}	\\
%&	0.5	&	0.256	&	0.256	&	0.256	&	0.256	&	0.256	&	0.256	&	0.232	&	\mi{0.247}	&	\mb{0.247}	\\
%%&	0.6	&	0.256	&	0.256	&	0.256	&	0.256	&	0.256	&	0.256	&	0.234	&	\mi{0.249}	&	\mb{0.249}	\\
%%&	0.7	&	0.256	&	0.256	&	0.256	&	0.256	&	0.256	&	0.256	&	0.232	&	\mi{0.252}	&	\mb{0.252}	\\
%%&	0.8	&	0.256	&	0.256	&	0.256	&	0.256	&	0.256	&	0.256	&	0.231	&	\mi{0.255}	&	\mb{0.255}	\\
%%&	0.9	&	0.256	&	0.256	&	0.256	&	0.256	&	0.256	&	0.256	&	0.232	&	\mi{0.256}	&	\mb{0.256}	\\
\hline
\multirow{4}{*}{T8}
&	0.1	&	0.278	&	0.279	&	0.279	&	0.279	&	0.279	&	0.279	&	0.225	&	\mi{0.243}	&	\mb{0.243}	\\
&	0.2	&	0.280	&	0.280	&	0.280	&	0.280	&	0.280	&	0.280	&	0.235	&	\mi{0.270}	&	\mb{0.270}	\\
&	0.3	&	0.280	&	0.280	&	0.280	&	0.280	&	0.280	&	0.280	&	0.260	&	\mi{0.275}	&	\mb{0.275}	\\
&	0.4	&	0.280	&	0.280	&	0.280	&	0.280	&	0.280	&	0.280	&	0.265	&	\mi{0.279}	&	\mb{0.279}	\\
&	0.5	&	0.280	&	0.280	&	0.280	&	0.280	&	0.280	&	0.280	&	0.256	&	\mi{0.279}	&	\mb{0.279}	\\
&	0.6	&	0.280	&	0.280	&	0.280	&	0.280	&	0.280	&	0.280	&	0.261	&	\mi{0.279}	&	\mb{0.279}	\\
%&	0.7	&	0.280	&	0.280	&	0.280	&	0.280	&	0.280	&	0.280	&	0.270	&	\mi{0.279}	&	\mb{0.279}	\\
%&	0.8	&	0.280	&	0.280	&	0.280	&	0.280	&	0.280	&	0.280	&	0.267	&	\mi{0.280}	&	\mb{0.280}	\\
%&	0.9	&	0.280	&	0.280	&	0.280	&	0.280	&	0.280	&	0.280	&	0.269	&	\mi{0.280}	&	\mb{0.280}	\\
%\hline
%\multirow{4}{*}{T10}
%&	0.1	&	0.280	&	0.280	&	0.280	&	0.280	&	0.280	&	0.280	&	0.259	&	\mi{0.279}	&	\mb{0.279}	\\
%&	0.2	&	0.280	&	0.280	&	0.280	&	0.280	&	0.280	&	0.280	&	0.268	&	\mi{0.279}	&	\mb{0.279}	\\
%&	0.3	&	0.280	&	0.280	&	0.280	&	0.280	&	0.280	&	0.280	&	0.278	&	0.279	&	0.279	\\
%&	0.4	&	0.280	&	0.280	&	0.280	&	0.280	&	0.280	&	0.280	&	0.280	&	0.279	&	0.279	\\
%%&	0.5	&	0.280	&	0.280	&	0.280	&	0.280	&	0.280	&	0.280	&	0.280	&	0.279	&	0.279	\\
%%&	0.6	&	0.280	&	0.280	&	0.280	&	0.280	&	0.280	&	0.280	&	0.280	&	0.279	&	0.279	\\
%%&	0.7	&	0.280	&	0.280	&	0.280	&	0.280	&	0.280	&	0.280	&	0.280	&	0.279	&	0.279	\\
%%&	0.8	&	0.280	&	0.280	&	0.280	&	0.280	&	0.280	&	0.280	&	0.280	&	0.279	&	0.279	\\
%%&	0.9	&	0.280	&	0.280	&	0.280	&	0.280	&	0.280	&	0.280	&	0.280	&	0.279	&	0.279	\\
\end{tabular}}
\end{table*}
%-------------------
\begin{table*}
\caption{Average MRR for APM, APM-L and LW over T6 and T8.
Statistically significant differences between APM and LW, between APM and APM-L, and between APM-LW and LW are marked in bold, italic and underline, respectively.}
\label{tab:approx-large-others}
\centering
{\scriptsize \sffamily
\begin{tabular}{r|l*{3}{|ccc}}
\multicolumn{1}{r}{} & & \multicolumn{3}{c}{Uniform Cost} & \multicolumn{3}{|c}{Random Cost} & \multicolumn{3}{|c}{Frequency-based Cost} \\
\cline{3-11}
\multicolumn{1}{r}{} & \multicolumn{1}{c|}{$B$} & APM & APM-L & LW & APM & APM-L & LW & APM & APM-L & LW \\
\hline\hline
\multirow{9}{*}{T6}
&	0.01	&	0.235	&	0.235	&	0.235	&	\mi{0.375}	&	0.291	&	\mbl{0.403}	&	0.259	&	0.259	&	0.259	\\
&	0.05	&	0.411	&	\mi{0.474}	&	\mb{0.474}	&	0.412	&	\mi{0.491}	&	\mb{0.497}	&	0.311	&	\mi{0.327}	&	\mb{0.327}	\\
&	0.1	&	0.461	&	\mi{0.564}	&	\mb{0.564}	&	0.466	&	\mi{0.559}	&	\mbl{0.563}	&	0.333	&	\mi{0.380}	&	\mb{0.380}	\\
&	0.2	&	0.583	&	\mi{0.625}	&	\mb{0.625}	&	0.589	&	\mi{0.624}	&	\mbl{0.625}	&	0.392	&	0.397	&	\mbl{0.404}	\\
&	0.3	&	0.627	&	\mi{0.648}	&	\mb{0.648}	&	0.623	&	\mi{0.647}	&	\mb{0.648}	&	0.403	&	\mi{0.410}	&	\mbl{0.417}	\\
&	0.4	&	0.631	&	\mi{0.652}	&	\mb{0.652}	&	0.638	&	\mi{0.652}	&	\mb{0.652}	&	0.418	&	\mi{0.589}	&	\mb{0.589}	\\
&	0.5	&	0.642	&	0.653	&	0.653	&	0.642	&	\mi{0.653}	&	\mb{0.653}	&	0.573	&	\mi{0.626}	&	\mb{0.626}	\\
&	0.6	&	0.646	&	0.653	&	0.653	&	0.646	&	\mi{0.653}	&	\mb{0.653}	&	0.577	&	\mi{0.632}	&	\mb{0.632}	\\
&	0.7	&	0.650	&	0.653	&	0.653	&	0.650	&	0.653	&	0.653	&	0.578	&	\mi{0.643}	&	\mb{0.643}	\\
%&	0.8	&	0.651	&	0.653	&	0.653	&	0.651	&	0.653	&	0.653	&	0.588	&	\mi{0.648}	&	\mb{0.648}	\\
%&	0.9	&	0.652	&	0.653	&	0.653	&	0.652	&	0.653	&	0.653	&	0.590	&	\mi{0.651}	&	\mb{0.651}	\\
\hline										
\multirow{9}{*}{T7}
&	0.01	&	0.344	&	\mi{0.380}	&	\mb{0.380}	&	0.385	&	0.391	&	\mbl{0.406}	&	0.286	&	0.285	& \mbl{0.310}	\\
&	0.05	&	0.448	&	\mi{0.525}	&	\mb{0.525}	&	0.451	&	\mi{0.517}	&	\mbl{0.530}	&	0.371	&	\mi{0.403}	&	\mb{0.403}	\\
&	0.1	&	0.545	&	\mi{0.609}	&	\mb{0.609}	&	0.550	&	\mi{0.612}	&	\mbl{0.616}	&	0.410	&	\mi{0.460}	&	\mb{0.460}	\\
&	0.2	&	0.637	&	\mi{0.669}	&	\mb{0.669}	&	0.633	&	\mi{0.671}	&	\mb{0.671}	&	0.464	&	0.463	&	0.461	\\
&	0.3	&	0.659	&	\mi{0.682}	&	\mb{0.682}	&	0.667	&	\mi{0.683}	&	\mb{0.683}	&	0.464	&	\mi{0.602}	&	\mb{0.602}	\\
&	0.4	&	0.677	&	0.686	&	0.686	&	0.675	&	\mi{0.686}	&	\mb{0.686}	&	0.606	&	\mi{0.642}	&	\mb{0.642}	\\
&	0.5	&	0.683	&	0.686	&	0.686	&	0.681	&	0.686	&	0.686	&	0.609	&	\mi{0.646}	&	\mb{0.646}	\\
&	0.6	&	0.684	&	0.686	&	0.686	&	0.684	&	0.686	&	0.686	&	0.611	&	\mi{0.662}	&	\mb{0.662}	\\
%&	0.7	&	0.684	&	0.686	&	0.686	&	0.684	&	0.686	&	0.686	&	0.615	&	\mi{0.672}	&	\mb{0.672}	\\
%&	0.8	&	0.685	&	0.686	&	0.686	&	0.685	&	0.686	&	0.686	&	0.609	&	\mi{0.679}	&	\mb{0.679}	\\
%&	0.9	&	0.685	&	0.686	&	0.686	&	0.685	&	0.686	&	0.686	&	0.607	&	\mi{0.682}	&	\mb{0.682}	\\
%\hline										
%\multirow{9}{*}{T8}										
%&	0.1	&	0.605	&	\mi{0.635}	&	\mb{0.635}	&	0.616	&	\mi{0.641}	&	\mbl{0.644}	&	0.388	&	\mi{0.428}	&	\mb{0.428}	&	0.637	&	0.635	&	0.635	\\
%&	0.2	&	0.641	&	\mi{0.672}	&	\mb{0.672}	&	0.652	&	\mi{0.671}	&	\mb{0.672}	&	\mbi{0.435}	&	0.427	&	0.428	&	0.666	&	\mi{0.672}	&	\mb{0.672}	\\
%&	0.3	&	0.666	&	0.677	&	0.677	&	0.665	&	\mi{0.675}	&	\mb{0.675}	&	0.445	&	\mi{0.606}	&	\mb{0.606}	&	0.673	&	\mi{0.677}	&	\mb{0.677}	\\
%&	0.4	&	0.668	&	0.677	&	0.677	&	0.670	&	0.676	&	0.676	&	0.513	&	\mi{0.630}	&	\mb{0.630}	&	0.674	&	0.677	&	0.677	\\
%&	0.5	&	0.673	&	0.677	&	0.677	&	0.672	&	0.677	&	0.677	&	0.593	&	\mi{0.641}	&	\mb{0.641}	&	0.676	&	0.677	&	0.677	\\
%&	0.6	&	0.675	&	0.677	&	0.677	&	0.674	&	0.677	&	0.677	&	0.600	&	\mi{0.652}	&	\mb{0.652}	&	0.676	&	0.677	&	0.677	\\
%&	0.7	&	0.675	&	0.677	&	0.677	&	0.675	&	0.677	&	0.677	&	0.593	&	\mi{0.667}	&	\mb{0.667}	&	0.677	&	0.677	&	0.677	\\
%&	0.8	&	0.675	&	0.677	&	0.677	&	0.675	&	0.677	&	0.677	&	0.591	&	\mi{0.673}	&	\mb{0.673}	&	0.677	&	0.677	&	0.677	\\
%&	0.9	&	0.675	&	0.677	&	0.677	&	0.676	&	0.677	&	0.677	&	0.592	&	\mi{0.676}	&	\mb{0.676}	&	0.677	&	0.677	&	0.677	\\
\hline										
\multirow{10}{*}{T8}
&	0.001	& 0.357 & 0.357 & 0.357 & \mi{0.365} & 0.334 & \mbl{0.407} & 0.277 & 0.277 & 0.277 \\
&	0.005	& 0.433	& \mi{0.452} & \mb{0.452} & 0.462 & 0.470 & \mb{0.473} & 0.316 & \mi{0.336} & \mb{0.334} \\
&	0.01	& \mbi{0.537} & 0.517 & 0.517 &	0.535 & 0.535 & 0.535 & 0.355 & \mi{0.372} & \mbl{0.379} \\
&	0.05	& 0.686	& \mi{0.707} & \mb{0.707} & 0.712 & 0.712 & 0.712 & 0.479 & \mi{0.561} & \mb{0.561}	\\
&	0.1	&	0.730	&	\mi{0.738}	&	\mb{0.738}	&	0.738	&	0.738	&	0.738	&	0.541	&	\mi{0.598}	&	\mb{0.598}	\\
&	0.2	&	0.740	&	0.743	&	0.743	&	0.742	&	0.742	&	0.742	&	0.570	&	\mi{0.704}	&	\mb{0.704}	\\
&	0.3	&	0.742	&	0.743	&	0.743	&	0.743	&	0.743	&	0.743	&	0.665	&	\mi{0.716}	&	\mb{0.716}	\\
&	0.4	&	0.743	&	0.743	&	0.743	&	0.743	&	0.743	&	0.743	&	0.680	&	\mi{0.737}	&	\mb{0.737}	\\
&	0.5	&	0.743	&	0.743	&	0.743	&	0.743	&	0.743	&	0.743	&	0.651	&	\mi{0.737}	&	\mb{0.737}	\\
&	0.6	&	0.743	&	0.743	&	0.743	&	0.743	&	0.743	&	0.743	&	0.668	&	\mi{0.737}	&	\mb{0.737}	\\
\end{tabular}}
\end{table*}
%-------------------

% large taxonomies
Tables~\ref{tab:approx-large-prec3} and~\ref{tab:approx-large-others} 
shows the values of average $p@3$ and MRR, respectively, 
for APM and LW over T6 and T8. %, T8 and T10.
%\fix{for APM, APM-L and LW over T6, T7, T8, T8 and T10}.
%The results of budgets greater than 0.7 for T6
%and 0.6 for T7 and T8
%%\fix{for T6, T7 and T8, and 
%%results for budgets larger than 0.4 for T8 and T10} 
%are omitted because 
%they are the same as the previous budgets.
%and the omitted values are the same as the results for budget 0.4.
We do not report any result for DP over T6, T7 and T8
because the algorithm does not terminate 
for almost all budgets after several days.
The results shown in Tables~\ref{tab:approx-large-prec3} 
and~\ref{tab:approx-large-others} indicate that 
the designs returned by LW are more effective 
than the ones delivered by APM. 

Because of a very skewed distribution of concept popularity
in T6, T7 and T8, 
budgets of 0.4 for T6 and T7, and 0.2 for T8  
are usually sufficient to create a design that includes 
all leaf concepts that appear in the query workloads.
Hence, APM, APM-L and LW deliver almost equally 
effective designs for relatively large budgets.
We further evaluate APM, APM-L and LW using
%budgets 0.001, 0.005, 0.01 and 0.05
%For fair comparison, we do not report results of budget 0.01 and 0.05 
%for T6 because the budgets are not sufficient to select a leaf concept.
budgets 0.01 and 0.05 for T6, T7 and T8
and budgets 0.001 and 0.005 for T8
as shown in Table~\ref{tab:approx-large-others}.
Overall, LW is significantly more effective than APM-L and APM.
This is because, with small budget, it is more preferable 
to choose a concept in a higher level instead of multiple
leaf concepts that are children or descendants of that one concept. 
Then an algorithm can spend the remaining budget on other concepts
to help answer other queries more effectively.

%---------------------------------------
\subsubsection{Dynamic Programming with Cost-Dependency}

%-------------------
\begin{table}
\caption{Average $p@3$ of QM and DPC 
using different values of $\epsilon$ over T1, T2 and T3.
Statistically significant difference between QM and DPC 
are marked in bold. $B$ denotes a given budget.}
\label{tab:costdep-prec3}
\centering
{\scriptsize \sffamily
\begin{tabular}{c|c|c|ccc}%{l|l*{2}{|ccc}}
\multicolumn{1}{c}{} & $B$ & QM & DPC$_{0.05}$ & DPC$_{0.1}$ & DPC$_{0.2}$ \\
\hline\hline
\multirow{4}{*}{T1}
& 0.25 &	0.170	& \mb{0.127} &	\mb{0.149}	&	\mb{0.165}	\\
& 0.50  &	0.189	& 0.187 &	0.187	&	0.187	\\
& 0.75 &	0.195	& 0.195 &	0.195	&	0.189	\\
& 1    &	0.195	& 0.195 &	0.195	&	0.195	\\
\hline
\multirow{4}{*}{T2}
& 0.25 &	0.285	& \mb{0.270} &	\mb{0.269}	&	\mb{0.269}	\\
& 0.50  &	0.321	& 0.319 &	0.318	&	\mb{0.300}	\\
& 0.75 &	0.325	& 0.323 &	0.322	&	0.321	\\
& 1    &	0.326	& 0.326 &	0.326	&	0.326	\\
\hline
\multirow{4}{*}{T3}
& 0.25 &	0.285	& \mb{0.268} &	\mb{0.268}	&	\mb{0.268}	\\
& 0.50  &	0.305	& 0.304 &	0.304	&	\mb{0.292}	\\
& 0.75 &	0.306	& 0.305 &	0.304	&	0.304	\\
& 1    &	0.306	& 0.306 &	0.306	&	0.306	\\
\end{tabular}}
\end{table}
%-------------------
\begin{table}
\caption{Average MRR of QM and DPC 
using different values of $\epsilon$ over T1, T2 and T3.
Statistically significant difference between QM and DPC 
are marked in bold. $B$ denotes a given budget.}
\label{tab:costdep-others}
\centering
{\scriptsize \sffamily
\begin{tabular}{c|c|c|ccc}%{l|l*{2}{|ccc}}
\multicolumn{1}{c}{} & $B$ & QM & DPC$_{0.05}$ & DPC$_{0.1}$ & DPC$_{0.2}$ \\
\hline\hline
\multirow{4}{*}{T1}
& 0.25 &	0.426	& \mb{0.299} &	\mb{0.364}	&	0.413	\\
& 0.50  &	0.488	& 0.464 &	0.464	&	0.464	\\
& 0.75 &	0.507	& 0.507 &	0.507	&	0.507	\\
& 1    &	0.507	& 0.507 &	0.507	&	0.507	\\
\hline
\multirow{4}{*}{T2}
& 0.25 &	0.609	& 0.596	&	0.594	&	0.594	\\
& 0.50  &	0.747	& 0.743	&	0.742	&	\mb{0.695}	\\
& 0.75 &	0.763	& 0.759 &	0.755	&	0.750	\\
& 1    &	0.764	& 0.764	&	0.764	&	0.764	\\
\hline
\multirow{4}{*}{T3}
& 0.25 &	0.707	& 0.662 &	\mb{0.655}	&	\mb{0.655}	\\
& 0.50  &	0.756	& 0.747 &	0.741	&	\mb{0.729}	\\
& 0.75 &	0.760	& 0.760 &	0.758	&	0.749	\\
& 1    &	0.760	& 0.760 &	0.760	&	0.760	\\
\end{tabular}}
\end{table}
%-------------------

Tables~\ref{tab:costdep-prec3} and~\ref{tab:costdep-others}
show the values of $p@3$ and MRR, respectively, for %MRR, $r@10$ and $r@20$
Queriability Maximization (QM) and dynamic programming algorithm 
with cost dependencies ({\it DPC}) over T1, T2 and T3 taxonomies. 
The cost for each concept in these 
taxonomies has been randomly generated and depends on 
which ancestors of the concept have been selected in the 
design. The budgets that cover all leaf nodes 
in T1, T2 and T3 are 1, and the budget that cover all nodes
are 1.25, 1.1 and 1.1, respectively.
%We do not report the results for cost 1
%in taxonomy T3 as its results are the same as cost 0.75
%for this taxonomy.
QM is a brute force algorithm that explores all feasible
solutions and finds the design with maximum queriability.
Because the space of the possible solutions for this problem is
larger than the original \prob{CECD} problem, it takes
much longer to run QM for this problem. 
Hence, we have covered a smaller range of budgets in this set of experiments.
The results shown in Tables~\ref{tab:costdep-prec3} 
and~\ref{tab:costdep-others} indicate that 
the smaller the value of $\epsilon$, the closer the average MRR
of the designs returned by DPC are to the ones delivered by QM.
%The exception is when given budget 0.25 over T1 where DPC$_{0.2}$
%is shown to be more effective than DPC$_{0.05}$ and DPC$_{0.1}$.

%-------------------
\ignore{
Generally, the smaller the value of epsilon, the closer average $p@3$
of the designs returned by DPC to the ones delivered by QM, 
%The exception is when given budget 0.25 over T1 where DPC$_{0.2}$
%is shown to be more effective than DPC$_{0.05}$ and DPC$_{0.1}$.
According to MRR results shown in Table~\ref{tab:costdep-others}, 
QM is significantly better than DPC in many cases.
This is because, the relevant answers in the ranked list of answers returned
by the query interface that uses the designs selected by DPC are placed
within top 3, but they are usually at the second or third positions in
the list.
However, these relevant answers are usually found at the first or second positions
in the ranked list of answers returned by the query interface that uses
the designs returned by QM. 
%For many queries, QM places the relevant answers higher in the
%returned list of answers than that of DPC.
%However, the answers returned by DPC are still placed within 
%the top 3 answers; both QM and DPC deliver similar $p@3$.
The difference in average values of MRR of the ranked list of answers
from using the design returned by QM and DPC can be as large as 0.5.
However, there is no difference when using $p@3$ as an effectiveness measurement.
%Hence, DPC is generally effective in practice.
} % old explanation
%-------------------

%------------------------------------------------------------
\subsection{Efficiency of Proposed Algorithms}
\label{sec:efficiency}

We measure the running times of LW and DP  
over moderate and large taxonomies, 
i.e., T4, T5, T6, T7 and T8, and set the 
available main memory of Java Virtual Machine to 64GB.
Table~\ref{tab:runtime} shows the average running times of 
APM, LW and DP for T4, T5, T6, T7 and T8 over budgets 0.1 to 0.9. 
Some results of DP are not reported because 
the algorithms did not finish after a day.
Overall, LW is as efficient as APM, and it is more
efficient than DP.
%Overall, LW is more efficient than both APM and DP.
%LW performs APM over a small set of concepts 
%in each level separately, 
%while APM processes over the set of every concepts in the taxonomy.
%The table required in the computation of LW at each level
%is substantially smaller. %which helps to improve the overall running time of LW.
%Hence, the overall running time of LW is smaller than that of APM.
Because the size of the table required in the DP algorithm is substantially 
large for $\epsilon = 0.05$, it occupies most of the available main memory. 
Thus, the running time of DP is longer than APM and LW on average.
Therefore, LW scales for large taxonomy and is efficient for a design-time task.
On the other hand, DP has a reasonable running time for T4 and T5,
but it does not scale for large taxonomy such as T6, T7 and T8.

\begin{table}
%\caption{Average running time (minute) of APM, LW, DP and DPC.}
\caption{Average running time of APM, LW, DP and DPC 
where $m$ and $s$ denote minute and second.}
\label{tab:runtime}
\centering
{\scriptsize \sffamily
\begin{tabular}{l|c|c|rr|rr}
& \multirow{2}{*}{APM} & \multirow{2}{*}{LW} & \multicolumn{2}{c|}{DP} & \multicolumn{2}{c}{DPC} \\
\cline{4-7}
& & & $\epsilon =$ 0.05 & 0.1 & $\epsilon =$ 0.1 & 0.2 \\
\hline\hline
T4 & \multicolumn{1}{r|}{1$s$} & \multicolumn{1}{r|}{1$s$}	& 127$m$ & 11$m$ & 151$m$ & 12$m$ \\ %& 3 \\
T5 & \multicolumn{1}{r|}{1$s$} & \multicolumn{1}{r|}{1$s$}	& 184$m$ & 15$m$ & 778$m$ & 77$m$ \\ %& 3 \\%current-778@2 -> estimate to ~787?
\hline
T6 & \multicolumn{1}{r|}{2$s$} & \multicolumn{1}{r|}{1$s$}	& -	& - & - & 520$m$ 	\\ %& 210 \\
T7 & \multicolumn{1}{r|}{3$s$} & \multicolumn{1}{r|}{3$s$}	& -	& - & - & - 	\\ %& 2310 \\
%T8 & \multicolumn{1}{r|}{4$s$} & \multicolumn{1}{r|}{3$s$}	& - & -	& - & - 	\\ %& - \\
%\hline
T8	& \multicolumn{1}{r|}{5$s$}	 & \multicolumn{1}{r|}{9$s$}		& - & -	& - & - 	\\
%T10	& \multicolumn{1}{r|}{46$s$} & \multicolumn{1}{r|}{111$s$}	& - & -	& - & - 	\\
\end{tabular}}
\end{table}

\subsubsection{Dynamic Programming with Cost-Dependency}

Table~\ref{tab:runtime} shows the average running times 
%of DPC for T4, T5, T6, T7 and T8 over budgets 0.25, 0.5, 0.75 and 1 
of DPC for T4, T5, T6, T7 and T8 over budgets 0.25, 0.5, 0.75 and 1 
using the scaling factor, $\epsilon$, of 0.1 and 0.2. 
We do not report the running time of DPC 
%using $\epsilon \bequal 0.1$ for T6, T7 and T8
using $\epsilon \bequal 0.1$ for T6, T7 and T8
%and DPC using $\epsilon \bequal 0.2$ for T7 and T8
and DPC using $\epsilon \bequal 0.2$ for T7 and T8 
%as the algorithm will take longer time to complete than when using $\epsilon=0.3$.
because the algorithm did not finish after a day.
%Because DPC requires a larger matrix than the one for DP,
%the Java garbage collector has to spend more time on 
%managing the available memory for DPC than it does for DP. 
Because DPC requires a larger table than the one required for DP,
the running time of DPC is longer than that of DP 
for the same scaling factor, i.e., $\epsilon$.
Hence, we run DPC using only $\epsilon$ values of 0.1 and 0.2.
Overall, DPC with $\epsilon \bequal 0.1$ is reasonably efficient 
to perform a design-time task for a taxonomy of up to size 70, 
and DPC with $\epsilon \bequal 0.2$ is reasonably efficient  for a taxonomy of up to size 200.

%------------------------------------------------------------
\subsection{Queries With Multiple Concepts}
\label{sec:me-cecd-exp}

%-------------------
\begin{table}
\caption{
The numbers of queries with multiple concepts with 
the minimum, average and maximum numbers of concepts per query 
for  T4, T5, T6, T7 and T8.
%``\#selected queries" shows the numbers of queries with multiple concepts 
%that remain after filtering out queries whose ranking quality is not 
%improved by annotating all concepts in the corresponding taxonomy.
%``\#original queries" shows the numbers of original queries with multiple concepts.
}
\label{tab:expinfo2}
\centering 
{\sffamily
\begin{tabular}{r|c|c|c|c|c}
Taxonomy & T4 & T5 & T6 & T7 & T8 \\
\hline
\#queries	& 687 & 1578 & 1403 & 1882 & 2603 \\
%\#original queries & 1110 & 2877 & 2198 & 3242 & 3448 \\
%\%original queries & 61.9 & 54.8 & 63.8 & 58.1 & 57.0 \\
minimum	\#concepts & 2 & 2 & 2 & 2 & 2 \\
maximum	\#concepts & 4 & 4 & 5 & 5 & 5 \\
average	\#concepts & 2.2 & 2.1 & 2.2 & 2.2 & 2.2 \\
\end{tabular}}
\end{table}
%-------------------

\subsubsection{Validation and Effectiveness}

We have selected all queries with 
multiple concepts which belong to T1, T2 and T3 
from our query workload and filtered out the queries
whose ranking quality is not improved by annotating
all concepts in the corresponding taxonomy.
This results in 6, 37 and 8 queries over 
T1, T2 and T3, respectively.
%, which are about 50\% of the original numbers of queries with multiple concepts in T1, T2 and T3, respectively.
The number of concepts for each query is 2.
%This confirms that only a small fraction of real world queries 
%contain more than a single entity \cite{MoreSenses:Sanderson}. 
Because there are not enough queries with multiple concepts 
for T1 and T3, we do not evaluate our models over T1 and T3.
Since the number of queries with one concept over T2 
is seven times larger than the number of multiple concepts, 
we randomly select a subset of queries with one concept 
from the original query workload
and combine them with the queries with multiple concepts over T2. 
The new query workload contains 106 queries.
%, of which 69 refer to one and 37 refer to two entities.
We run Oracle, 
the queriability maximization over queries with multiple concepts ($MQM$), 
the \prob{Level\hyp{}wise} algorithm for multiple concepts ($MLW$), 
and the \prob{Level\hyp{}wise} algorithm ($LW$) over the query workload. 
Oracle is described in Section~\ref{sec:expvalidate}.
%Oracle enumerates all feasible designs over the input taxonomy and returns the design with the largest average $p@3$ for all queries in the query workload. 
MQM enumerates all feasible designs over the input taxonomy 
and returns the one with maximum queriability 
as computed in Section~\ref{sec:me-appendix}.
%Table~\ref{tab:mult-prec3} shows the values of average $p@3$
Table~\ref{tab:mult-others} shows the values of average MRR
for Oracle, MQM, MLW and LW algorithms over T2.
%The values of $p@3$ for budget larger than 0.7 
%are omitted because they are equal to the $p@3$ 
%values for budget 0.7.
MQM generally returns the designs similar to those of Oracle. 
The designs returned by MLW also deliver close average MRR %$p@3$ to
the ones delivered by MQM in most cases.
%Similar results can be observed in 
%Table~\ref{tab:mult-others} which shows the values of
%average MRR for Oracle, MQM, MLW and LW algorithms over T2.
%%The value of $p@3$ for both MQM and MLW are close to the optimal values. Thus MQM and MLW can effectively predict the the amount of improvement in effectiveness of answering queries.
Overall, MLW significantly outperforms LW over T2.

%-------------------
\begin{table}
\caption{Average $p@3$ for Oracle, MQM, MLW and LW over T2.
Statistically significant difference between MQM and Oracle, MQM and MLW, and MLW and LW 
are marked in italic, bold, and underline, respectively. $B$ denotes a given budget.}
%Statistically significant difference between MQM and MLW are marked in bold, between MLW and LW are marked in Italic, and between MQM and Oracle are marked with $*$}}
\label{tab:mult-prec3}
\centering
{\scriptsize \sffamily
\begin{tabular}{@{\hspace{0.5em}}c|c@{\hspace{0.7em}}c@{\hspace{0.7em}}c@{\hspace{0.7em}}c|c@{\hspace{0.7em}}c@{\hspace{0.7em}}c@{\hspace{0.7em}}c|c@{\hspace{0.7em}}c@{\hspace{0.7em}}c@{\hspace{0.7em}}c}%{l|l*{2}{|ccc}}
& \multicolumn{4}{c}{Uniform Cost} & \multicolumn{4}{|c}{Random Cost} & \multicolumn{4}{|c}{Frequency-based Cost} \\
\cline{2-13}
$B$ & Oracle & MQM & MLW & LW & Oracle & MQM & MLW & LW & Oracle & MQM & MLW & LW \\
\hline\hline
0.1	&	0.283	&	0.283	&	0.230	&	0.230	&	0.288	&	\mb{0.285}	&	0.225	&	0.216	& & 0.230 & 0.230 & 0.230 \\
0.2	&	\mi{0.311}	&	\mb{0.286}	&	0.233	&	0.223	&	0.311	&	\mb{0.291}	&	\ml{0.246}	&	0.231	& & 0.311 & 0.311 & 0.311 \\
0.3	&	\mi{0.318}	&	\mb{0.296}	&	\ml{0.248}	&	0.226	&	\mi{0.321}	&	0.300	&	\ml{0.282}	&	0.242	& & 0.320 & 0.320 & 0.320 \\
0.4	&	0.327	&	0.311	&	\ml{0.311}	&	0.230	&	0.327	&	0.313	&	\ml{0.314}	&	0.260	& & 0.324 & 0.324 & 0.324\\
0.5	&	0.330	&	0.321	&	\ml{0.321}	&	0.239	&	0.329	&	0.325	&	\ml{0.325}	&	0.266	& & 0.327 & 0.327 & 0.327 \\
0.6	&	0.330	&	0.327	&	\ml{0.327}	&	0.239	&	0.330	&	0.327	&	\ml{0.325}	&	0.270	& & 0.327 & 0.327 & 0.327 \\
0.7	&	0.330	&	0.330	&	0.330	&	0.324	&	0.330	&	0.328	&	0.328	&	0.312	& & 0.327 & 0.327 & 0.327 \\
0.8	&	0.330	&	0.330	&	0.330	&	0.327	&	0.330	&	0.330	&	0.330	&	0.326	& & 0.327 & 0.327 & 0.327\\
0.9	&	0.330	&	0.330	&	0.330	&	0.330	&	0.330	&	0.330	&	0.330	&	0.330	& & 0.327 & 0.327 & 0.327 \\
\end{tabular}}
\end{table}
%-------------------
\begin{table*}
\caption{Average MRR for Oracle, MQM, MLW and LW over T2. 
Statistically significant difference between MQM and Oracle, MQM and MLW, and MLW and LW 
are marked in italic, bold, and underline, respectively. $B$ denotes a given budget.}
\label{tab:mult-others}
\centering
{\scriptsize \sffamily
\begin{tabular}{@{\hspace{0.5em}}c|c@{\hspace{0.7em}}c@{\hspace{0.7em}}c@{\hspace{0.7em}}c|c@{\hspace{0.7em}}c@{\hspace{0.7em}}c@{\hspace{0.7em}}c|c@{\hspace{0.7em}}c@{\hspace{0.7em}}c@{\hspace{0.7em}}c}%{l|l*{2}{|ccc}}
& \multicolumn{4}{c}{Uniform Cost} & \multicolumn{4}{|c}{Random Cost} & \multicolumn{4}{|c}{Frequency-based Cost} \\
\cline{2-13}
$B$ & Oracle & MQM & MLW & LW & Oracle & MQM & MLW & LW & Oracle & MQM & MLW & LW \\
\hline\hline
0.1	&	\mi{0.517}	&	0.430	&	0.491	&	0.491	&	\mi{0.577}	&	0.491	&	0.465	&	0.454	&	0.569	&	0.569	&	0.569	&	0.569	\\
0.2	&	0.593	&	\mb{0.577}	&	0.516	&	0.523	&	0.651	&	0.596	&	\ml{0.569}	&	0.529	&	0.657	&	0.657	&	0.657	&	0.657	\\
0.3	&	\mi{0.747}	&	0.602	&	\ml{0.600}	&	0.526	&	\mi{0.785}	&	0.657	&	\ml{0.670}	&	0.577	&	\mi{0.758}	&	0.682	&	0.682	&	0.682	\\
0.4	&	0.826	&	0.796	&	\ml{0.796}	&	0.563	&	0.826	&	0.791	&	\ml{0.791}	&	0.618	&	0.829	&	0.829	&	0.829	&	0.829	\\
0.5	&	0.850	&	0.821	&	\ml{0.821}	&	0.576	&	0.846	&	0.832	&	\ml{0.832}	&	0.626	&	0.832	&	0.832	&	0.832	&	0.829	\\
0.6	&	0.859	&	0.852	&	\ml{0.852}	&	0.576	&	0.862	&	0.837	&	\ml{0.837}	&	0.663	&	0.832	&	0.832	&	0.832	&	0.832	\\
0.7	&	0.865	&	0.865	&	0.865	&	0.823	&	0.862	&	0.852	&	\ml{0.852}	&	0.768	&	0.832	&	0.832	&	0.832	&	0.832	\\
0.8	&	0.865	&	0.865	&	0.865	&	0.832	&	0.862	&	0.862	&	0.862	&	0.835	&	0.832	&	0.832	&	0.832	&	0.832	\\
0.9	&	0.865	&	0.865	&	0.865	&	0.865	&	0.862	&	0.862	&	0.862	&	0.858	&	0.861	&	0.861	&	\ml{0.861}	&	0.832	\\
\end{tabular}}
\end{table*}
%-------------------

We have also evaluated the effectiveness of MLW and LW 
over taxonomies T4, T5, T6, T7 and T8.
%TODO: You must mention at the beginning of the section on empirical results where you explain empirical setting that those queries are the ones with only a single concept.
We have selected all queries with multiple concepts 
over T4, T5, T6, T7 and T8 from our query workload.
%This results in 192, 980, 823, 1124 and 1178 queries 
%over T4, T5, T6, T7 and T8, respectively.
Table~\ref{tab:expinfo2} shows the numbers of queries 
with multiple concepts and 
%and the queries remaining after filtering out the queries whose ranking quality is not improved by annotating all concepts in the corresponding taxonomy 
the minimum, average and maximum numbers of
concepts per query for T4, T5, T6, T7 and T8.
Tables~\ref{tab:mul-large-prec3} and~\ref{tab:mul-large-others} 
shows the values of $p@3$ and MRR, respectively, 
for MLW and LW algorithms over T4, T5, T6, T7 and T8.
Overall, the designs returned by MLW and $p@3$ deliver significantly higher
average MRR and $p@3$ than the designs selected by LW over all taxonomies.

We have observed less difference between the results of MLW and LW when the distribution 
of concept popularity in the taxonomy is less skewed.
This is because, when the popularity distribution is very skewed, 
both algorithms select all or most relatively popular concepts.
Hence, they selected designs are almost equally effective.
%TODO: check to see if the following sentnce is correct: freqnency/ popularity??
%Yods'ANSWER: no it is the opposite. less skewed = more difference. So I changed "less skewed" to "more skewed". In fact, this observation can be seen in those T6-T7 for single concept queries where differences in p@3/MRR are mostly observed on small budgets.
For example, the distribution of concept frequencies in T6 
is considerably more skewed than those of the concepts in T7, 
thus, the designs delivered by MLW and LW over T6 are significantly more different 
than the ones these algorithms return over T7.

%-------------------
\begin{table}
\caption{Average $p@3$ for MLW and LW over T4, T5, T6, T7 and T8. 
Statistically significant differences between MLW and LW are marked in bold.
$B$ denotes a given budget.}
\label{tab:mul-large-prec3}
\centering
{\scriptsize \sffamily
\begin{tabular}{r|c*{3}{|cc}}
\multicolumn{1}{r}{} & & \multicolumn{2}{c}{Uniform Cost} & \multicolumn{2}{|c}{Random Cost} & \multicolumn{2}{|c}{Frequency-based Cost} \\
\cline{3-8}
\multicolumn{1}{r}{} & $B$ & MLW & LW & MLW & LW & MLW & LW \\
\hline\hline
\multirow{9}{*}{T4}									
&	0.1	&	\mb{0.169}	&	0.134	&	\mb{0.201}	&	0.168	& \mb{0.225} & 0.206 \\
&	0.2	&	\mb{0.248}	&	0.216	&	\mb{0.253}	&	0.210	& \mb{0.272} & 0.240 \\
&	0.3	&	\mb{0.286}	&	0.220	&	\mb{0.307}	&	0.247	& 0.282	& 0.282 \\
&	0.4	&	\mb{0.319}	&	0.279	&	\mb{0.320}	&	0.272	& 0.254 & \mb{0.285} \\
&	0.5	&	\mb{0.319}	&	0.283	&	\mb{0.322}	&	0.287	& 0.296 & 0.289 \\
&	0.6	&	0.319	&	0.306	&	0.322	&	0.303	& 0.310 & 0.310 \\
&	0.7	&	0.326	&	0.315	&	0.326	&	0.315	& 0.310 & 0.315 \\
&	0.8	&	0.326	&	0.315	&	0.327	&	0.321	& 0.322	& 0.317 \\
&	0.9	&	0.329	&	0.329	&	0.328	&	0.329	& 0.329 & 0.317 \\
\hline									
\multirow{9}{*}{T5}									
&	0.1	&	\mb{0.200}	&	0.175	&	\mb{0.253}	&	0.186	& \mb{0.269} & 0.230 \\
&	0.2	&	\mb{0.299}	&	0.257	&	\mb{0.300}	&	0.250	& 0.279 & 0.277 \\
&	0.3	&	0.301	&	0.301	&	\mb{0.301}	&	0.276	& 0.281 & 0.286 \\
&	0.4	&	0.302	&	0.279	&	0.302	&	0.293	& \mb{0.302} & 0.289 \\
&	0.5	&	0.303	&	0.302	&	0.303	&	0.303	& \mb{0.304} & 0.294 \\
&	0.6	&	0.306	&	0.306	&	0.311	&	0.306	& 0.309 & 0.296 \\
&	0.7	&	0.315	&	0.305	&	0.315	&	0.306	& 0.310 & 0.307 \\
&	0.8	&	0.315	&	0.308	&	0.315	&	0.308	& 0.313 & 0.309 \\
&	0.9	&	0.315	&	0.314	&	0.315	&	0.312	& 0.315 & 0.313 \\
\hline									
\multirow{9}{*}{T6}									
&	0.1	&	\mb{0.288}	&	0.186	&	\mb{0.289}	&	0.207	& \mb{0.283} & 0.164 \\
&	0.2	&	\mb{0.295}	&	0.203	&	\mb{0.303}	&	0.217	& \mb{0.288} & 0.203 \\
&	0.3	&	\mb{0.316}	&	0.249	&	\mb{0.316}	&	0.256	& \mb{0.301} & 0.206 \\
&	0.4	&	\mb{0.317}	&	0.287	&	0.316	&	0.292	& \mb{0.309} & 0.275 \\
&	0.5	&	0.317	&	0.321	&	0.317	&	0.321	& \mb{0.314} & 0.299 \\
&	0.6	&	0.317	&	\mb{0.327}	&	0.320	&	0.321	& 0.321 & 0.322 \\
&	0.7	&	0.327	&	0.327	&	0.328	&	0.304	& 0.325 & 0.324 \\
&	0.8	&	0.328	&	0.327	&	0.328	&	0.326	& 0.326 & 0.324 \\
&	0.9	&	0.328	&	0.327	&	0.328	&	0.326	& 0.328 & 0.326 \\
\hline									
\multirow{9}{*}{T7}									
&	0.1	&	0.285	&	0.291	&	0.286	&	0.291	& \mb{0.244} & 0.160 \\
&	0.2	&	\mb{0.307}	&	0.295	&	\mb{0.306}	&	0.211	& \mb{0.276} & 0.249 \\
&	0.3	&	0.307	&	0.305	&	0.307	&	0.306	& 0.282 & 0.284 \\
&	0.4	&	0.309	&	0.312	&	0.309	&	0.312	& \mb{0.291} & 0.284 \\
&	0.5	&	\mb{0.313}	&	0.286	&	\mb{0.313}	&	0.285	& \mb{0.305} & 0.296 \\
&	0.6	&	0.313	&	0.309	&	0.314	&	0.307	& 0.308 & 0.308 \\
&	0.7	&	0.313	&	0.318	&	0.313	&	0.318	& 0.310 & 0.310 \\
&	0.8	&	0.313	&	0.318	&	0.313	&	0.318	& 0.313 & 0.313 \\
&	0.9	&	0.318	&	0.318	&	0.318	&	0.318	& 0.317 & 0.317 \\
%\hline									
%\multirow{9}{*}{T8}									
%&	0.1	&	\mb{0.273}	&	0.186	&	\mb{0.275}	&	0.186	\\
%&	0.2	&	\mb{0.298}	&	0.237	&	\mb{0.299}	&	0.241	\\
%&	0.3	&	0.298	&	\mb{0.308}	&	0.299	&	0.304	\\
%&	0.4	&	\mb{0.304}	&	0.270	&	\mb{0.304}	&	0.273	\\
%&	0.5	&	\mb{0.308}	&	0.282	&	\mb{0.309}	&	0.287	\\
%&	0.6	&	\mb{0.309}	&	0.298	&	0.310	&	0.298	\\
%&	0.7	&	0.310	&	0.301	&	0.310	&	0.301	\\
%&	0.8	&	0.310	&	0.310	&	0.310	&	0.309	\\
%&	0.9	&	0.310	&	0.316	&	0.310	&	0.311	\\
\end{tabular}}
\end{table}
%-------------------
\begin{table}
\caption{Average MRR for MLW and LW over T4, T5, T6, T7 and T8. 
Statistically significant differences between MLW and LW are marked in bold.
$B$ denotes a given budget.}
\label{tab:mul-large-others}
\centering
{\scriptsize \sffamily
\begin{tabular}{r|c*{2}{|cc}|ccc}
\multicolumn{1}{r}{} & & \multicolumn{2}{c}{Uniform Cost} & \multicolumn{2}{|c}{Random Cost} & \multicolumn{3}{|c}{Frequency-based Cost} \\
\cline{3-9}
\multicolumn{1}{r}{} & $B$ & MLW & LW & MLW & LW & MLW & LW \\
\hline\hline
\multirow{9}{*}{T4}									
&	0.1	&	\mb{0.334}	&	0.274	&	\mb{0.448}	&	0.368	&	\mb{0.574}	&	0.487	\\
&	0.2	&	\mb{0.622}	&	0.488	&	\mb{0.638}	&	0.487	&	\mb{0.721}	&	0.629	\\
&	0.3	&	\mb{0.748}	&	0.509	&	\mb{0.827}	&	0.601	&	0.770	&	0.770	\\
&	0.4	&	\mb{0.873}	&	0.712	&	\mb{0.880}	&	0.712	&	0.773	&	0.784	\\
&	0.5	&	\mb{0.873}	&	0.733	&	\mb{0.887}	&	0.749	&	0.815	&	0.799	\\
&	0.6	&	0.873	&	0.833	&	\mb{0.890}	&	0.813	&	0.845	&	0.799	\\
&	0.7	&	0.901	&	0.872	&	0.901	&	0.868	&	0.845	&	0.862	\\
&	0.8	&	0.901	&	0.862	&	0.912	&	0.894	&	0.902	&	0.872	\\
&	0.9	&	0.929	&	0.930	&	0.924	&	0.930	&	0.900	&	0.873	\\
\hline									
\multirow{9}{*}{T5}									
&	0.1	&	\mb{0.466}	&	0.430	&	\mb{0.619}	&	0.466	&	\mb{0.674}	&	0.536	\\
&	0.2	&	\mb{0.799}	&	0.673	&	\mb{0.803}	&	0.645	&	0.708	&	0.695	\\
&	0.3	&	0.807	&	0.810	&	\mb{0.819}	&	0.729	&	0.719	&	0.728	\\
&	0.4	&	\mb{0.826}	&	0.743	&	\mb{0.828}	&	0.784	&	\mb{0.827}	&	0.745	\\
&	0.5	&	0.834	&	0.811	&	0.833	&	0.819	&	\mb{0.838}	&	0.773	\\
&	0.6	&	0.848	&	0.848	&	0.866	&	0.845	&	\mb{0.852}	&	0.800	\\
&	0.7	&	0.883	&	0.840	&	0.883	&	0.840	&	0.863	&	0.851	\\
&	0.8	&	0.883	&	0.846	&	0.883	&	0.850	&	0.874	&	0.856	\\
&	0.9	&	0.883	&	0.869	&	0.883	&	0.862	&	0.883	&	0.876	\\
\hline									
\multirow{9}{*}{T6}										
&	0.1	&	\mb{0.794}	&	0.473	&	\mb{0.796}	&	0.545	&	0.765	&	0.763	\\
&	0.2	&	\mb{0.813}	&	0.508	&	\mb{0.840}	&	0.561	&	\mb{0.827}	&	0.777	\\
&	0.3	&	\mb{0.883}	&	0.667	&	\mb{0.882}	&	0.690	&	\mb{0.853}	&	0.799	\\
&	0.4	&	\mb{0.884}	&	0.795	&	\mb{0.884}	&	0.810	&	\mb{0.857}	&	0.813	\\
&	0.5	&	0.890	&	0.887	&	0.890	&	0.887	&	\mb{0.871}	&	0.820	\\
&	0.6	&	0.892	&	\mb{0.920}	&	0.894	&	0.897	&	0.888	&	0.887	\\
&	0.7	&	0.901	&	0.838	&	\mb{0.912}	&	0.833	&	\mb{0.908}	&	0.896	\\
&	0.8	&	0.932	&	0.906	&	0.932	&	0.913	&	\mb{0.932}	&	0.903	\\
&	0.9	&	0.932	&	0.906	&	0.932	&	0.913	&	0.939	&	0.926	\\
\hline										
\multirow{9}{*}{T7}										
&	0.1	&	0.764	&	0.778	&	0.771	&	0.778	&	\mb{0.723}	&	0.605	\\
&	0.2	&	\mb{0.858}	&	0.805	&	\mb{0.856}	&	0.552	&	\mb{0.762}	&	0.717	\\
&	0.3	&	0.860	&	0.848	&	0.860	&	0.852	&	\mb{0.779}	&	0.731	\\
&	0.4	&	0.872	&	0.876	&	0.872	&	0.876	&	\mb{0.826}	&	0.791	\\
&	0.5	&	\mb{0.875}	&	0.768	&	\mb{0.876}	&	0.767	&	0.845	&	0.846	\\
&	0.6	&	0.880	&	0.867	&	0.882	&	0.861	&	0.855	&	0.855	\\
&	0.7	&	0.889	&	0.895	&	0.889	&	0.895	&	0.866	&	0.866	\\
&	0.8	&	0.889	&	0.895	&	0.889	&	0.845	&	0.875	&	0.874	\\
&	0.9	&	0.890	&	0.896	&	0.901	&	0.896	&	0.895	&	0.895	\\
%\hline										
%\multirow{9}{*}{T8}										
%&	0.1	&	\mb{0.768}	&	0.435	&	\mb{0.774}	&	0.435	\\
%&	0.2	&	\mb{0.819}	&	0.616	&	\mb{0.820}	&	0.638	\\
%&	0.3	&	0.819	&	0.836	&	0.822	&	0.829	\\
%&	0.4	&	\mb{0.833}	&	0.717	&	\mb{0.832}	&	0.718	\\
%&	0.5	&	\mb{0.843}	&	0.765	&	\mb{0.861}	&	0.783	\\
%&	0.6	&	\mb{0.878}	&	0.833	&	0.879	&	0.831	\\
%&	0.7	&	0.880	&	0.843	&	0.880	&	0.843	\\
%&	0.8	&	0.880	&	0.872	&	0.880	&	0.870	\\
%&	0.9	&	0.886	&	0.894	&	0.886	&	0.888	\\
\hline
\multirow{9}{*}{T8}	
&	0.1	&	\mb{0.882}	&	0.783	&	\mb{0.880}	&	0.793	&	\mb{0.776}	&	0.548	\\
&	0.2	&	\mb{0.883}	&	0.848	&	\mb{0.890}	&	0.844	&	\mb{0.785}	&	0.511	\\
&	0.3	&	\mb{0.887}	&	0.848	&	\mb{0.893}	&	0.859	&	\mb{0.791}	&	0.589	\\
&	0.4	&	\mb{0.889}	&	0.867	&	0.895	&	0.876	&	\mb{0.798}	&	0.654	\\
&	0.5	&	\mb{0.894}	&	0.887	&	0.896	&	0.888	&	\mb{0.802}	&	0.736	\\
&	0.6	&	0.896	&	0.896	&	0.898	&	0.892	&	\mb{0.846}	&	0.750	\\
&	0.7	&	0.896	&	0.896	&	0.900	&	0.892	&	\mb{0.881}	&	0.754	\\
&	0.8	&	0.901	&	0.901	&	0.901	&	0.892	&	0.891	&	0.894	\\
&	0.9	&	0.901	&	0.901	&	0.901	&	0.894	&	0.900	&	0.900	\\
\end{tabular}}
\end{table}
%-------------------

\subsubsection{Efficiency}

We measure the average running times of MLW over moderate and large taxonomies, 
i.e., T4, T5, T6, T7 and T8,
and set the available main memory of the Java Virtual Machine to 64GB.
%Similar to LW, 
%the efficiency of MLW does not depend on a specific cost model.
%%We analyze the efficiency of MLW using uniform cost over T4-T8.
%%We measure the running times by setting 
%%the available main memory of the Java Virtual Machine 
%%to 64GB. %, 32GB and 16GB.
%Hence, we measure the running times of MLW 
%using uniform cost over T4, T5, T6, T7 and T8,
%and set the available main memory of the Java Virtual Machine to 64GB.
The average running times of MLW for T4, T5, T6, T7 and T8 
over budgets 0.1 to 0.9 are 3, 3, 4, 4 and 8 minutes, respectively,
%For each taxonomy, % T4, T5, T6, T7 and T8, 
%the average running time of MLW is less than 5 minutes, %over T4, T5, T6, T7 and T8,
%for every value of available memory
which are reasonable for a design-time task.
%As expected, MLW is less effcient that LW because the algorithm considers 
%more subsets of concepts in the input taxonomy in its computation.

\section{Conclusion and Future Work} 
\label{section:conclusion}
\label{sec:conclusion}
Annotating entities in large unstructured or
semi-structured data sets improves the effectiveness
of answering queries over these data sets.
It takes significant amounts of 
financial and computational resources 
and/or manual labor to annotate entities 
of a concept. Because an enterprise normally has limited 
resources, it has to choose a subset of
affordable concepts in its domain of interest for annotation. 
In this paper, we introduced 
the problem of cost-effective conceptual design using taxonomies, 
where given a taxonomy, one would like to find a 
subset of concepts in the taxonomy whose
total cost does not exceed a given budget and improves the 
effectiveness of answering queries the most.
We proved the problem is NP-hard and proposed an efficient approximation algorithm, called \prob{level-wise} 
algorithm, and an exact algorithm with 
pseudo-polynomial running time for the problem
over tree taxonomies. 
We also proved that it is not possible 
to find any approximation algorithm with reasonably 
small approximation ratio or pseudo-polynomial 
time exact algorithm for the problem
when the taxonomy is a directed acyclic graph.
We showed that our formalization framework effectively estimates 
the amount by which a design improves the effectiveness of 
answering queries through extensive experiments over real-world 
datasets, taxonomies, and queries. Our empirical studies 
also indicated that our algorithms are efficient 
for a design-time task
with pseudo-polynomial algorithm delivering more effective 
designs in most cases.

%{\
\bibliographystyle{abbrv}
\bibliography{ref}
%}
%\newpage
%\input{appendix.tex}

\end{document}